\newtheorem{theorem}{Theorem}[section]
\newtheorem*{thmintro}{Theorem}
\newtheorem*{Kast}{Kasteleyn's Theorem} 
\newtheorem{lemma}[theorem]{Lemma}
\newtheorem{proposition}[theorem]{Proposition}
\newtheorem{corollary}[theorem]{Corollary}
\theoremstyle{remark}\newtheorem{remark}[theorem]{Remark}
\theoremstyle{remark}\newtheorem*{example}{Example}
\newenvironment{romanlist}
        {\begin{enumerate}
        }
        {\end{enumerate}}
\newcounter{ticklistc}
\newenvironment{ticklist}
    	{\setcounter{ticklistc}{0}
	 \begin{list}{--}
	{\usecounter{ticklistc}}}{\end{list}}
\newcommand{\Z}{\mathbb Z}
\newcommand{\R}{\mathbb R}
\newcommand{\K}{\mathcal K}
\newcommand{\D}{\mathcal D}
\newcommand{\Q}{\mathcal Q}
\newcommand{\G}{\Gamma}
\newcommand{\SI}{\Sigma}
\newcommand{\ho}{H_1(\Sigma;\Z_2)}
\newcommand{\coho}{H^1(\Sigma;\Z_2)}
\newcommand{\Hom}{\mathrm{Hom}}
\newcommand{\e}{\varepsilon}
\newcommand{\eps}{\epsilon}
\newcommand{\Pf}{\mbox{Pf}}
\newcommand{\A}{\mbox{Arf}}
\newcommand{\pin}{\mathrm{pin}^-}
\newcommand{\id}{\mathrm{id}}
\begin{document}

\title{Dimers on graphs in non-orientable surfaces}

\author{David Cimasoni}   
\address{ETH Zurich, Departement Mathematik, R\"amistrasse 101, 8092 Z\"urich, Switzerland}
\email{david.cimasoni@math.ethz.ch}
\subjclass[2000]{82B20, 57M15, 05C70}
\keywords{dimer model, Pfaffian, Kasteleyn orientation, $\pin$ structure}

\begin{abstract}
The main result of this paper is a Pfaffian formula for the partition function of the dimer model on any graph $\G$
embedded in a closed, possibly non-orientable surface $\SI$. This formula is suitable for computational purposes,
and it is obtained using purely geometrical methods. The key step in the proof consists of a correspondence between
some orientations on $\G$ and the set of $\pin$ structures on $\SI$. This generalizes (and simplifies) the results of a
previous paper \cite{C-RI}.
\end{abstract}

\maketitle

\section{Introduction}
\label{sec:intro}

A dimer configuration on a graph $\G$ is a choice of a family of edges of $\G$, called dimers, such that each vertex of
$\G$ is adjacent to exactly one dimer. Assigning weights to the edges of $\G$ allows to define a probability measure
on the set of dimer configurations. The study of this measure has undergone spectacular advances in the past few years
(see in particular \cite{KenOk,KenOkS})
but one of the fundamental results on which these rely dates back from the early 60's. Back then, P. W. Kasteleyn
\cite{Kast0,Kast1} showed that the partition function of the dimer model on a planar graph is equal to the Pfaffian of a signed-adjacency matrix, the signs being determined by an orientation of the edges of $\G$ called a Kasteleyn orientation. For the square lattice on the torus, Kasteleyn showed
that the partition function can be written as a linear combination of 4 Pfaffians, corresponding to the 4 (equivalence
classes of) Kasteleyn orientations on such a graph. In the general case of a graph embedded in an orientable surface
of genus $g$, there are exactly $2^{2g}$ equivalence classes of such orientations, and Kasteleyn stated that
the partition function can be written as a linear combination of $2^{2g}$ Pfaffians
\cite{Kast1,Kast2}. A combinatorial proof of this fact for all oriented
surfaces was first obtained much later by Galluccio-Loebl \cite{L} and Tesler \cite{T}, independently.
(See also \cite{Dol}.)

The number of equivalence classes of Kasteleyn orientations on a graph $\G$ embedded in $\Sigma$ is also equal to the
number of equivalence classes of spin structures on $\SI$. An explicit construction relating a spin structure on a
surface with a Kasteleyn orientation on a graph with dimer configuration was suggested in
\cite{Kup}. In \cite{C-RI}, N. Reshetikhin and the author investigated further the relation between Kasteleyn
orientations and spin structures (see Theorem~\ref{thm:corr} and Corollary~\ref{cor:corr} below).
We also used this relation together with the identification of spin structures with quadratic forms to give a purely
geometric proof of the Pfaffian formula for closed oriented surfaces. Our final formula can be roughly expressed as follows:
given a graph $\G$ embedded in a closed oriented surface $\SI$ of genus $g$, the partition function
of the dimer model on $\G$ is given by
\[
Z=\frac{1}{2^g}\sum_{\xi\in\mathrm{Spin}(\SI)}(-1)^{\A(\xi)}\Pf(A^\xi),
\]
where $\mathrm{Spin}(\SI)$ denotes the set of equivalence classes of spin structures on $\SI$, $\A(\xi)\in\Z_2$ is the
Arf invariant of the spin structure $\xi$, and $A^\xi$ is the signed-adjacency matrix given by the Kasteleyn orientation corresponding to $\xi$. (See Theorem~\ref{thm:Pf} below for the precise statement.)

This formula obviously does not hold for graphs embedded in non-orientable surfaces, as neither spin structures
nor Kasteleyn orientations make sense in this setting. And yet, the dimer model on such graphs has been
the focus of some research, leading to Pfaffian-type formulae in several special cases. (See e.g. \cite{L-W1,L-W2} where the authors study square lattices in the M\"obius band and the Klein bottle.)

The purpose of the present paper is to extend our geometric approach of the dimer model to any graph embedded in (possibly)
non-orientable surfaces. The main idea is to replace spin structures by $\pin$ structures on $\SI$,
and to find a natural correspondence between these $\pin$ structures and some orientations on $\G\subset\SI$
(that we also call Kasteleyn orientations) -- see Theorem~\ref{thm:n-corr} and Corollary~\ref{cor:n-corr}.
We then make use of the identification of $\pin$ structures on $\SI$ with quadratic enhancements of the intersection form on
$H_1(\SI;\Z_2)$ to obtain the Pfaffian formula. It can be expressed as follows: given a graph $\G$ embedded in a closed possibly non-orientable surface
$\SI$,
\[
Z=\frac{1}{2^{b_1/2}}\sum_{\eta\in\mathrm{Pin}^{-}(\SI)}\exp(i\pi/4)^{\beta(\eta)}\Pf(A^\eta),
\]
where $b_1=\dim H_1(\SI;\Z_2)$, $\mathrm{Pin}^{-}(\SI)$ denotes the set of equivalence classes of $\pin$ structures on $\SI$,
$\beta(\eta)\in\Z_8$ is the Brown invariant of the $\pin$ structure $\eta$, and $A^\eta$ is the matrix given by the Kasteleyn orientation corresponding to $\eta$. (See Theorem~\ref{thm:Pf-n} below.)

This formula is mostly interesting from a theoretical point of view, but does not
seem very convenient for computational purposes. To this end, we obtain the following more usable result.
Let $\G$ be a graph embedded in a closed possibly non-orientable surface $\SI$ such that $\SI\setminus\G$ consists of
open 2-discs.  Recall that such a surface is of the form $\SI_g$, $\SI_g\#\R P^2$ or $\SI_g\#\K$, where $\SI_g$ denotes
the orientable surface of genus $g$ and $\K$ the Klein bottle. Let $\{\alpha_i\}$ be a set of simple closed curves on $\SI$,
transverse to $\G$, whose classes form a basis of $H_1(\SI_g;\Z_2)\subset\ho$.
If $\SI=\SI_g\#\K$, fix two disjoint simple closed curves $\beta_1,\beta_2$ on $\SI$, transverse to $\G$, disjoint from the
$\alpha_i$'s, whose classes form a basis of $H_1(\K;\Z_2)$ in $\ho$. Let $K$ be some well-chosen Kasteleyn orientation on
$\G\subset\SI$ (see Theorems~\ref{thm:Pf'} and \ref{thm:Pf'-n}), and for any $\eps=(\eps_1,\dots,\eps_{2g})\in\Z_2^{2g}$,
let $K_\eps$ denote the orientation obtained from $K$ as follows: invert the orientation $K$ on the edge $e$ of $\G$ each time $e$ intersects $\alpha_i$ with $\eps_i=1$. Finally, if $\SI=\SI_g\#\K$,
let $K'_{\eps}$ be obtained by inverting $K_\eps$ on $e$ each time the edge $e$ intersects $\beta_1$.

\begin{thmintro}
The partition function of the dimer model on $\Gamma$ is given by
\[
Z=\frac{1}{2^{g}}\Big|\sum_{\eps\in\Z_2^{2g}}(-1)^{\sum_{i<j}\eps_i\eps_j\alpha_i\cdot\alpha_j}
\Big(\mathrm{Re}(\Pf(A^{K_\eps}))+\mathrm{Im}(\Pf(A^{K_\eps}))\Big)\Big|,
\]
if $\SI=\SI_g$ or $\SI_g\#\R P^2$, and by
\[
Z=\frac{1}{2^{g}}\Big|\sum_{\eps\in\Z_2^{2g}}(-1)^{\sum_{i<j}\eps_i\eps_j\alpha_i\cdot\alpha_j}
\Big(\mathrm{Im}(\Pf(A^{K_\eps}))+\mathrm{Re}(\Pf(A^{K'_\eps}))\Big)\Big|,
\]
if $\SI=\SI_g\#\K$, where $A^{K_\eps}$ is the (complex-valued) weighted-adjacency matrix associated to the orientation
$K_\eps$.
\end{thmintro}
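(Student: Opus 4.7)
The plan is to specialize the general Pfaffian formula of Theorem~\ref{thm:Pf-n} to the particular basis of $\ho$ chosen in the statement. By Theorem~\ref{thm:n-corr}, $\mathrm{Pin}^-(\SI)$ is an affine space over $\coho\cong\ho$, and its action is realized on the Kasteleyn side by reversing the orientation on every edge met by a $1$-cycle representing the class in question. Starting from the base Kasteleyn orientation $K$ prescribed by Theorems~\ref{thm:Pf'} and \ref{thm:Pf'-n} and its corresponding base $\pin$ structure $\eta_0$, I would extend $\{[\alpha_i]\}_{i=1}^{2g}$ to a basis of $\ho$ by adjoining a generator $[\gamma]$ of $H_1(\R P^2;\Z_2)$ when $\SI=\SI_g\#\R P^2$, or the classes $[\beta_1],[\beta_2]$ when $\SI=\SI_g\#\K$. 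Every $\pin$ structure then writes uniquely as $\eta=\eta_0+\sum_i\eps_i[\alpha_i]+\delta$ with $\delta$ in the span of the extra generators, so that the sum of Theorem~\ref{thm:Pf-n} splits into an outer sum over $\eps\in\Z_2^{2g}$ and an inner partial sum over $\delta$.

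The central ingredient is the quadratic behaviour of $\beta$ as $\eta$ varies. Under the identification of $\pin$ structures with $\Z_4$-valued quadratic enhancements $q_\eta$ of the mod~$2$ intersection form on $\ho$, the Brown invariant of $\eta=\eta_0+h$ depends polynomially on $h$, with quadratic part governed by the intersection form. Expanding $h=\sum_i\eps_i[\alpha_i]+\delta$ in the chosen basis produces the cross pairings $2\sum_{i<j}\eps_i\eps_j(\alpha_i\cdot\alpha_j)\pmod 4$, which under $\exp(i\pi/4)^{\beta(\eta)}$ become the announced sign $(-1)^{\sum_{i<j}\eps_i\eps_j\alpha_i\cdot\alpha_j}$, plus pure $\delta$-terms contributing phases to the inner sum, plus mixed $\eps\delta$-intersection terms. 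The \emph{well-chosen} base orientation of Theorems~\ref{thm:Pf'} and \ref{thm:Pf'-n} is designed precisely to kill the linear-in-$\eps$ contributions $\eps_iq_{\eta_0}([\alpha_i])$; the disjointness assumption on the $\alpha_i$ and the extra generators kills the mixed $\eps\delta$-terms; and the global phase $\exp(i\pi/4)^{\beta(\eta_0)}$ is absorbed by the outer absolute value.

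It remains to evaluate the inner partial sum over $\delta$, which has one term for $\SI_g$, two for $\SI_g\#\R P^2$, and four for $\SI_g\#\K$. For the orientable case the Pfaffian is real and $\mathrm{Re}(\Pf)+\mathrm{Im}(\Pf)$ trivially collapses to $\Pf$. For the two non-orientable cases the flips occur across the generators $[\gamma]$ or $[\beta_1],[\beta_2]$: the $\Z_4$-values of $q_{\eta_0}$ on these classes, computable via the $\pin$ quadratic-enhancement rule for immersed circles, supply the remaining phases, and the resulting partial sums of Pfaffians of the related Kasteleyn orientations reassemble into the stated combinations. The distinct Pfaffian $\Pf(A^{K'_\eps})$ in the Klein bottle formula appears because the class used to define $K'$, namely $[\beta_1]$, is (by choice) one of the one-sided generators of $H_1(\K;\Z_2)$ and the corresponding flip produces a genuinely new complex-valued matrix; the $\sqrt 2$ factors extracted from these inner combinations absorb the discrepancy between the $1/2^{b_1/2}$ of Theorem~\ref{thm:Pf-n} and the $1/2^g$ of the present statement. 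I expect the main obstacle to lie exactly in this last bookkeeping, in particular for the Klein bottle: it requires a careful case-by-case evaluation of $q_{\eta_0}$ on $[\gamma],[\beta_1],[\beta_2]$, together with a tracking of the eighth roots of unity $\exp(i\pi/4)^{q_{\eta_0}(\cdot)}$ attached to each Pfaffian in the inner sum, so that the four-term reassembly yields exactly the asymmetric pattern $\mathrm{Im}(\Pf(A^{K_\eps}))+\mathrm{Re}(\Pf(A^{K'_\eps}))$ (rather than one of the other near-miss candidates) with the correct overall sign.
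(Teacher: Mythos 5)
Your proposal is correct and follows essentially the same route as the paper's proofs of Theorems~\ref{thm:Pf'} and \ref{thm:Pf'-n}: specialize the general Pfaffian formula to the set of orientations $K_{\eps,\eta}$ obtained by flipping along the basis curves, use the quadratic expansion of $q^K_\mathcal{B}$ (via the Brown-invariant difference formula of \cite[Lemma 3.7]{K-T}) to isolate the sign $(-1)^{\sum_{i<j}\eps_i\eps_j\alpha_i\cdot\alpha_j}$, kill the linear terms by the choice of $K$, and collapse the inner sum over the one-sided generators using the complex-conjugation relation between the matrices, with the resulting $|1+i|$ or $|2i|$ reconciling $2^{b_1/2}$ with $2^g$. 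The bookkeeping you flag as the main obstacle is carried out in the paper exactly as you anticipate, by computing $q^K_\mathcal{B}([\alpha_j])=0$ and $q^K_\mathcal{B}([\beta_\ell])=-1$ from the cycles $C_\gamma$.
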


It should be mentioned that Tesler's combinatorial method \cite{T} is also valid for graphs in non-orientable surfaces.
Moreover, his approach yields an algorithm of the same complexity as ours, as both require the computation of
$2^{2-\chi(\Sigma)}$ Pfaffians of dimension the number of vertices of $\Gamma$. However, Tesler's final result
consists of an algorithmic way to compute the partition function, but not in a closed formula as the one obtained here.

Let us conclude this introduction with one last remark. It is well known that the partition function of the Ising model
on a graph $\Gamma\subset\Sigma$ can be expressed as the partition function of the dimer model on another graph
embedded in the same surface $\Sigma$. Therefore, our results could be used to compute the partition
function of the Ising model on graphs in non-orientable surfaces.
(See \cite{C-P} for an attempt to solve this problem using transfer matrices.)
However, we do not address this question in the present article.

\medskip

The paper is organized as follows. In Section~\ref{sec:dimers}, we introduce the dimer model and review Kasteleyn's
theory. For didactical reasons, we devote Section~\ref{sec:orientable} to the orientable case: we first recall the main
ideas of \cite{C-RI}, then present a greatly simplified correspondence between spin structures and Kasteleyn orientations.
We also prove both versions of the Pfaffian formula stated above, in the orientable case.
Our hope is that the reader will benefit from this warm up case before moving on to the core of the paper,
which lies in Sections~\ref{sec:Kast-n} to \ref{sec:Pf-n}.
There, we first extend the definition of a Kasteleyn orientation to graphs embedded in non-orientable surfaces
(Section~\ref{sec:Kast-n}), then show that these correspond naturally to $\pin$ structures on the surface
(Section~\ref{sec:pin}), and eventually prove the Pfaffian formulae (Section~\ref{sec:Pf-n}).

\section{Dimers and Pfaffians: Kasteleyn's theory}
\label{sec:dimers}

Let $\Gamma$ be a finite connected graph. A {\em dimer configuration\/} (or {\em perfect matching\/}) on $\G$ is a choice of edges of $\G$, called {\em dimers\/}, such that each vertex of $\G$ is adjacent to exactly one of these edges.
We shall denote by $\D(\G)$ the set of dimer configurations on $\G$. An {\em edge weight system\/} on $\G$
is a positive real-valued function $\mathrm{w}$ on the set of edges of $\G$. Such a system defines a
probability distribution on $\D(\G)$ by
\[
\mbox{Prob}(D)=\frac{\mathrm{w}(D)}{Z},
\]
where $\mathrm{w}(D)=\prod_{e\in D}\mathrm{w}(e)$ and
\[
Z=\sum_{D\in\D(\G)}\mathrm{w}(D).
\]
This probabilistic measure is the {\em Gibbs measure\/} for the dimer model on the graph $\G$ with weight system
$\mathrm{w}$, and $Z$ is the associated {\em partition function\/}.

\medskip

Given a fixed edge-weighted graph $\G$, the aim is to compute the associated partition function. Note that if
$\mathrm{w}$ is everywhere equal to one, this amounts to computing the number of dimer configurations on $\G$.

Kasteleyn's method is based on the following beautifully simple computation. If there exists a dimer configuration,
then the number of vertices of $\G$ is even. Enumerate them by $1,2,\dots,2n$, and fix an orientation $K$ of the edges
of $\G$. Let $A^K=(a_{ij}^K)$ denote the associated weighted skew-adjacency matrix; this is the $2n\times 2n$
skew-symmetric matrix whose coefficients are given by
\[
a^K_{ij}=\sum_{e}\e_{ij}^K(e)\mathrm{w}(e),
\]
where the sum is on all edges $e$ in $\Gamma$ between the vertices $i$ and $j$, and
\[
\e^K_{ij}(e)=
\begin{cases}
\phantom{-}1 & \text{if $e$ is oriented by $K$ from $i$ to $j$;} \\
-1 & \text{otherwise.}
\end{cases}
\]
Recall that the Pfaffian of a skew-symmetric matrix $A=(a_{ij})$ of size $2n$ is given by
\[
\Pf(A)=\sum_{[\sigma]\in\Pi}(-1)^\sigma a_{\sigma(1)\sigma(2)}\cdots a_{\sigma(2n-1)\sigma(2n)},
\]
where the sum is on the set $\Pi$ of matchings of $\{1,\dots,2n\}$, $\sigma$ is a permutation of $\{1,\dots,2n\}$
representing the matching $[\sigma]$, and $(-1)^\sigma\in\{\pm 1\}$ denotes the signature of $\sigma$. In the case of $A^K$,
a matching of $\{1,\dots,2n\}$ contributes to the Pfaffian if and only if it is realized by a dimer configuration
on $\G$, and this contribution is $\pm \mathrm{w}(D)$. More precisely,
\begin{equation}
\label{equ:Pf}
\Pf(A^K)=\sum_{D\in\D(\G)}\e^K(D)\mathrm{w}(D),
\end{equation}
where the sign $\e^K(D)$ can be computed as follows: if the dimer configuration $D$ is given by edges $e_1,\dots,e_n$ matching vertices $i_\ell$ and $j_\ell$ for $\ell=1,\dots,n$, let $\sigma$ denote the permutation sending
$(1,\dots, 2n)$ to $(i_1,j_1,\dots,i_n,j_n)$, and set
\begin{equation}
\label{equ:eps}
\e^K(D)=(-1)^\sigma\prod_{\ell=1}^n\e^K_{i_\ell j_\ell}(e_\ell).
\end{equation}
The problem of expressing $Z$ as a Pfaffian now boils down to finding an orientation $K$ of the edges of $\G$
such that $\e^K(D)$ does not depend on $D$.

Obviously, any dimer configuration $D$ can be considered as a cellular 1-chain $D\in C_1(\G;\Z_2)$ such that
$\partial D=\sum_{v}v$, the sum being on all vertices of $\G$.
Hence, given any two dimer configurations $D,D'$, their sum $D+D'$ is a 1-cycle.
The connected components of this 1-cycle are disjoint simple loops of even length; let us denote them by
$\{C_i\}_i$. An easy computation shows that
\begin{equation}
\label{equ:cc}
\e^K(D)\e^K(D')=\prod_i(-1)^{n^K(C_i)+1},
\end{equation}
where $n^K(C_i)$ denotes the number of edges of $C_i$ where a fixed orientation of $C_i$ differs from $K$.
(Since $C_i$ has even length, the parity of this number is independent of the orientation of $C_i$.)
Therefore, we are now left with the problem of finding an orientation $K$ of $\G$ such that, for any cycle $C$ of even
length such that $\G\setminus C$ admits a dimer configuration, $n^K(C)$ is odd. Such an orientation was called
{\em admissible\/} by Kasteleyn; nowadays, the term of {\em Pfaffian orientation\/} is commonly used.
By the discussion above, if $K$ is a Pfaffian orientation, then $Z=|\Pf(A^K)|$.

Kasteleyn's early triumph was to prove that every planar graph admits a Pfaffian orientation. More precisely, let
$\G$ be a graph embedded in the plane. Each face $f$ of $\G\subset\R^2$ inherits the (say, counterclockwise)
orientation of $\R^2$, so $\partial f$ can be oriented as the boundary of the oriented face $f$.

\begin{Kast}[\cite{Kast1,Kast2}]
Given $\G\subset\R^2$, there exists an orientation $K$ of $\G$ such that, for each face $f$
of $\G\subset\R^2$, $n^K(\partial f)$ is odd. Furthermore, such an orientation is Pfaffian.
\end{Kast}

An amazing consequence of this result is that it enables to compute the partition function of the dimer model
on a planar graph in polynomial time.

There is no hope to extend this result to the general case. Indeed, some graphs (such as the complete bipartite graph
$K_{3,3}$) do not admit a Pfaffian orientation. More generally, enumerating the dimer configurations on a graph
is a $\#P$-complete problem \cite{Val}. It turns out that Kasteleyn's method does extend to surfaces, but one needs to compute
many Pfaffians. This is the aim of the following section.

\section{The Pfaffian formula for graphs on orientable surfaces}
\label{sec:orientable}

In \cite{C-RI}, N. Reshetikhin and the author derived a Pfaffian formula for graphs embedded in closed orientable surfaces.
However, the central argument -- that is, the correspondence between Kasteleyn
orientations and spin structures -- was quite intricate. Also, the Pfaffian formula did not appear to be very convenient
to use in practice. In this section, we shall present a more transparent
correspondence and recall the other main steps of the proof. We shall also give another version of the Pfaffian
formula, more suitable for computational purposes (see Theorem~\ref{thm:Pf'}).
As stated in the introduction, our hope is that the reader will benefit from this warm up
case before moving on to the more involved case presented in Sections~\ref{sec:Kast-n} to \ref{sec:Pf-n}.

\subsection{Kasteleyn orientations}
\label{sub:Kast}

Throughout this section, $\SI$ will denote a closed connected surface endowed with an orientation that will be pictured
counterclockwise.
By a {\em surface graph\/}, we mean a graph $\G$ embedded in $\SI$ as the 1-skeleton of a cellular decomposition $X$
of $\SI$. This simply means that the complement of $\G$ in $\SI$ consists of open 2-discs.
We shall use the same notation $X$ for the surface graph and the cell complex realizing it. Note that any
finite connected graph can be realized as a surface graph.

An orientation $K$ of the 1-cells of a surface graph $X$ is called a {\em Kasteleyn orientation on $X$\/} if,
for each 2-cell $f$ of $X$, the following condition holds: the number $n^K(\partial f)$ of edges in $\partial f$
where $K$ disagrees with the orientation on $\partial f$ induced by the counterclockwise orientation on $f$, is odd.
Given a Kasteleyn orientation on $X$, there is an obvious way to obtain another one: pick a vertex of $X$ and flip
the orientation of all the edges adjacent to it. Two Kasteleyn orientations are said to be {\em equivalent\/} if
they can be related by such moves. Let us denote by $\K(X)$ the set of equivalence classes of Kasteleyn
orientations on $X$. 

\begin{proposition}
\label{prop:Kast}
A surface graph $X$ admits a Kasteleyn orientation if and only if $X$ has an even number of vertices.
In this case, the set $\K(X)$ is an $H^1(\SI;\Z_2)$-torsor, that is, it admits a freely transitive action of
the group $H^1(\SI;\Z_2)$.\qed
\end{proposition}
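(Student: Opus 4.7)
The plan is to encode the Kasteleyn condition as the vanishing of a $\Z_2$-valued cellular $2$-cochain on $X$ and then read off both statements via elementary cohomological bookkeeping. Given an orientation $K$ of the $1$-cells of $X$, define $c^K\in C^2(X;\Z_2)$ by $c^K(f)=n^K(\partial f)+1\pmod 2$, so that $K$ is Kasteleyn precisely when $c^K=0$. If two orientations $K,K'$ differ by flipping the edges indexed by a $1$-cochain $\xi\in C^1(X;\Z_2)$, a direct count over each face boundary shows $c^{K'}=c^K+\delta\xi$. In particular the class $[c^K]\in H^2(\SI;\Z_2)\cong\Z_2$ depends only on $X\subset\SI$.

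For the existence half I would compute $[c^K]$ by evaluating on the fundamental class, i.e.\ summing $c^K(f)$ over all faces. This reduces to counting (face, edge-occurrence) pairs in which $K$ disagrees with the induced orientation. Each $1$-cell of $X$ has exactly two such occurrences, one on each of its two local sides in $\SI$; orientability forces the two orientations induced from the adjacent counterclockwise-oriented faces to be opposite, so $K$ agrees with one and disagrees with the other. Hence $\sum_f n^K(\partial f)\equiv E\pmod 2$, and $[c^K]([\SI])\equiv E+F\pmod 2$. Combining with Euler's formula $V-E+F=\chi(\SI)$ and the fact that $\chi(\SI)=2-2g$ is even yields $[c^K]([\SI])\equiv V\pmod 2$. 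So $[c^K]=0$ exactly when $V$ is even, and in that case $c^K$ can be made identically zero by modifying $K$ within its $C^1$-orbit, which proves the existence criterion.

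For the torsor statement, fix a Kasteleyn orientation $K_0$. The relation $c^{K_0+\xi}=\delta\xi$ shows that the set of Kasteleyn orientations on $X$ is an affine space over $Z^1(X;\Z_2)$. The vertex-flip at $v$ changes the underlying $1$-cochain by exactly $\delta\mathbf{1}_v\in B^1(X;\Z_2)$, where $\mathbf{1}_v\in C^0(X;\Z_2)$ is the characteristic cochain of $v$ (loops at $v$ contribute $0$ in $\Z_2$, consistently with their two ends being flipped simultaneously). Quotienting by vertex flips thus exhibits $\K(X)$ as a torsor over $Z^1(X;\Z_2)/B^1(X;\Z_2)=H^1(X;\Z_2)=H^1(\SI;\Z_2)$.

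The main obstacle is the edge-occurrence count in the second paragraph: one must properly handle $1$-cells that appear twice in a single face's boundary (when a $2$-cell's attaching map is not injective on the boundary circle) and loops in $\G$. Both issues are resolved by working at the level of attaching maps and using the local orientation of $\SI$ near each edge-occurrence, so the $\equiv 1\pmod 2$ contribution per edge survives.
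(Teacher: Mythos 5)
Your proof is correct and follows essentially the same route as the paper's: the paper also encodes the Kasteleyn condition as a curvature cochain $c^K\in C^2(X;\Z_2)$, shows $\sum_f c^K(f)\equiv V\pmod 2$ by the same per-edge count combined with Euler's formula (Lemma~\ref{lemma:parity} with $\omega=0$), deduces existence from $c^K$ being a coboundary, and obtains the torsor structure from $Z^1/B^1=H^1(\SI;\Z_2)$ exactly as you do. The only cosmetic difference is that the paper derives the orientable case as the $\omega=0$ specialization of the non-orientable Theorem~\ref{thm:Kast-n}, whereas you argue the orientable case directly.
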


The proof can be found either in \cite[Section 4]{C-RI}, or in Section~\ref{sec:Kast-n} of the present paper where the
more general Theorem~\ref{thm:Kast-n} is proved.
This proposition implies that, if $X$ has an even number of vertices, then it admits exactly $2^{2g}$ equivalence classes
of Kasteleyn orientations, where $g$ is the genus of $\SI$. It actually also implies the following.

\begin{corollary}
A surface graph of genus $g$ with an even number $V$ of vertices admits exactly $2^{2g+V-1}$ Kasteleyn orientations.
\end{corollary}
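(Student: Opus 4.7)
The plan is to combine the count of equivalence classes already provided by Proposition~\ref{prop:Kast} with a count of the size of each equivalence class, obtained by analyzing the group of vertex-flip moves acting on the set of Kasteleyn orientations.

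First, I would note that Proposition~\ref{prop:Kast} gives $|\K(X)|=|H^1(\SI;\Z_2)|=2^{2g}$, since $\SI$ is a closed orientable surface of genus $g$. It therefore suffices to show that every equivalence class contains exactly $2^{V-1}$ Kasteleyn orientations; multiplying, one gets the desired $2^{2g}\cdot 2^{V-1}=2^{2g+V-1}$.

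To count the orientations within one class, I would let the group $G=(\Z_2)^V$ act on the set of orientations of $X$ by letting the generator indexed by a vertex $v$ flip the orientation of every edge adjacent to $v$. This action preserves the set of Kasteleyn orientations (each 2-cell boundary sees each of its vertices an even number of times, so the parity $n^K(\partial f)$ is unchanged), and two Kasteleyn orientations are equivalent precisely when they lie in the same $G$-orbit. Since the action is transitive on each equivalence class, all classes have the same cardinality $|G|/|\mathrm{Stab}(K)|$ for any Kasteleyn orientation $K$.

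The only nontrivial step is identifying the stabilizer. A subset $S\subseteq V(\G)$ acts trivially iff for every edge $e=uv$ of $\G$ the number $|S\cap\{u,v\}|$ is even, i.e.\ $u$ and $v$ either both lie in $S$ or both lie outside $S$. Since $\G$ is connected, this forces $S=\emptyset$ or $S=V(\G)$, so the stabilizer has order $2$. Consequently each equivalence class has size $2^V/2=2^{V-1}$, and the total number of Kasteleyn orientations is $2^{2g+V-1}$, as claimed. The only real point to be careful about is the connectedness of $\G$, which is part of our standing assumption; without it the stabilizer would have order $2^{c}$ where $c$ is the number of connected components, and the count would change accordingly.
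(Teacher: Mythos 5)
Your argument is correct and is essentially the paper's own proof: the paper exhibits the map $\mathcal{P}(X^0)\to\{K\mid K\sim K_0\}$ given by flipping around a subset of vertices and checks, using connectedness, that it is surjective and two-to-one, which is exactly your orbit--stabilizer computation with stabilizer $\{\emptyset, X^0\}$. The group-action phrasing is just a repackaging of the same idea, so there is nothing further to add.
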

\begin{proof}
By Proposition~\ref{prop:Kast}, it is sufficient to prove that, given any orientation $K_0$, the number of orientations
equivalent to $K_0$ is equal to $2^{V-1}$. Let $\mathcal{P}(X^0)$ denote the set of subsets of the vertices of $X$,
and let $\varphi\colon\mathcal{P}(X^0)\to\{K\,|\,K\sim K_0\}$ be given by $\varphi(S)=K$, the orientation
obtained from $K_0$ by changing the orientation around all vertices of $S$. The map $\varphi$ is obviously surjective.
Using the fact that $X$ is connected, one easily checks that $\varphi(S)=\varphi(S')$ if and only if $S=S'$ or
$S$ and $S'$ form a partition of $X^0$. Hence, $\varphi$ is two to one, proving the claim.
\end{proof}

\subsection{Discrete spin structures}
\label{sub:spin}

Let us now recall several general facts about spin structures on a compact oriented surface $\SI$, referring to
\cite[Section 5.2]{C-RI} for details. We assume throughout that $\SI$ is endowed with a fixed Riemannian metric.

First of all, it is well known that
the set $\mathrm{Spin}(\SI)$ of equivalence classes of spin structures on $\SI$ is an $H^1(\SI;\Z_2)$-torsor. Also,
any spin structure can be described by a vector field on $\SI$ with isolated zeroes of even index; conversely,
any such vector field defines a spin structure.
Finally, a theorem of D. Johnson \cite{Jo} asserts the existence of
an $H^1(\SI;\Z_2)$-equivariant bijection from $\mathrm{Spin}(\SI)$ onto the set $\Q(\SI)$
of quadratic forms on $H_1(\SI;\Z_2)$.
(Recall that such a form is a map $q\colon H_1(\SI;\Z_2)\to\Z_2$ such that $q(x+y)=q(x)+q(y)+x\cdot y$ for all
$x,y\in H_1(\SI;\Z_2)$, where $\cdot$ denotes the intersection form. The set $\Q(\SI)$ is clearly
an $H^1(\SI;\Z_2)$-torsor.) More explicitly, given a spin structure $\xi\in\mathrm{Spin}(\SI)$ and a vector field $Y$
representing it, then the associated quadratic form $q_\xi\colon H_1(\SI;\Z_2)\to\Z_2$ is defined as follows.
Represent a class $\alpha\in H_1(\SI;\Z_2)$ by a collection of disjoint oriented regular simple closed curves
$C_1,\dots,C_m$ in $\SI$ avoiding the zeroes of the vector field $Y$, and set
\[
q_\xi(\alpha)= \sum_{i=1}^m (w_{\overrightarrow{C_i}}(Y)+1) \pmod{2}
\]
where $w_{\overrightarrow{C_i}}(Y)$ denotes the winding number of the vector field $Y$ with respect to
the tangential vector field along $C_i$.
\medskip

Now, the game consists in trying to encode combinatorially a spin structure on a surface $\SI$, or
equivalently, a vector field on $\SI$ with isolated zeroes of even index. Let us begin by fixing a cellular decomposition
$X$ of $\SI$.

$\bullet$
To construct a (unit length) vector field along the 0-skeleton $X^0$, we just need to specify one tangent
direction at each vertex of $X$. Such an information is given by a dimer configuration $D$ on $X^1$: at each vertex,
point in the direction of the adjacent dimer.

$\bullet$
This vector field along $X^0$ extends to a unit vector field on $X^1$, but not uniquely. Roughly speaking,
it extends in two different natural ways along each edge of $X^1$, depending on the sense of rotation of the resulting
vector field. We shall
encode this choice by an orientation $K$ of the edges of $X^1$, together with the following convention: moving along
an oriented edge, the tangent vector first rotates counterclockwise until it points in the direction of the edge,
then rotates clockwise until it points backwards, and finally rotates counterclockwise until it coincides with
the tangent vector at the end vertex. This is illustrated in Figure~\ref{fig:vector}.
\begin{figure}[htbp]
\labellist\small\hair 2.5pt
\pinlabel {$K$} at 220 165
\pinlabel {$D$} at 75 60
\pinlabel {$D$} at 395 190
\endlabellist
\centerline{\psfig{file=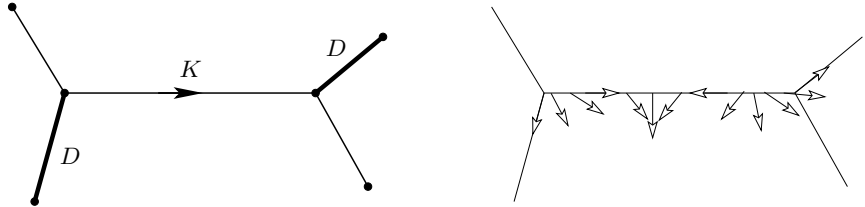,height=2.7cm}}
\caption{Construction of the vector field along the 1-skeleton of $X$.}
\label{fig:vector}
\end{figure}

$\bullet$
Each face of $X$ being homeomorphic to a 2-disc, the unit vector field defined along $X^1$ naturally extends to
a vector field $Y_D^K$ on $X$, with one isolated zero in the interior of each face. Recall that in order for this vector
field to define a spin structure, each zero needs to be of even index. The following lemma implies that this is the
case if and only if $K$ is a Kasteleyn orientation.

\begin{lemma}
\label{lemma:index}
For each face $f$ of $X$, the index of the zero of $Y_D^K$ in $f$ has the parity of $n^K(\partial f)+1$.
\end{lemma}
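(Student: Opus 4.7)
My plan is to express the index of the zero of $Y_D^K$ in the interior of $f$ as the winding number of $Y_D^K$ along $\partial f$ (traversed in the boundary orientation inherited from the chosen orientation of $f$), and then to compute this winding directly from the explicit three-phase construction in Figure~\ref{fig:vector}. Since $f$ is a 2-disc, I would work in a flat chart on a neighbourhood of $f$ in which each edge of $\partial f$ is a straight segment: this trivialises $T\Sigma|_f$ and lets me regard the argument of $Y_D^K$ along $\partial f$ as a single-valued $\R$-valued function.

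For each half-edge $(e,v)$ let $\theta_e^v$ denote the direction from $v$ along $e$ in this chart, and let $d_v$ denote the direction of the dimer edge at $v$. For an edge $e$ oriented by $K$ from $u$ to $w$, set $\alpha_e := (\theta_e^u - d_u) \bmod 2\pi \in [0,2\pi)$ and $\beta_e := (d_w - \theta_e^w) \bmod 2\pi \in [0,2\pi)$. The construction dictates that along $e$ traversed in the $K$-direction, $Y_D^K$ rotates counterclockwise by $\alpha_e$, then clockwise by $\pi$, then counterclockwise by $\beta_e$, for a net signed rotation of $\alpha_e - \pi + \beta_e$; traversal against $K$ negates this. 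Letting $\epsilon_e \in \{\pm 1\}$ record whether $K$ agrees with the boundary orientation on $e$ and $k = |\partial f|$, and noting $\sum_e \epsilon_e = k - 2n^K(\partial f)$, the total rotation of $Y_D^K$ along $\partial f$ is
\[
T \;=\; \sum_{e \subset \partial f} \epsilon_e(\alpha_e + \beta_e) \;-\; \pi\bigl(k - 2\,n^K(\partial f)\bigr).
\]
The index equals $T/(2\pi)$, and it has the parity $n^K(\partial f) + 1$ precisely when $T \equiv 2\pi\bigl(n^K(\partial f)+1\bigr) \pmod{4\pi}$, which after substitution reduces to the identity
\[
\sum_{e \subset \partial f} \epsilon_e(\alpha_e + \beta_e) \;\equiv\; \pi(k+2) \pmod{4\pi}.
\]

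To establish this last identity I would regroup the sum by vertex. At each $v \in \partial f$, a short case analysis on the four possible pairs of $K$-orientations of the two boundary edges at $v$ shows that the local contribution equals $\theta_{\mathrm{out}}^v - \theta_{\mathrm{in}}^v + 2\pi \kappa_v$ for some integer $\kappa_v$ depending on the case and on the cyclic position of $d_v$ amongst $\theta_{\mathrm{in}}^v,\theta_{\mathrm{out}}^v$. Summing and applying Gauss--Bonnet for the flat disc, which gives $\sum_v \gamma_v = (k-2)\pi$ for the interior angles $\gamma_v = (\theta_{\mathrm{in}}^v - \theta_{\mathrm{out}}^v) \bmod 2\pi$, reduces the claim to pinning down the parity of $\sum_v \kappa_v$. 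This integer-parity bookkeeping is the main obstacle; the cleanest route, essentially as in \cite[Section 5]{C-RI}, is to observe that $\sum_v \kappa_v \bmod 2$ is invariant under local modifications of $D$ (each such modification changes $d_v$ at two vertices, and the resulting jumps in the $\kappa_v$'s cancel in pairs), so the claim follows from verifying a single explicit reference configuration.
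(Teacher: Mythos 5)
Your reduction of the lemma to the identity $\sum_{e\subset\partial f}\epsilon_e(\alpha_e+\beta_e)\equiv\pi(k+2)\pmod{4\pi}$ is a correct reformulation (modulo the technical point that a face of a cellular decomposition need not embed in the plane, so the ``flat chart'' should really be the standard polygon pulled back along the characteristic map of the $2$-cell, and the Gauss--Bonnet step should be phrased as the tangent winding of $\partial D^2$ being $1$). But the proof stops exactly where the content of the lemma begins: you have pushed everything into the unresolved parity of $\sum_v\kappa_v$, and the strategy you sketch for pinning it down does not work as stated. The claim that a modification of $D$ ``changes $d_v$ at two vertices, and the resulting jumps in the $\kappa_v$'s cancel in pairs'' is unjustified and, taken at face value, false: replacing $D$ by another dimer configuration rotates the dimer around the cycles of $D+D'$ and can change $d_v$ at arbitrarily many vertices of $\partial f$; and a change of $d_v$ at a \emph{single} vertex does in general flip the parity of the winding number (this is precisely what the paper exploits in the proof of Lemma~\ref{lemma:quadratic}, where moving a dimer from the left of $C$ onto $C$ changes the parity of $w_{\overrightarrow{C}}(Y^K_D)$). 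What saves the day for a face boundary is the special geometric fact that no edge of $X$ points into the open $2$-cell $f$, so every $d_v$ with $v\in\partial f$ either lies on $\partial f$ or points to the outside of $f$ --- but you neither state nor use this, and without it the ``invariance under modifications of $D$'' step is not available. Note also that the dimer hypothesis is genuinely needed: the half-edge quantities $\alpha_e,\beta_e$ at the two ends of an edge $e$ vanish simultaneously or not at all only because $e$ is a dimer at one endpoint iff it is a dimer at the other; this is what forces the various integer jumps to be consistent, and it is invisible in your bookkeeping.

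For comparison, the paper's proof avoids all of this angle-chasing. It observes that flipping $K$ on a single edge of $\partial f$ changes $w_{\partial f}(Y^K_D)$ by exactly $\pm 1$ (the ``go along $e$ and come back'' loop has winding $\pm1$, and it is here --- not at the faces --- that the dimer condition is used to rule out winding $0$), while the right-hand side $n^K(\partial f)+1$ also changes parity; this reduces the lemma to the single case $n^K(\partial f)=0$, where $Y^K_D$ is visibly isotopic to the outward normal field along $\partial f$, whose winding is $1$. If you want to salvage your computation, the cleanest fix is to import that same one-edge-flip reduction (which in your notation is the statement that $\epsilon_e(\alpha_e+\beta_e)$ changes by $2\alpha_e+2\beta_e-4\pi\equiv 0\pmod{4\pi}$... minus the correct multiple of $2\pi$, using the dimer condition) and then verify your identity only for the all-counterclockwise reference orientation, rather than attempting a $D$-invariance argument.
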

\begin{proof}
By definition, the index of this zero is equal to $w_{\partial f}(Y_D^K)$, the winding number of the vector
field $Y_D^K$ along $\partial f$ (with respect to a constant vector field along $f$). If $K'$ is obtained from $K$ by
changing the orientation along one edge $e$ of $\partial f$, then $w_{\partial f}(Y_D^{K'})-w_{\partial f}(Y_D^K)$ is
equal to the winding number of the vector field obtained as follows: first move along $e$ in one direction
as described in Figure~\ref{fig:vector}, and then back along $e$ in the opposite direction using again the construction
of Figure~\ref{fig:vector}. Using the fact that the vector field
is determined at vertices by a dimer configuration, one easily checks that this winding number is equal to $\pm 1$.
Since the right-hand side of the equation $w_{\partial f}(Y_D^K)=n^K(\partial f)+1\pmod{2}$ also changes
when replacing $K$ by $K'$, it may be assumed that $n^K(\partial f)=0$. In this case, $K$ orients each edge of
$\partial f$ counterclockwise around $f$, and the resulting vector field is isotopic to the vector field pointing
outwards along $\partial f$. The winding number of this vector field along $\partial f$ being equal to $1$,
the equality is proved.
\end{proof}

Hence, a dimer configuration $D$ on $X^1$ and a Kasteleyn orientation $K$ on $X$ determine a spin
structure on $\SI$, i.e: a quadratic form $q^K_D\colon H_1(\SI;\Z_2)\to\Z_2$. It can be computed as follows.

\begin{lemma}
\label{lemma:quadratic}
The quadratic form $q^K_D\colon H_1(\SI;\Z_2)\to\Z_2$ is characterized by the following property: if $C$ is an oriented
simple closed curve on $X^1$, then
\[
q^K_D([C])=n^K(C)+\ell_D(C)+1\pmod{2},
\]
where $\ell_D(C)$ denotes the number of vertices in $C$ such that the adjacent dimer of $D$ points out to the left of
$C$.
\end{lemma}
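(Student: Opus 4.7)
My strategy is to combine Johnson's explicit description of the quadratic form associated to a spin structure (recalled just before Lemma~\ref{lemma:index}) with a careful computation of the winding number of the vector field $Y_D^K$ along $C$. Since the spin structure corresponding to $q^K_D$ is represented by $Y_D^K$, and $C\subset X^1$ avoids the zeroes of $Y_D^K$ (which all lie in the interiors of 2-cells), Johnson's formula reads
\[
q^K_D([C])\equiv w_{\overrightarrow{C}}(Y_D^K)+1\pmod{2}.
\]
The lemma therefore reduces to proving $w_{\overrightarrow{C}}(Y_D^K)\equiv n^K(C)+\ell_D(C)\pmod{2}$.

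First, I would verify that this congruence is preserved under the elementary move of reversing $K$ on a single edge $e\subset C$. Exactly as in the proof of Lemma~\ref{lemma:index}, flipping $K$ on $e$ changes $w_{\overrightarrow{C}}(Y_D^K)$ by the winding number of a ``there-and-back'' loop along $e$, which equals $\pm 1$; simultaneously $n^K(C)$ changes by $\pm 1$, while $\ell_D(C)$ is unaffected, so both sides flip in parity. Iterating such flips, I may assume that $K$ agrees with $\overrightarrow{C}$ on every edge of $C$, so that $n^K(C)=0$, and the task is reduced to showing $w_{\overrightarrow{C}}(Y_D^K)\equiv\ell_D(C)\pmod{2}$.

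In this reduced setting the winding number can be computed edge by edge. Along each edge $e_i$ of $C$, the vector field $Y_D^K$ is governed by Figure~\ref{fig:vector}: starting in the dimer direction at $v_i$, it rotates counterclockwise to the edge direction, then clockwise by $\pi$, then counterclockwise to the dimer direction at $v_{i+1}$. Tracking the angle of $Y_D^K$ relative to the tangent of $C$, summing the contributions of all edges, and accounting for the turning of the tangent of $C$ at each vertex, one should find that the total winding modulo $2$ equals the number of vertices $v_i$ at which the dimer $d_{v_i}$ lies in the open left half-plane of $\overrightarrow{C}$, which is precisely $\ell_D(C)$.

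The main obstacle is this last local analysis: at each vertex $v_i$ one must check that the parity of the contribution to the winding depends only on the position of $d_{v_i}$ relative to the two edges $e_{i-1}$ and $e_i$ of $C$ meeting at $v_i$, with ``left'' contributing $1$ and ``right'' or ``along $C$'' contributing $0\pmod{2}$. The rotation convention of Figure~\ref{fig:vector} has to be applied to both adjacent edges simultaneously, but the resulting computation is purely $2$-dimensional and independent of the global topology of $\SI$. A useful sanity check is the case $C=\partial f$ for a face $f$: then $[C]=0$ in $H_1(\SI;\Z_2)$, so $q^K_D([C])=0$, giving $\ell_D(\partial f)\equiv n^K(\partial f)+1\pmod{2}$, which is consistent with Lemma~\ref{lemma:index}.
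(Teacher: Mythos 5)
Your setup and first reduction coincide exactly with the paper's: both invoke Johnson's formula to reduce the claim to $w_{\overrightarrow{C}}(Y_D^K)\equiv n^K(C)+\ell_D(C)\pmod 2$, and both normalize $n^K(C)=0$ by flipping $K$ edge by edge, using the $\pm1$ change of winding number established at the start of the proof of Lemma~\ref{lemma:index}. Where you diverge is the endgame, and that is also where your argument stops short: the vertex-by-vertex angle bookkeeping that you yourself identify as ``the main obstacle'' is precisely the content of the lemma in the case $n^K(C)=0$, and it is asserted (``one should find\dots'') rather than carried out. The paper avoids this computation by a second reduction that mirrors the first: it observes that replacing the dimer direction at a single vertex of $C$ from ``pointing out to the left'' to ``on $C$ or pointing out to the right'' also changes the parity of $w_{\overrightarrow{C}}(Y_D^K)$, so one may further assume $\ell_D(C)=0$; in that normalized situation $Y_D^K$ is isotopic to the vector field pointing constantly to the right of $C$, whose tangential winding number is $0$, and the proof ends with no local case analysis. (This reduction is harmless even though the modified directions need no longer come from a perfect matching, since the construction of $Y_D^K$ along $C$ only uses a chosen direction at each vertex.) So your route is viable but incomplete as written; to close it you must either execute the local computation at each vertex --- applying the rotation convention of Figure~\ref{fig:vector} to both edges of $C$ meeting there and checking the three possible positions of the dimer --- or, more economically, insert the second normalization step. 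A minor further caveat: your sanity check for $C=\partial f$ is not really independent evidence, since relating $\ell_D(\partial f)$ to the index of the zero in $f$ requires essentially the same local analysis you are trying to verify.
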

\begin{proof}
Since $H_1(\SI;\Z_2)=H_1(X;\Z_2)$, any element in $H_1(\SI;\Z_2)$ can be
represented by a 1-cycle in $X^1$. The map $q^K_D$ being a quadratic form, its value on such a cycle is determined by
its value on simple closed curves in $X^1$. Let $C$ be such a curve.
By construction, $q^K_D([C])$ is equal to $w_{\overrightarrow{C}}(Y^K_D)+1$, so we are left with the proof that
$w_{\overrightarrow{C}}(Y^K_D)\equiv n^K(C)+\ell_D(C)\pmod{2}$. By the argument given at the beginning of the proof of Lemma
\ref{lemma:index}, the parity of $w_{\overrightarrow{C}}(Y^K_D)$ changes when $K$ is inverted along one edge of $C$.
Since the same obviously holds for $n^K(C)+\ell_D(C)$, it may be assumed that $n^K(C)=0$.
Furthermore, the parity of $w_{\overrightarrow{C}}(Y^K_D)$ also changes when a dimer pointing out to the left of
$C$ is replaced by a dimer either on $C$, or pointing out to the right of $C$. Hence, it may be assumed that
$\ell_D(C)=0$ as well. But in this case, $Y^K_D$ is isotopic to the vector field pointing constantly to the right
of $C$, so $w_{\overrightarrow{C}}(Y^K_D)=0$. This proves the lemma.
\end{proof}

We can now state our correspondence theorem. It is a straightforward consequence of Lemma~\ref{lemma:quadratic}
(the same way -- in \cite{C-RI} -- Proposition 1 and Corollary 2 follow from Theorem 3).

\begin{theorem}
\label{thm:corr}
Let $X$ be a cellular decomposition of an oriented closed surface $\SI$. Then, any dimer configuration $D\in\D(X^1)$
induces an $H^1(\SI;\Z_2)$-equivariant bijection
\[
\psi_D\colon\K(X)\to\Q(\SI)=\mathrm{Spin}(\SI),\quad [K]\mapsto q_D^K
\]
from the set of equivalence classes of Kasteleyn orientations on $X$ to the set of equivalence classes of spin structures
on $\SI$. Furthermore, given another dimer configuration $D'\in\D(X^1)$, $\psi_{D'}$ is obtained from $\psi_D$ by action of the Poincar\'e dual to
$[D+D']\in H_1(\SI;\Z_2)$. \qed
\end{theorem}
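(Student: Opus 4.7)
My plan is to recognize that $\psi_D$ has already been \emph{constructed} in Section~\ref{sub:spin}: a Kasteleyn orientation $K$ together with the fixed dimer configuration $D$ yields the vector field $Y_D^K$, whose zeros are all of even index by Lemma~\ref{lemma:index}; it therefore defines a spin structure, hence (via Johnson) a quadratic form $q_D^K\in\Q(\SI)$, for which Lemma~\ref{lemma:quadratic} supplies a combinatorial formula. The remaining tasks are to show that this map descends to equivalence classes in $\K(X)$, to verify $H^1(\SI;\Z_2)$-equivariance, and to compute its variation in $D$.

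For well-definedness, I would let $K'$ be obtained from $K$ by inverting all edges at a single vertex $v$ and apply Lemma~\ref{lemma:quadratic} on an oriented simple closed curve $C\subset X^1$ (such curves generate $H_1(\SI;\Z_2)$, and $q_D^K,q_D^{K'}$ are both quadratic). The number $\ell_D(C)$ depends only on the embedding of the dimers, hence is unchanged; and $n^K(C)$ either does not change (if $v\notin C$) or changes by an even number (if $v\in C$, since both edges of $C$ incident to $v$ get flipped). Hence the parities agree and $q_D^K=q_D^{K'}$.

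For equivariance, the action of $\alpha\in H^1(\SI;\Z_2)$ on $\K(X)$ (Proposition~\ref{prop:Kast}) is realized by representing $\alpha$ by a $1$-cocycle $c$ and inverting $K$ on the support of $c$, which shifts $n^K(C)$ by $\langle\alpha,[C]\rangle\pmod 2$ on every $1$-cycle $C$. Johnson's bijection intertwines this with the action $q\mapsto q+\alpha$ on $\Q(\SI)$, and Lemma~\ref{lemma:quadratic} then gives $q_D^{\alpha\cdot K}=q_D^K+\alpha$. Since both $\K(X)$ and $\Q(\SI)$ are $H^1(\SI;\Z_2)$-torsors, the bijectivity of $\psi_D$ follows automatically.

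The main obstacle is the final assertion. By Lemma~\ref{lemma:quadratic} it reduces to the combinatorial identity
\[
\ell_D(C)+\ell_{D'}(C)\equiv [D+D']\cdot[C]\pmod 2
\]
for every oriented simple closed curve $C\subset X^1$. My plan here is to perturb $C$ slightly off the $1$-skeleton to a transverse representative $\widetilde C$ lying to the left of $C$, so that $\widetilde C$ crosses precisely those edges of $X^1$ emanating leftward from some vertex of $C$. Computing $\widetilde C\cdot(D+D')$ vertex by vertex, the only edges of $D+D'$ incident to $v\in C$ are $e^D_v$ and $e^{D'}_v$, and only when $e^D_v\neq e^{D'}_v$; a short case analysis on their positions (on $C$, to the left, or to the right) shows the local contribution equals $[e^D_v\text{ left}]+[e^{D'}_v\text{ left}]\pmod 2$ in every case, and summing over $v$ produces $\ell_D(C)+\ell_{D'}(C)$. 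This vertex-by-vertex geometric check is the only step that is not a direct formal consequence of the setup of Section~\ref{sub:spin}.
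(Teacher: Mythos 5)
Your proposal is correct and follows essentially the same route the paper takes: the theorem is presented there as a direct consequence of Lemma~\ref{lemma:quadratic} (with the details deferred to \cite{C-RI} and carried out explicitly only in the non-orientable analogue, Theorem~\ref{thm:n-corr}), namely well-definedness and equivariance read off from the formula $n^{K}(C)+\ell_D(C)+1$, bijectivity from the torsor structures, and the dependence on $D$ reduced to the identity $\ell_D(C)+\ell_{D'}(C)\equiv C\cdot(D+D')\pmod 2$ proved by the same push-off argument. No gaps.
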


As stated above, the correspondence depends on the choice of $D\in\D(X^1)$. This can be remedied as follows. Let
$\mathcal{B}=\{\alpha_i\}$ denote a family of closed curves in $\SI$, transverse to $X^1$, whose classes
form a basis of $H_1(\SI;\Z_2)$. Given any $D\in\D(X^1)$, let $\varphi_\mathcal{B}^D\in\coho=\Hom(\ho,\Z_2)$ be given by
$\varphi_\mathcal{B}^D([\alpha_i])=\alpha_i\cdot D$ for $1\le i\le 2g$. Finally, let $q^K_\mathcal{B}\in\Q(\SI)$
be defined by $q^K_\mathcal{B}=q^K_D+\varphi_\mathcal{B}^D$.

\begin{corollary}
\label{cor:corr}
Let $X$ be a cellular decomposition of an oriented closed surface $\SI$ such that $X^1$ admits a dimer configuration
$D$. Then, the map
\[
\psi_\mathcal{B}\colon\K(X)\to\Q(\SI)=\mathrm{Spin}(\SI),\quad [K]\mapsto q_\mathcal{B}^K
\]
is an $H^1(\SI;\Z_2)$-equivariant bijection which does not depend on $D$.
\end{corollary}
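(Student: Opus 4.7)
The plan is to reduce everything to Theorem~\ref{thm:corr} by showing that the correction term $\varphi_{\mathcal{B}}^D$ is designed precisely to absorb the $D$-dependence of $\psi_D$. Independence of $D$ is the only substantive point; bijectivity and equivariance follow formally.

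First I would tackle the independence from the choice of $D$. Let $D, D' \in \D(X^1)$ be two dimer configurations. By the second statement of Theorem~\ref{thm:corr}, the maps $\psi_D$ and $\psi_{D'}$ differ by the action on $\Q(\SI)$ of the Poincar\'e dual $\mathrm{PD}([D+D']) \in H^1(\SI;\Z_2)$; explicitly,
\[
q^K_{D'}(\gamma) = q^K_D(\gamma) + \gamma \cdot [D+D'] \pmod{2}
\]
for every $\gamma \in \ho$. I would then compute the analogous difference for the correction terms: on the basis element $[\alpha_i]$,
\[
\varphi_{\mathcal{B}}^{D'}([\alpha_i]) - \varphi_{\mathcal{B}}^D([\alpha_i]) = \alpha_i\cdot D' - \alpha_i\cdot D = \alpha_i\cdot(D+D') \pmod{2}.
\]
Since $\{[\alpha_i]\}$ is a basis of $\ho$ and this is exactly the value of $\mathrm{PD}([D+D'])$ on $[\alpha_i]$, I conclude $\varphi_{\mathcal{B}}^{D'} - \varphi_{\mathcal{B}}^D = \mathrm{PD}([D+D'])$ as cohomology classes. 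Adding the two displayed equalities cancels the Poincar\'e dual term and yields $q^K_{D'} + \varphi_{\mathcal{B}}^{D'} = q^K_D + \varphi_{\mathcal{B}}^D$, i.e.\ $\psi_{\mathcal{B}}$ is well defined.

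For bijectivity, I would observe that $\psi_{\mathcal{B}}$ is obtained from the bijection $\psi_D$ of Theorem~\ref{thm:corr} by post-composition with translation by the fixed class $\varphi_{\mathcal{B}}^D \in H^1(\SI;\Z_2)$ on the torsor $\Q(\SI)$, which is itself a bijection. For $H^1(\SI;\Z_2)$-equivariance, note that translation by a fixed cohomology class commutes with the torsor action, so given $\alpha \in H^1(\SI;\Z_2)$ and $[K]\in\K(X)$,
\[
\psi_{\mathcal{B}}(\alpha\cdot[K]) = \psi_D(\alpha\cdot[K]) + \varphi_{\mathcal{B}}^D = \alpha + \psi_D([K]) + \varphi_{\mathcal{B}}^D = \alpha\cdot \psi_{\mathcal{B}}([K]),
\]
using the equivariance of $\psi_D$ in the middle step.

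The only minor obstacle is identifying the difference of the cochains $\varphi_{\mathcal{B}}^{D'} - \varphi_{\mathcal{B}}^D$ with the Poincar\'e dual of the 1-cycle $D+D'$; this is transparent because the pairing of $\mathrm{PD}([D+D'])$ with a homology class represented by a transverse curve $\alpha$ is precisely the mod-2 intersection number $\alpha\cdot(D+D')$, and the $[\alpha_i]$ form a basis so equality on them is enough.
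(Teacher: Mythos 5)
Your proposal is correct and follows essentially the same route as the paper: both arguments identify $\varphi_\mathcal{B}^{D}+\varphi_\mathcal{B}^{D'}$ with the Poincar\'e dual $[D+D']^*$ by evaluating on the basis $\{[\alpha_i]\}$, so that it exactly cancels the $D$-dependence of $\psi_D$ given by Theorem~\ref{thm:corr}, while bijectivity and equivariance follow since $\psi_\mathcal{B}$ is $\psi_D$ composed with a translation on the torsor $\Q(\SI)$. No gaps.
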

\begin{proof}
This map is obviously an $H^1(\SI;\Z_2)$-equivariant bijection, as it is obtained from $\psi_D$ via translation by
$\varphi_\mathcal{B}^D\in\coho$. Furthermore, given $D,D'\in\D(X^1)$,
\[
(\varphi_\mathcal{B}^D+\varphi_\mathcal{B}^{D'})([\alpha_i])=\alpha_i\cdot(D+D')=[\alpha_i]\cdot[D+D'].
\]
In other words, $\varphi_\mathcal{B}^D+\varphi_\mathcal{B}^{D'}$ is equal to $[D+D']^*$, the Poincar\'e dual to
$[D+D']$. Since $\psi_{D'}=\psi_{D}+[D+D']^*$, it follows that $\psi_\mathcal{B}=\psi_D+\varphi_\mathcal{B}^{D}$
does not depend on $D$.
\end{proof}

This whole paragraph shows that Kasteleyn orientations should be understood as ``discrete spin structures'' on surfaces.
This terminology is already present in the literature: according to Mercat \cite{Mer}, a discrete spin structure on $X$
is a double cover $p\colon\widetilde{X}^1\to X^1$ whose restriction to $p^{-1}(\partial f)$ is the non-trivial double
cover, for any face $f$ of $X$. As Mercat points out, such a cover is encoded by the homomorphism
$\mu\colon Z_1(X)\to\Z_2$ which maps a 1-cycle $C\subset X^1$ to $\mu(C)=0$ if and only if $C$ lifts to a cycle
$\widetilde C$ in $\widetilde{X}^1$. The correspondence with our point of view is straightforward: given a Kasteleyn
orientation $K\in\K(X)$, the associated $\mu_K$ is simply given by $\mu_K(C)=n^K(C)$.

\subsection{The Pfaffian formula}
\label{sub:Pfaff}

The aim of the previous paragraph was to give a natural correspondence between Kasteleyn orientations and
spin structures. But as a direct consequence of Lemma~\ref{lemma:quadratic}, we also obtain immediately the following
non-trivial combinatorial result.

\begin{proposition}
\label{prop:quadratic}
Let $K$ be a Kasteleyn orientation on $X$, and $D$ be a dimer configuration on $X^1$. Given a homology class
$\alpha\in H_1(\SI;\Z_2)$, represent it by oriented simple closed curves $C_1,\dots,C_m$ in $X^1$. Then, the equality
\[
q_D^K(\alpha)=\sum_{i=1}^m(n^K(C_i)+\ell_D(C_i)+1)+\sum_{1\le i<j\le m}C_i\cdot C_j\pmod{2}
\]
determines a well-defined quadratic form $q_D^K\colon H_1(\SI;\Z_2)\to\Z_2$.\qed
\end{proposition}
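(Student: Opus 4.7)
The plan is to deduce the proposition directly from Lemma~\ref{lemma:quadratic} together with the defining identity of a quadratic form; the apparent issue of well-definedness will then be automatic. Indeed, combining Lemma~\ref{lemma:quadratic} with Theorem~\ref{thm:corr}, we already know that $q_D^K\colon H_1(\SI;\Z_2)\to\Z_2$ is a bona fide quadratic form, and that its value on the class of any oriented simple closed curve $C\subset X^1$ equals $n^K(C)+\ell_D(C)+1 \pmod 2$. So the main task is merely to verify that the formula displayed in the proposition reproduces $q_D^K(\alpha)$ when $\alpha$ is written as a sum of such classes.

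First I would fix a representation $\alpha=[C_1]+\cdots+[C_m]$ in $\ho$ by oriented simple closed curves in $X^1$, and apply the quadratic relation $q(x+y)=q(x)+q(y)+x\cdot y$ inductively on $m$. This yields
\[
q_D^K(\alpha)=\sum_{i=1}^m q_D^K([C_i])+\sum_{1\le i<j\le m}[C_i]\cdot[C_j].
\]
Then I would substitute the formula of Lemma~\ref{lemma:quadratic} into each summand $q_D^K([C_i])$, which immediately gives the expression of the proposition, provided one identifies the pairing $[C_i]\cdot[C_j]$ in $\ho$ with the mod-$2$ geometric intersection number $C_i\cdot C_j$. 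For curves in general position this is standard, and for simple closed curves in $X^1$ sharing edges it holds after a small perturbation making them transverse.

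Finally, well-definedness requires no extra work: the left-hand side depends only on $\alpha$, so the right-hand side must too, regardless of the chosen decomposition $\alpha=\sum_i[C_i]$. The only point to flag is that without invoking the spin-structure interpretation established in the previous section, a direct combinatorial proof of well-definedness would be substantially more delicate — one would have to check invariance under the moves that relate two systems of simple closed curves representing the same homology class, and this would essentially reprove Lemma~\ref{lemma:quadratic} from scratch. This is why routing the argument through the quadratic form $q_D^K$ is the efficient path.
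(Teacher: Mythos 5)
Your argument is correct and is exactly the route the paper intends: the proposition is stated there with no separate proof, being introduced as ``a direct consequence of Lemma~\ref{lemma:quadratic}'', and your expansion — apply the quadratic relation $q(x+y)=q(x)+q(y)+x\cdot y$ inductively to $\alpha=\sum_i[C_i]$, substitute the lemma's formula for each $q_D^K([C_i])$, and observe that well-definedness is inherited from the geometrically defined form — is precisely that deduction made explicit. Your closing remark that a purely combinatorial proof of well-definedness would be much more delicate is also the point the paper is implicitly making by calling the result ``non-trivial''.
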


We shall now use this combinatorial information, together with the results and notation of Section~\ref{sec:dimers},
to derive our Pfaffian formula.

Let $\G$ be a finite connected graph endowed with an edge weight system $\mathrm{w}$. If $\G$ does not admit any dimer
configuration, then the partition function $Z$ is obviously zero. So, let us assume that $\G$ admits a dimer
configuration $D_0$. Enumerate the vertices of $\G$ by $1,2,\dots,2n$ and embed $\G$ in a closed orientable surface
$\SI$ of genus $g$ as the 1-skeleton of a cellular decomposition $X$ of $\SI$.

Since $\G$ has an even number of vertices, the set $\K(X)$ is an $H^1(\SI;\Z_2)$-torsor. For any Kasteleyn orientation
$K$, the Pfaffian of the associated weighted skew-adjacency matrix satisfies
\begin{align*}
\e^K(D_0)\Pf(A^K)&\overset{(\ref{equ:Pf})}{=}\sum_{D\in\D(\G)}\e^K(D_0)\e^K(D)\,\mathrm{w}(D)\\
	&\overset{(\ref{equ:cc})}{=}\sum_{D\in\D(\G)}(-1)^{\sum_i (n^K(C_i)+1)}\mathrm{w}(D),
\end{align*}
where the $C_i$'s are the connected components of the cycle $D+D_0\in C_1(X;\Z_2)$.
Note that given any vertex of $C_i$, the adjacent dimer of $D_0$ lies on $C_i$, so that $\ell_{D_0}(C_i)=0$.
Since the cycles $C_i$ are disjoint, Proposition~\ref{prop:quadratic} gives
\[
\sum_i(n^K(C_i)+1)=\sum_i(n^K(C_i)+\ell_{D_0}(C_i)+1)=q^K_{D_0}([D+D_0]).
\]
Therefore, every element $[K]$ of $\K(X)$ induces a linear equation
\[
\e^K(D_0)\Pf(A^K)=\sum_{\alpha\in H_1(\SI;\Z_2)}(-1)^{q^K_{D_0}(\alpha)}Z_\alpha(D_0),
\]
where $Z_\alpha(D_0)=\sum_{[D+D_0]=\alpha}\mathrm{w}(D)$, the sum being over all $D\in\D(\G)$ such that $[D+D_0]=\alpha$.
It is an easy exercise to solve this linear system of $2^{2g}$ equations with $2^{2g}$ unknowns, and to 
obtain the following formula for $Z=\sum_\alpha Z_\alpha(D_0)$. (See \cite[Theorem 5]{C-RI} for details.)

\begin{theorem}
\label{thm:Pf}
Let $\G$ be a graph embedded in a closed oriented surface $\SI$ of genus $g$ such that $\SI\setminus\G$ consists of
open 2-discs. Then, the partition function of the dimer model on $\Gamma$ is given by the formula
\[
Z=\frac{1}{2^{g}}\sum_{[K]\in\K(X)}(-1)^{\A(q^{K}_{D_0})}\e^K(D_0)\Pf(A^{K}),
\]
where the sum is taken over all equivalence classes of Kasteleyn
orientations, and $\A(q)\in\Z_2$ denotes the Arf invariant of the quadratic form $q$.
Furthermore, the sign $(-1)^{\A(q^{K}_{D_0})}\e^K(D_0)$ does not depend on $D_0$.\qed
\end{theorem}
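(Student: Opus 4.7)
My plan is to follow the set-up given just before the theorem statement, extract a linear system indexed by Kasteleyn orientations, and invert it by means of a Gauss-sum identity for quadratic forms. Fix a dimer configuration $D_0\in\D(\G)$; if none exists then $Z=0$ and the formula is trivial. For each $[K]\in\K(X)$ and $D\in\D(\G)$, the sign $\prod_i(-1)^{n^K(C_i)+1}$ appearing in (\ref{equ:cc}) equals $(-1)^{q^K_{D_0}([D+D_0])}$ by Proposition~\ref{prop:quadratic}, because the components $C_i$ of $D+D_0$ are pairwise disjoint (so $C_i\cdot C_j=0$) and the $D_0$-dimer at every vertex of $C_i$ lies on $C_i$ itself (so $\ell_{D_0}(C_i)=0$). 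Grouping the dimer configurations by $\alpha=[D+D_0]\in\ho$ yields
\[
\e^K(D_0)\,\Pf(A^K)=\sum_{\alpha\in\ho}(-1)^{q^K_{D_0}(\alpha)}Z_\alpha(D_0),
\]
with $Z_\alpha(D_0)=\sum_{[D+D_0]=\alpha}\mathrm{w}(D)$ and $Z=\sum_{\alpha}Z_\alpha(D_0)$.

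Next I multiply by $(-1)^{\A(q^K_{D_0})}$ and sum over $[K]\in\K(X)$. The $H^1(\SI;\Z_2)$-equivariant bijection $[K]\mapsto q^K_{D_0}$ of Theorem~\ref{thm:corr} turns the left-hand side into the expression appearing in the theorem and the right-hand side into $\sum_\alpha Z_\alpha(D_0)\,G(\alpha)$, where
\[
G(\alpha)=\sum_{q\in\Q(\SI)}(-1)^{\A(q)+q(\alpha)}.
\]
The theorem reduces to the identity $G(\alpha)=2^g$ for every $\alpha\in\ho$. To verify it I use two standard inputs about quadratic refinements of a symplectic form on $\Z_2^{2g}$: the classical count $\#\{q:\A(q)=0\}-\#\{q:\A(q)=1\}=2^g$, which gives $G(0)=2^g$, and the Arf translation formula $\A(q+\alpha^*)=\A(q)+q(\alpha)$, where $\alpha^*\in\coho$ denotes the Poincar\'e dual of $\alpha$. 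The substitution $q'=q+\alpha^*$, which is a bijection on $\Q(\SI)$, then yields $G(\alpha)=G(0)=2^g$. The Arf translation identity is the only algebraic input not already isolated in the paper, and I expect it to be the main step of the argument; however, it is a routine check in a symplectic basis of $\ho$.

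For the final assertion, I replace $D_0$ by another dimer configuration $D_0'$ and set $\beta=[D_0+D_0']\in\ho$. Theorem~\ref{thm:corr} gives $q^K_{D_0'}=q^K_{D_0}+\beta^*$, so the Arf translation formula yields $\A(q^K_{D_0'})=\A(q^K_{D_0})+q^K_{D_0}(\beta)$. On the other hand, (\ref{equ:cc}) combined with the sign computation of the first paragraph (applied to the pair $D_0,D_0'$) shows $\e^K(D_0)\e^K(D_0')=(-1)^{q^K_{D_0}(\beta)}$. The two sign changes cancel, so $(-1)^{\A(q^K_{D_0})}\e^K(D_0)$ is independent of $D_0$, as claimed.
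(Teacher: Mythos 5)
Your proof is correct and follows the same route as the paper: the linear system you derive is exactly the one displayed before the theorem, and your inversion via the Gauss-sum identity $\sum_{q}(-1)^{\A(q)+q(\alpha)}=2^g$ is precisely the ``easy exercise'' that the paper delegates to \cite[Theorem 5]{C-RI}. The Arf translation formula $\A(q+\alpha^*)=\A(q)+q(\alpha)$ that you flag as the one external input is exactly \cite[Lemma 1]{C-RI}, which the paper itself invokes in the proof of Theorem~\ref{thm:Pf'}, and your cancellation of the two $q^{K}_{D_0}(\beta)$ sign changes is the intended argument for the independence of $D_0$.
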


Note that for a fixed $D_0$, and for any element of $\K(X)$, one can always choose a Kasteleyn orientation in
this equivalence class such that $\e^K(D_0)=1$. By Theorem~\ref{thm:corr}, this leads to the formula stated in the
introduction:
\[
Z=\frac{1}{2^g}\sum_{\xi\in\mathrm{Spin}(\SI)}(-1)^{\A(\xi)}\Pf(A^\xi),
\]
where $A^\xi$ is the matrix $A^K$ for any Kasteleyn orientation $K$ such that $q^K_{D_0}=\xi$ and $\e^K(D_0)=1$.

This formula is reminiscent of \cite[Equation 6.9]{AMV}, drawing a strong analogy between the dimer model on $\G$
and the bosonic Quantum Field Theory on the compact Riemann surface $\SI$. However, as it stands here, it
is not very convenient for computational purposes. Indeed, it seems to require the choice of a dimer configuration $D_0$,
which is often in practice very hard -- if not impossible -- to find. Also, the computation of each quadratic form can be
very tedious. Nevertheless, we shall now show that this formula can actually be used in a very efficient way to
compute the partition function $Z$.

Let $\mathcal{B}=\{\alpha_i\}$ be a set of simple closed curves on $\SI$,
transverse to $\G$, whose classes form a basis of $\ho$. Fix a Kasteleyn orientation $K$ on $\G\subset\SI$
which satisfies the following property: for any $\alpha_i\in\mathcal{B}$, let $C_i$ denote the oriented 1-cycle in $\G$
having $\alpha_i$ to its immediate left, and meeting every vertex of $\G$ adjacent to $\alpha_i$ on this side.
We require $n^K(C_i)$ to be odd for each $i$.
(There are in fact two possible choices for $C_i$, corresponding to the two sides of $\alpha_i$, but the parity condition
above does not depend on which one is chosen.) Finally, for any $\eps=(\eps_1,\dots,\eps_{2g})\in\Z_2^{2g}$,
let $K_\eps$ denote the Kasteleyn orientation obtained from $K$ as follows: invert the orientation $K$ on the edge
$e$ of $\G$ each time $e$ intersects $\alpha_i$ with $\eps_i=1$.

\begin{theorem}
\label{thm:Pf'}
Let $\G$ be a graph embedded in a closed oriented surface $\SI$ of genus $g$ such that $\SI\setminus\G$ consists of
open 2-discs, and fix a set of simple closed curves $\{\alpha_i\}$ on $\SI$, transverse to $\G$, whose classes form a basis of $\ho$. Then, the partition function of the dimer model on $\Gamma$ is given by the formula
\[
Z=\frac{1}{2^{g}}\Big|\sum_{\eps\in\Z_2^{2g}}(-1)^{\sum_{i<j}\eps_i\eps_j\alpha_i\cdot\alpha_j}\Pf(A^{K_\eps})\Big|,
\]
where $K_\eps$ are the Kasteleyn orientations described above.
\end{theorem}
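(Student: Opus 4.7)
The plan is to deduce Theorem~\ref{thm:Pf'} from Theorem~\ref{thm:Pf} by tracking how the signs in the Pfaffian sum depend on $\eps$. First I would check that each $K_\eps$ is a Kasteleyn orientation and that the classes $[K_\eps]$ exhaust $\K(X)$. The first point is immediate: as each $\alpha_i$ is a closed curve transverse to $\G$, it meets the boundary of every 2-cell $f$ in an even number of points, so flipping $K$ across $\alpha_i$ preserves the parity of $n^K(\partial f)$. For the second, the operation $K\mapsto K_\eps$ realizes the action on $\K(X)$ of the cocycle $e\mapsto\sum_i\eps_i(\alpha_i\cdot e)\pmod 2$, which represents the Poincar\'e dual of $\sum_i\eps_i[\alpha_i]\in\ho$; since $\{[\alpha_i]\}$ is a basis of $\ho$ and $\K(X)$ is a free $\coho$-torsor by Proposition~\ref{prop:Kast}, the map $\eps\mapsto[K_\eps]$ is a bijection $\Z_2^{2g}\to\K(X)$.

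The main task is to express $(-1)^{\A(q^{K_\eps}_{D_0})}\e^{K_\eps}(D_0)$ in the form $c\cdot(-1)^{\sum_{i<j}\eps_i\eps_j\alpha_i\cdot\alpha_j}$ for some $c=\pm 1$ independent of $\eps$. The dimer sign is immediate from~(\ref{equ:eps}):
\[
\e^{K_\eps}(D_0)=\e^K(D_0)\cdot(-1)^{\sum_i\eps_i(\alpha_i\cdot D_0)},
\]
since each dimer of $D_0$ crossing an $\alpha_i$ with $\eps_i=1$ has its orientation flipped. For the Arf invariant, I would use the Gauss sum identity $\sum_x(-1)^{q(x)}=2^g(-1)^{\A(q)}$ to derive the shift formula $\A(q+\phi)\equiv\A(q)+q(v)\pmod 2$ whenever $\phi(x)=v\cdot x$. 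Applying Lemma~\ref{lemma:quadratic} to $C_j$, together with the relation $n^{K_\eps}(C_j)-n^K(C_j)\equiv\sum_i\eps_i(\alpha_i\cdot\alpha_j)\pmod 2$ coming from the fact that $[C_j]=[\alpha_j]$ in $\ho$, identifies the difference $q^{K_\eps}_{D_0}-q^K_{D_0}$ with the linear form $x\mapsto v_\eps\cdot x$ for $v_\eps=\sum_i\eps_i[\alpha_i]$. Expanding $q^K_{D_0}(v_\eps)$ via $q(x+y)=q(x)+q(y)+x\cdot y$ then yields
\[
\A(q^{K_\eps}_{D_0})-\A(q^K_{D_0})\equiv\sum_i\eps_i\,q^K_{D_0}([\alpha_i])+\sum_{i<j}\eps_i\eps_j(\alpha_i\cdot\alpha_j)\pmod 2.
\]

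The geometric crux, and the step I expect to require the most care, is the identity $q^K_{D_0}([\alpha_i])\equiv\alpha_i\cdot D_0\pmod 2$. Lemma~\ref{lemma:quadratic} applied to $C_i$, combined with the defining assumption that $n^K(C_i)$ is odd, reduces this to $\ell_{D_0}(C_i)\equiv\alpha_i\cdot D_0\pmod 2$. This in turn follows from the placement of $C_i$: since $C_i$ runs immediately adjacent to $\alpha_i$ on one side and visits every vertex of $\G$ on that side adjacent to $\alpha_i$, a dimer at a vertex of $C_i$ can only point out toward the $\alpha_i$-side of $C_i$ by crossing $\alpha_i$, and conversely every crossing of $\alpha_i$ by a dimer of $D_0$ contributes to $\ell_{D_0}(C_i)$ through its endpoint on the $C_i$ side. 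In fact $\ell_{D_0}(C_i)=\alpha_i\cdot D_0$ as integers.

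Combining the two computations, the linear-in-$\eps$ contributions cancel and we obtain
\[
(-1)^{\A(q^{K_\eps}_{D_0})}\e^{K_\eps}(D_0)=(-1)^{\A(q^K_{D_0})}\e^K(D_0)\cdot(-1)^{\sum_{i<j}\eps_i\eps_j(\alpha_i\cdot\alpha_j)}.
\]
Substituting into Theorem~\ref{thm:Pf} and absorbing the global sign $(-1)^{\A(q^K_{D_0})}\e^K(D_0)=\pm 1$ into the absolute value---valid since $Z\ge 0$---yields the announced formula. As a sanity check, the right-hand side is manifestly independent of $D_0$, as indeed guaranteed by the last assertion of Theorem~\ref{thm:Pf}.
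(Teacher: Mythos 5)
Your proposal is correct and follows essentially the same route as the paper's proof: reduce to Theorem~\ref{thm:Pf}, compute the $\eps$-dependence of $\e^{K_\eps}(D_0)$ and of the Arf invariant via the shift formula $\A(q+v^*)=\A(q)+q(v)$ (which the paper cites from \cite{C-RI} rather than rederiving from the Gauss sum), and kill the linear terms using $q^K_{D_0}([\alpha_i])=\alpha_i\cdot D_0$, i.e.\ $\ell_{D_0}(C_i)=\alpha_i\cdot D_0$ together with $n^K(C_i)$ odd. The only (inessential) omission is the degenerate case where $\G$ admits no dimer configuration, which the paper dispatches first by noting all Pfaffians then vanish.
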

\begin{proof}
If $\G$ does not admit any dimer configuration, then Equation (\ref{equ:Pf}) implies that $\Pf(A^{K_\eps})=0$
for all $\eps$, and our equality holds. Therefore,
it may be assumed that there exists a $D\in\D(\G)$. In particular, $\G$ has an even number of vertices, so
$\K(X)$ is an $\coho$-torsor by Proposition~\ref{prop:Kast}. The set $\{K_\eps\}_{\eps\in\Z_2^{2g}}$ is constructed to
contain one element in each equivalence class of Kasteleyn orientations, so Theorem~\ref{thm:Pf} gives the equality
\begin{align*}
Z&=\frac{1}{2^{g}}\sum_{\eps\in\Z_2^{2g}}(-1)^{\A(q^{K_\eps}_D)}\e^{K_\eps}(D)\Pf(A^{K_\eps})\\
&=\frac{1}{2^{g}}\Big|
\sum_{\eps\in\Z_2^{2g}}(-1)^{\A(q^{K_\eps}_D)+\A(q^{K}_D)}\e^{K_\eps}(D)\e^{K}(D)\Pf(A^{K_\eps})\Big|.\tag{$\star$}
\end{align*}
By \cite[Lemma 1]{C-RI},
\[
\A(q^{K_\eps}_D)+\A(q^{K}_D)=q_D^K([\Delta_\eps]),
\]
where the Poincar\'e dual $[\Delta_\eps]^*$ of $[\Delta_\eps]\in\ho$ is required to satisfy 
$q^K_D+[\Delta_\eps]^*=q^{K_\eps}_D$. By Theorem~\ref{thm:corr}, this is equivalent to
$K+[\Delta_\eps]^*=K_\eps$. The very definition of $K_\eps$ implies that $\Delta_\eps=\sum_i\eps_i\alpha_i$
represents the right homology class. On the other hand, one easily checks the equality
\[
\e^{K_\eps}(D)\e^{K}(D)=(-1)^{\Delta_\eps\cdot D}.
\]
We thus obtain that the coefficient in $(\star)$ corresponding to $\eps$ is equal to
\[
(-1)^{q_D^K([\Delta_\eps])+\Delta_\eps\cdot D}=(-1)^{q^K_\mathcal{B}([\Delta_\eps])}
=(-1)^{q^K_\mathcal{B}(\sum_i\eps_i[\alpha_i])},
\]
using the notation of Corollary~\ref{cor:corr}.
Since $q^K_\mathcal{B}$ is a quadratic form,
\[
q^K_\mathcal{B}(\sum_i\eps_i[\alpha_i])=
\sum_i\eps_i q^K_\mathcal{B}([\alpha_i])+\sum_{i<j}\eps_i\eps_j\alpha_i\cdot\alpha_j.
\]
Therefore, it remains to check that $q^K_\mathcal{B}([\alpha_i])$ vanishes for all $i$.
To do so, consider the oriented closed curve $C_i$ in $\G$ having $\alpha_i$ to its immediate left,
and meeting every vertex of $\G$ adjacent to $\alpha_i$ on this side. Obviously, $C_i$ and $\alpha_i$ are homologous,
and by construction, $\ell_D(C_i)=\alpha_i\cdot D$. Therefore,
\[
q^K_\mathcal{B}([\alpha_i])=q^K_\mathcal{B}([C_i])=n^K(C_i)+\ell_D(C_i)+\alpha_i\cdot D+1=n^K(C_i)+1.
\]
We have chosen $K$ precisely so that this number is even for every $i$.
\end{proof}

\section{Kasteleyn orientations in the non-orientable case}
\label{sec:Kast-n}

We shall now generalize the methods and results of Section~\ref{sec:orientable} to the case of graphs embedded in (possibly)
non-orientable closed surfaces. Once again, all the concepts will be presented in an intrinsic way,
allowing us to give geometrical proofs with no combinatorial argument.

Let us begin with the generalization of the notion of Kasteleyn orientation.
Throughout this section, $\SI$ will designate a possibly non-orientable closed connected surface, X a cellular decomposition
of $\SI$, and $\G$ its 1-skeleton.

\subsection{Extension of the definition of a Kasteleyn orientation}

The definition of a Kasteleyn orientation on $X$ given in Section~\ref{sub:Kast} does
not make sense in the present setting, as the faces of $X$ are not oriented. We will hence work
in the orientation cover of $\SI$, that is, the 2-fold cover $\widetilde\SI\stackrel{\pi}{\to}\SI$ determined by the first Stiefel-Whitney class $w_1=w_1(\SI)\in\coho$ of $\SI$.
We shall denote by $\widetilde X$ the cellular decomposition of the orientable surface $\widetilde\SI$ induced by
$X$ and $\pi$.

The extension of the notion of a Kasteleyn orientation requires a labelling of the vertices of $\widetilde X$ with
signs, which is a little tedious to define intrinsically. Following our general approach, we will now state
this intrinsic definition, but the reader impatient to work with examples should replace this paragraph with
Remark~\ref{rem:Kast}.
Let us fix a 1-cocycle $\omega\in C^1(X;\Z_2)$ which represents $w_1$. This consists simply in a decomposition of
the edges of $\G$ into 0-edges and 1-edges, such that the local orientation of $\SI$ is preserved along a 1-cycle if and
only if this cycle contains an even number of 1-edges. The choice of such an $\omega$ determines a labelling of the vertices
of $\widetilde X$ with signs $\pm$'s such that each vertex of $X$ is covered by two vertices with opposite signs, and
$\omega(e)=0$ if and only if the two endpoints of a lift of $e$ have the same label. Note that this labelling is uniquely
determined by $\omega$ up to a global swap of all the signs. It induces an orientation on $\widetilde X$: simply
paste together a local orientation (say, counterclockwise) near the vertices labelled $+$ and the opposite (clockwise)
local orientation near the vertices labelled $-$.

Any orientation $K$ of the edges of $X$ lifts to an orientation $\widetilde K$ of the edges of $\widetilde X$.
Given a face $f$ of $X$, and a lift $\widetilde f$, consider the number
\[
c^K(\widetilde f)=
n^{\widetilde{K}}(\partial\widetilde f)+\#\{\text{edges in $\partial\widetilde f$ joining two vertices labelled $-$}\}+1,
\]
where $\partial\widetilde f$ is oriented as the boundary of the oriented face $\widetilde f$.
(As before, $n^{\widetilde{K}}(\partial\widetilde f)$ denotes the number of edges in $\partial\widetilde f$ where
$\widetilde{K}$ disagrees with the orientation of $\partial\widetilde f$.)
Using the fact that $\partial f$ contains an even number of 1-edges, one easily checks that the parity of
$c^K(\widetilde f)$ does not depend on the choice of the lift $\widetilde f$ of $f$.
For the same reason, the parity of $c^K(\widetilde f)$ is unchanged if one swaps all the signs of the vertices of
$\widetilde X$, as this also reverses the orientation of $\widetilde X$.

Therefore, the parity $c^K(f)\in\Z_2$ of the number $c^K(\widetilde f)$ only depends on $K$, $f$ and $\omega$.
By analogy with the orientable case, we shall call it the {\em Kasteleyn curvature of $K$ at $f$\/}.
An orientation $K$ is a {\em Kasteleyn orientation on $(X,\omega)$\/} if it has zero curvature.

As usual, we shall say that two orientations are {\em equivalent\/} if they can be obtained from each other by flipping
the edge orientations around a set of vertices. If $K$ is Kasteleyn, and $K'$ is equivalent to $K$, then $K'$ is obviously Kasteleyn. We shall denote by $\K(X,\omega)$ the set of equivalence classes of Kasteleyn orientation on $(X,\omega)$.

\begin{remark}
Recall that a surface $\SI$ is orientable if and only if $w_1$ vanishes.
In this case, $\omega=0$ provides a natural choice, and
a Kasteleyn orientation on $(X,0)$ is simply a Kasteleyn orientation on $X$ as defined in Section~\ref{sub:Kast}. Therefore,
$\K(X,0)$ is nothing but $\K(X)$. Once again, Sections~\ref{sec:Kast-n} to \ref{sec:Pf-n} should be understood as
a generalization of the previous one, which corresponds to the case $\omega=0$.
\end{remark}

\begin{remark}
\label{rem:Kast}
When working with examples, it is often convenient to represent the surface $\SI$ as a planar polygon $P$
with some pairs of sides identified, and to draw $\G$ in $P$ intersecting $\partial P$ transversally.
Such a representation of $X$ induces a natural cocycle $\omega\in C^1(X;\Z_2)$ representing the first
Stiefel-Whitney class of $\SI$: Let us call a side of $\partial P$ a {\em 1-side} if the corresponding identification
does not preserve the orientation of $P$. For an edge $e$ of $\G$, simply define $\omega(e)$ to be the parity of the
intersection number of $e$ with all the 1-sides of $P$.

For this $\omega$, it is easy to check whether a given orientation $K$ of $\G$ is Kasteleyn or not: Take one
counterclockwise-oriented copy of $P\supset\G$ with vertices labelled $+$, one clockwise-oriented copy of $P\supset\G$
with vertices labelled $-$, glue these two polygons along their 1-sides according to the prescribed identifications,
and make the remaining side identifications in each copy of $P$. The result is the oriented surface $\widetilde X$,
where one can compute the Kasteleyn curvature. Let us illustrate this on an example.
\end{remark}

\begin{example}
Let $P$ denote the model of the Klein bottle $\mathcal K$ given by a hexagone with sides identified according to the
word $a^2bc^2b^{-1}$. Note that the 1-sides are exactly the four sides of $\partial P$ corresponding to the letters
$a$ and $c$. Now, consider the square lattice $\G$ embedded in $P$ as illustrated in Figure~\ref{fig:Klein}, and let
$X$ denote the induced cellular decomposition of $\mathcal{K}$.
The graph $\G$ admits exactly six edges $e$ with $\omega(e)=1$. These are the six edges crossing the 1-sides of $\partial P$.
One easily checks that the orientation of the edges of $\G$ given by the arrows in Figure~\ref{fig:Klein} is
a Kasteleyn orientation on $(X,\omega)$.

\begin{figure}[htbp]
\labellist\small\hair 2.5pt
\pinlabel {$a$} at 330 420
\pinlabel {$a$} at 114 420
\pinlabel {$b$} at 465 220
\pinlabel {$b^{-1}$} at -30 220
\pinlabel {$c$} at 330 10
\pinlabel {$c$} at 114 10
\endlabellist
\centerline{\psfig{file=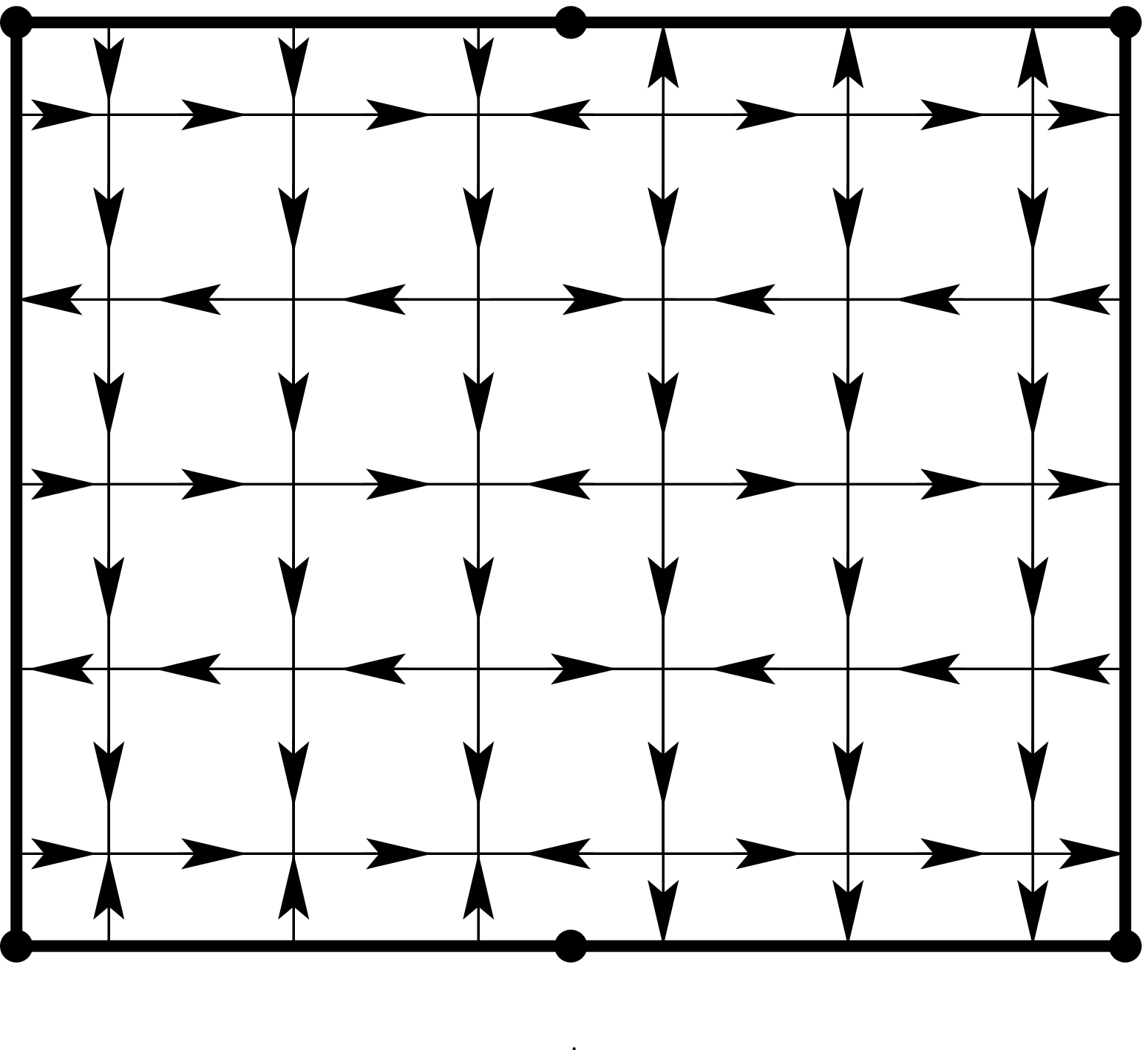,height=4cm}}
\caption{A Kasteleyn orientation on a square lattice in the Klein bottle.}
\label{fig:Klein}
\end{figure}
\end{example}

\subsection{Counting Kasteleyn orientations}

The main result of this section is the following generalization of Proposition~\ref{prop:Kast}.

\begin{theorem}
\label{thm:Kast-n}
There exists a Kasteleyn orientation on $(X,\omega)$ if and only if $X$ has an even number of vertices.
In this case, $\K(X,\omega)$ is an $H^1(\SI;\Z_2)$-torsor.
\end{theorem}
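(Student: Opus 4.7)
The strategy is to mirror the argument for the orientable case (Proposition~\ref{prop:Kast}) by treating the Kasteleyn curvature as a 2-cochain on $X$ and analysing its cohomology class. The set $\mathrm{Or}(X)$ of all orientations of $X$ is naturally an affine space over $C^1(X;\Z_2)$: a cochain $\alpha$ acts by flipping $K$ on exactly those edges $e$ with $\alpha(e)=1$. I would establish three facts: (i) the assignment $K\mapsto c^K\in C^2(X;\Z_2)$ is affine with linear part the cellular coboundary $\delta\colon C^1(X;\Z_2)\to C^2(X;\Z_2)$; (ii) the resulting cohomology class $[c^K]\in H^2(X;\Z_2)\cong H^2(\SI;\Z_2)\cong\Z_2$ equals the parity of the number $V$ of vertices; (iii) equivalence of orientations amounts to translation by $\mathrm{im}(\delta\colon C^0\to C^1)=B^1(X;\Z_2)$.

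For (i), it suffices to flip $K$ on a single edge $e$, which lifts to flipping $\widetilde K$ on both preimages of $e$ in $\widetilde X$. Since $\pi$ restricts to a homeomorphism of closed 2-cells, any lift of $e$ appears in $\partial\widetilde f$ exactly as often as $e$ appears in $\partial f$; the combined flip therefore changes $n^{\widetilde K}(\partial\widetilde f)$ by the multiplicity of $e$ in $\partial f$, which is $(\delta e^{*})(f)\pmod{2}$. The summand $V_-(\partial\widetilde f)+1$ is independent of $K$, so $c^{K+e^{*}}=c^K+\delta e^{*}$, and $\Z_2$-linearity gives (i).

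Claim (ii) is the main obstacle. By (i) the class $[c^K]$ is independent of $K$, so the content lies in computing $\sum_{f}c^K(f)\pmod{2}$ for one convenient choice of $K$ and identifying it with $V\pmod{2}$. Fixing a lift $\widetilde f$ of each $f$, this sum decomposes as $\sum_{f}n^{\widetilde K}(\partial\widetilde f)+\sum_{f}V_-(\partial\widetilde f)+F$, where the first piece encodes the orientation behaviour lifted to the double cover and the second is the correction coming from the vertex labelling. I would control this sum either by specialising to a polygonal model of $\SI$ as in Remark~\ref{rem:Kast} (where both counts can be read off directly) or by exploiting freedom in choosing $\widetilde K$ adapted to the labelling $\sigma\colon\widetilde X^0\to\Z_2$, and then matching the combined count against the Euler identity $V-E+F=\chi(\SI)$. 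The delicate point is that neither $\sum n^{\widetilde K}(\partial\widetilde f)$ nor $\sum V_-(\partial\widetilde f)$ is individually canonical (each depends on the chosen lifts), but their combination, together with the $F$ term, must collapse to $V\pmod 2$; making this combinatorial cancellation explicit, so that only the vertex count survives, is the technical heart of the argument.

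Step (iii) is formal: flipping the orientation around the set of vertices $S\subset X^0$ replaces $K$ by $K+\delta\chi_S\in\mathrm{Or}(X)$, and (i) gives $c^{K+\delta\chi_S}=c^K+\delta^2\chi_S=c^K$, so the Kasteleyn condition is preserved. Combining everything, when $V$ is odd, $[c^K]\neq 0$ obstructs the existence of a Kasteleyn orientation; when $V$ is even, the fibre $c^{-1}(0)\subset\mathrm{Or}(X)$ is a non-empty affine subspace over $Z^1(X;\Z_2)=\ker\delta$, and its quotient by $B^1(X;\Z_2)=\mathrm{im}\,\delta$ is a torsor over $Z^1/B^1=H^1(X;\Z_2)=H^1(\SI;\Z_2)$, proving the claim.
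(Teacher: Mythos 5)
Your overall architecture is exactly the paper's: the curvature $K\mapsto c^K$ is affine over $C^1(X;\Z_2)$ with linear part $\delta$, its class in $H^2(X;\Z_2)\cong\Z_2$ is the parity of $V$, and the Kasteleyn orientations modulo equivalence form a torsor over $Z^1/B^1=H^1(\SI;\Z_2)$. Your steps (i) and (iii) are correct and coincide with what the paper proves inside Theorem~\ref{thm:Kast-n} (the identity $(\delta\phi)(f)=c^K(f)+c^{K^\phi}(f)$ and the resulting free transitive action).

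The genuine gap is step (ii), which is precisely the paper's Lemma~\ref{lemma:parity}, and it is not a routine cancellation that ``must collapse to $V\pmod 2$'': you have identified it as the technical heart and then left it unproven. Your decomposition $\sum_f n^{\widetilde K}(\partial\widetilde f)+\sum_f(\text{edges joining two $-$ vertices})+F$ does not by itself interact well with Euler's formula, because an edge of $\partial f$ contributes to these counts in a way that depends on the labels of its endpoints, i.e.\ on $\omega$, not just on incidence. The paper's proof regroups the terms as $n_+^K(\partial\widetilde f)+m_-^K(\partial\widetilde f)$ plus a separate count $m_1^K(f)$ of $1$-edges oriented in a fixed direction around $f$; the first grouping is designed so that each $0$-edge contributes exactly once in the total, giving $E_0$, and the remaining identity $\sum_f m_1^K(f)\equiv E_1+\chi(\SI)\pmod 2$ is established by reducing to a triangulation and evaluating the cup product $\omega\smile\omega$ face by face, using the relation $w_1^2\equiv\chi(\SI)\pmod 2$ (equivalently $w_1^2+w_2=0$, which the paper points out is the obstruction to a $\pin$ structure). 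Neither of your two proposed routes -- specialising to a polygonal model, or adapting $\widetilde K$ to the labelling -- mentions this topological input, and without it the claimed identification of $[c^K]$ with $V\bmod 2$ does not follow; moreover the polygonal-model route would at best verify the statement for one particular cocycle $\omega$, whereas the lemma is needed for an arbitrary representative of $w_1$. Until this computation is carried out, both directions of the existence statement are unsupported.
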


The proof of the first part will rely on the following result.

\begin{lemma}
\label{lemma:parity}
Given any orientation $K$ of $\G$ and any $\omega\in C^1(X;\Z_2)$ representing $w_1$,
the sum $\sum_{f\subset X}c^K(f)$ has the same parity as the number of vertices of $\G$.
\end{lemma}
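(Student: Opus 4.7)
The plan is to prove $\sum_{f\subset X}c^K(f)\equiv V\pmod 2$ by a $K$-invariance reduction followed by a computation on the oriented double cover $\widetilde X$.

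\emph{Stage 1 ($K$-invariance).} I first claim that the parity of $\sum_f c^K(f)$ is independent of $K$. Reversing $K$ on a single edge $e$ flips both lifts $\widetilde e_1,\widetilde e_2\in\widetilde X^1$; the $M$-term is $K$-independent, and for each chosen face-lift $\widetilde f$ the quantity $n^{\widetilde K}(\partial\widetilde f)$ changes modulo $2$ by the multiplicity of $\{\widetilde e_1,\widetilde e_2\}$ in $\partial\widetilde f$. Since each appearance of $e$ in $\partial f$ lifts to exactly one appearance of a lift of $e$ in $\partial\widetilde f$, summing over $f\subset X$ yields a total change of $\sum_f(\text{mult. of $e$ in $\partial f$})=2$, which is even. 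Hence $\sum_f c^K(f)\pmod 2$ depends only on $(X,\omega)$, and I may choose any convenient $K$.

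\emph{Stage 2 (Evaluation via the double cover).} On the oriented surface $\widetilde X$, the orientable analogue of the identity gives $\sum_{\widetilde f\in\widetilde X^2}(n^{\widetilde K}(\partial\widetilde f)+1)\equiv\widetilde V=2V\equiv 0\pmod 2$ by the usual Euler-characteristic argument (each edge of $\widetilde X$ appears in two oppositely-oriented face boundaries). To extract information about the sum over chosen lifts, I pick a system of local orientations $\{o_f\}$ of the faces of $X$ and let $\widetilde f$ be the lift whose global orientation projects to $o_f$. Then $n^{\widetilde K}(\partial\widetilde f)=n^K(\partial f;o_f)$, and an edge-by-edge count on $X$ yields
\[
\sum_f n^K(\partial f;o_f)\equiv E-|\mathrm{supp}(\omega_0)|\pmod 2,
\]
where $\omega_0$ is the ``inconsistency'' $1$-cocycle of $\{o_f\}$ (automatically a representative of $w_1$). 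In parallel, the label-count identity $\sum_f M(\widetilde f)\equiv\chi(\SI)+|\mathrm{supp}(\omega_0)|\pmod 2$ follows from the key combinatorial fact that each face of $\widetilde X$ has an even number of $+-$ edges on its boundary (labels alternate only at $+-$ edges, and a closed loop must return to its starting label). Combining with the $+1$ per face and using $\chi(\SI)=V-E+F$:
\[
\sum_f c^K(f)\equiv(E-|\mathrm{supp}(\omega_0)|)+(\chi(\SI)+|\mathrm{supp}(\omega_0)|)+F\equiv E+\chi(\SI)+F\equiv V\pmod 2.
\]

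\emph{Main obstacle.} The delicate step is establishing the label identity $\sum_f M(\widetilde f)\equiv\chi(\SI)+|\mathrm{supp}(\omega_0)|\pmod 2$, which requires careful case-analysis of how the chosen face-lifts sit relative to the $\pm$-labelling of $\widetilde X^0$, particularly when an edge lies on the boundary of a single face on both sides (so the generic ``four configurations at an edge'' analysis collapses). A second subtlety is that the given cocycle $\omega$ and the inconsistency cocycle $\omega_0$ of the chosen $\{o_f\}$ need only agree in cohomology: their difference $\delta\varphi$ induces a label swap at the lifts of vertices in $\{\varphi=1\}$, which changes individual $M(\widetilde f)$'s but must be shown to alter $\sum_f M(\widetilde f)$ by an even quantity, leaving the parity (and hence the final identity) intact.
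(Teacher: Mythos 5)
Your Stage 1 is correct and matches the paper's observation that the relevant quantity is independent of $K$ (each edge appears twice, with multiplicity, in the face boundaries). Your Stage 2 bookkeeping is also consistent: granting both of your displayed identities, the parities combine to $E+\chi(\SI)+F\equiv V$. The problem is that the second identity, $\sum_f M(\widetilde f)\equiv\chi(\SI)+|\mathrm{supp}(\omega_0)|\pmod 2$, is exactly equivalent (modulo the easy steps) to the lemma itself, and you do not prove it. The justification you offer --- that each face of $\widetilde X$ has an even number of label-changing edges on its boundary --- is just the cocycle condition $\omega(\partial f)=0$ and gives face-by-face parity information only; it cannot produce the global term $\chi(\SI)$. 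In the paper this is precisely where the real topological input enters: after reducing to a triangulation and to the vertex-ordering orientation, one identifies the discrepancy $E_1+\sum_f m_1^K(f)$ with the evaluation of the cup square $\omega\smile\omega$, and then invokes $w_1^2=\chi(\SI)\bmod 2$ (Wu's formula together with $w_2\equiv\chi$). Your proposal contains no analogue of this step; the ``main obstacle'' paragraph is an acknowledgement that the heart of the argument is missing, so as written the proof is incomplete.

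A secondary but genuine issue is the object $\omega_0$. The inconsistency data of a system of local face orientations $\{o_f\}$ is naturally a $1$-cocycle on the \emph{dual} cell complex (equivalently, a $1$-cycle on $X$ Poincar\'e dual to $w_1$), not an element of $C^1(X;\Z_2)$: the cochain $e\mapsto[\,o_f,o_{f'}\text{ disagree along }e\,]$ need not satisfy $\delta\omega_0=0$ on the faces of $X$, and it is not canonically comparable to the given $\omega$, which is defined by orientation transport along paths in the $1$-skeleton. So ``$\omega_0$ is automatically a representative of $w_1$'' in the same cochain group as $\omega$ requires an additional construction, and the reconciliation you defer to the end (showing that replacing $\omega_0$ by $\omega$ changes $\sum_f M(\widetilde f)$ by an even amount) is itself nontrivial. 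To repair the argument you would either need to carry out the case analysis you postpone --- which, once organized, essentially reproduces the paper's cup-product computation --- or import $w_1^2\equiv\chi(\SI)$ by some other route.
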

\begin{proof}
Throughout this demonstration, all integers and equalities are to be considered modulo 2. Let $V$ (resp. $E_0,E_1,F$) denote
the number of vertices (resp. 0-edges, 1-edges, faces) in $(X,\omega)$. Given a face $\widetilde f$ of $\widetilde X$,
let $n_+^K(\partial\widetilde f)$ be the number of clockwise-oriented edges in $\partial\widetilde f$ joining
two vertices labelled $+$. Similarly, let $m_-^K(\partial\widetilde f)$ be the number of counterclockwise-oriented edges in
$\partial\widetilde f$ joining two vertices labelled $-$. Finally, let $m_1^K(f)$ be the number of 1-edges
in $\partial f$ oriented in a fixed direction around $f$. (Since $\partial f$ has an even number of 1-edges, this number is
independent of the choice of this direction.) Fixing a lift $\widetilde f$ of each face $f$ of $X$, we can compute
\[
\sum_{f\subset X}c^K(f)=
\sum_{\widetilde f}(n_+^K(\partial\widetilde f)+m_-^K(\partial\widetilde f))+\sum_{f\subset X}m_1^K(f)+F.
\]
Each 0-edge contributes exactly 1 in the first sum of the right-hand side,
which is therefore equal to $E_0$. It remains to check that the second sum is equal to $E_1+\chi(\SI)$, as it implies
\[
V+\sum_{f\subset X}c^K(\widetilde f)=V+E_0+E_1+\chi(\SI)+F=0.
\]
First note that $S:=E_1+\sum_{f\subset X}m_1^K(f)$ is independent of $K$: indeed, reversing $K$ along an edge
changes the contribution to $S$ of both adjacent faces. Hence, it can be assumed that $K$ is the orientation given by a
global numbering of the vertices of $X$. Furthermore, one easily checks that $S$ remains constant when
an edge $e$ is added that subdivides a face $f$ of $X$ in two. (Note that $\omega(e)$ is determined by $\omega(e')$
for $e'\subset\partial f$.)  Therefore, we can assume that $X$ is a triangulation of $\Sigma$. But for a triangular face
$f$, and with $K$ as above, the cup product $\omega\smile\omega$ satisfies
\[
(\omega\smile\omega)(f)+m_1^K(f)=
\begin{cases}
1 & \text{if $\partial f$ has two 1-edges;} \\
0 & \text{if $\partial f$ has no 1-edge.}
\end{cases}
\]
Summing over all faces, and using the fact that $w_1^2=\chi(\SI)$, we obtain the equality
\[
\chi(\SI)+\sum_{f\subset X}m_1^K(f)=E_1,
\]
which concludes the proof.
\end{proof}

\begin{proof}[Proof of Theorem~\ref{thm:Kast-n}]
Given any orientation $K$ of $\G$, let $c^K\in C^2(X;\Z_2)$ be its Kasteleyn curvature.
$K$ is Kasteleyn if and only if $c^K=0$, in which case the number of vertices $V$ of $\G$ is even by Lemma
\ref{lemma:parity}. Conversely, if $V$ is even, then $\sum_{f\subset X}c^K(f)=0$ by the same lemma. This implies that
$c^K$ is a coboundary, that is, there exists a $\phi\in C^1(X;\Z_2)$ such that $c^K=\delta\phi$.
Consider now the orientation $K^\phi$ which coincides with $K$ on an edge $e$ if and only if $\phi(e)=0$.
Given any face $f$ of $X$, we have the following equality modulo 2:
\[
(\delta\phi)(f)=\phi(\partial f)=\sum_{e\subset\partial f}\phi(e)=c^K(f)+c^{K^\phi}(f).
\]
Since $c^K=\delta\phi$, it follows that $c^{K^\phi}=0$, that is, $K^\phi$ is a Kasteleyn orientation.

Let us now prove the second statement, assuming that $\K(X,\omega)$ is non-empty.
The action of an element $[\phi]\in\coho=H^1(X;\Z_2)$ on $[K]\in\K(X,\omega)$ is defined by $[K]+[\phi]=[K^\phi]$, with
$K^\phi$ as above.
Since $\phi$ is a cocycle, the equation displayed above shows that $K^\phi$ is Kasteleyn if and only if $K$ is.
Note also that $K^\phi$ is equivalent to $K$ if and only if $\phi$ is a coboundary.
Therefore, this action of $\coho$ on $\K(X,\omega)$ is well-defined, and free. Finally, given two
Kasteleyn orientations $K$ and $K'$, let $\phi$ denote the 1-cochain taking value $0$ on an edge $e$ if and only if $K$ and
$K'$ agree on $e$. Obviously, $K'=K^\phi$, and $\phi$ is a cocycle by the identity displayed above. Therefore, the action
is freely transitive.
\end{proof}

\begin{remark}
By the proof of Lemma~\ref{lemma:parity}, the ultimate reason for the existence of a Kasteleyn orientation on $X$ is
the vanishing of the cohomology class $w_1^2+w_2$ in $H^2(\SI;\Z_2)$. This is nothing but the obstruction to the
existence of a $\pin$ structure on the manifold $\SI$.
\end{remark}

The proof of this theorem actually provides us with an algorithm to construct all equivalence classes of Kasteleyn
orientations on a given surface graph $(X,\omega)$ with an even number of vertices.
\begin{ticklist}
\item{Start with any orientation $K_0$ of $\G$, and compute its Kasteleyn curvature $c^{K_0}$, for example, using Remark
\ref{rem:Kast}.}
\item{By Lemma~\ref{lemma:parity}, $c^{K_0}(f)=1$ for an even number of faces. Pick two of them,
join their interior with a curve $\gamma$ in $\SI$ intersecting $\G$ transversally, and invert the orientation
of an edge of $\G$ each time it crosses $\gamma$. The Kasteleyn curvature of the resulting orientation vanishes
at these two faces, and remains unchanged elsewhere. This inductively leads to a Kasteleyn orientation $K$.}
\item{To construct the other Kasteleyn orientations, consider a family of closed curves
$\alpha_1,\dots,\alpha_{b_1}$ intersecting $\G$ transversally, and representing a basis of $\ho$. For any subset
$I\subset \{1,\dots,b_1\}$, let $K^I$ denote the orientation obtained from $K$ by inverting the orientation of an edge
of $\G$ each time it crosses some $\alpha_i$ with $i\in I$. These $K^I$ represent all equivalence classes of Kasteleyn orientations on $(X,\omega)$.}
\end{ticklist}

\subsection{Dependance on the choice of $\omega$}

The definition of a Kasteleyn orientation depends on the choice of the cocycle $\omega$ representing the first
Stiefel-Whitney class of $\SI$. However, two such choices can be naturally related as follows.

\begin{proposition}
\label{prop:omega}
Given $\omega,\omega'$ representing the first Stiefel-Whitney class $w_1$, there is an $\coho$-equivariant bijection
\[
\varphi_{\omega'\omega}\colon\K(X,\omega)\longrightarrow\K(X,\omega')
\]
which satisfies the relations $\varphi_{\omega\omega}=\mathit{id}$ and
$\varphi_{\omega''\omega'}\circ \varphi_{\omega'\omega}=\varphi_{\omega''\omega}$.
\end{proposition}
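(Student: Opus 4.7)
The plan is to construct $\varphi_{\omega'\omega}$ by an explicit local modification of orientations. Since $[\omega]=[\omega']=w_1$, pick $\eta\in C^0(X;\Z_2)$ with $\delta\eta=\omega'-\omega$; such an $\eta$ exists and is unique up to addition of a global constant, as $X$ is connected. Given an orientation $K$ of $\G$, let $K^\eta$ denote the orientation obtained from $K$ by reversing every edge incident to a vertex $v$ with $\eta(v)=1$ (an edge between two such vertices being reversed twice, hence left unchanged). Set $\varphi_{\omega'\omega}([K]):=[K^\eta]$.

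The crux is to verify that $K^\eta$ is Kasteleyn with respect to $\omega'$ whenever $K$ is Kasteleyn with respect to $\omega$. I would in fact establish the stronger unconditional identity $c^{K^\eta}_{\omega'}(f)\equiv c^K_\omega(f)\pmod 2$ for every face $f$ of $X$. This is best carried out by lifting to the orientation cover $\widetilde X$: passing from $\omega$ to $\omega+\delta\eta$ swaps the $\pm$-labels at precisely the vertices of $\widetilde X$ lying over $\eta^{-1}(1)$, while simultaneously $\widetilde K$ is replaced by $\widetilde{K^\eta}$, which reverses precisely the lifted edges incident to these swapped vertices. An edge-by-edge bookkeeping on $\partial\widetilde f$, tracking how the induced boundary orientation of $\widetilde f$, the $n^{\widetilde K}$-term, and the count of $(-,-)$-edges each change, should show that all these alterations cancel modulo $2$. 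The computation is cleanest reduced to the basic case $\eta=\mathbf{1}_{v_0}$, where only the two lifts of a single vertex are involved, and then iterated. This local verification is the main technical obstacle.

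Granting the key identity, the rest is formal. Well-definedness: replacing $\eta$ by $\eta+1$ yields $K^{\eta+1}$, which is related to $K^\eta$ by flipping around every vertex of $X$ -- a sequence of vertex-flip equivalences -- so $[K^\eta]=[K^{\eta+1}]$ in $\K(X,\omega')$. $H^1(\SI;\Z_2)$-equivariance is immediate, since the $H^1$-action, implemented by reversing edges along a $1$-cocycle $\phi$, commutes with the vertex-based flipping defining $K\mapsto K^\eta$: on each edge $e$, both operations are independent $\Z_2$-valued reversals. The identity $\varphi_{\omega\omega}=\id$ comes from choosing $\eta=0$, and the composition $\varphi_{\omega''\omega'}\circ\varphi_{\omega'\omega}=\varphi_{\omega''\omega}$ follows from $\delta(\eta_1+\eta_2)=\delta\eta_1+\delta\eta_2$ together with the fact that successively flipping at $\eta_1^{-1}(1)$ and $\eta_2^{-1}(1)$ agrees with flipping once at $(\eta_1+\eta_2)^{-1}(1)$, as reversals at vertices in the intersection cancel in pairs. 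Finally, if $\K(X,\omega)$ is empty, then by Theorem~\ref{thm:Kast-n} $X$ has an odd number of vertices, so $\K(X,\omega')$ is also empty and the bijection is vacuous.
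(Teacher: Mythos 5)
Your reduction to a single coboundary flip $\omega\mapsto\omega+\delta v$ is the same first step as in the paper, but the local modification of $K$ you propose is the wrong one, and the unconditional identity $c^{K^\eta}_{\omega'}(f)\equiv c^K_\omega(f)$ that you defer as ``the main technical obstacle'' is in fact false. Reversing \emph{every} edge incident to $v$ is precisely the vertex-flip that generates the equivalence relation on orientations, so $[K^\eta]=[K]$ and your $\varphi_{\omega'\omega}$ is the identity on equivalence classes of orientations; the proposition would then be asserting that the Kasteleyn condition does not depend on the representative of $w_1$, which is not true. Carrying out the bookkeeping you describe: passing from $\omega$ to $\omega+\delta v$ swaps the labels of the two lifts $v_\pm$ and changes neither the remaining labels nor the orientation of $\widetilde\SI$, hence leaves $n^{\widetilde K}(\partial\widetilde f)$ alone; an edge of $\partial\widetilde f$ incident to a lift of $v$ enters or leaves the count of $(-,-)$-edges exactly when its other endpoint is labelled $-$, and since the number of edge-ends of $\partial\widetilde f$ at $v_-$ is even, the total change of $c^K(\widetilde f)$ is
\[
\#\{(v_+,w_-)\}+\#\{(v_-,w_-)\}\equiv\#\{(v_+,w_-)\}+\#\{(v_-,w_+)\}
=\#\{\text{$1$-edges of $\partial f$ incident to $v$}\}\pmod 2 ,
\]
where $\#\{(v_\pm,w_\mp)\}$ counts the edges of $\partial\widetilde f$ joining a lift of $v$ with the indicated label to a vertex with the indicated label. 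This is odd in general: realize $\R P^2$ as a disc with antipodal boundary identification, let $\G$ be a square whose four corners are joined pairwise by two edges passing through the boundary circle, and let $f$ be one of the two faces whose boundary meets a corner $v$ in one $0$-edge and one $1$-edge; then $c^K_{\omega+\delta v}(f)=c^K_\omega(f)+1$. Your move changes $n^{\widetilde K}(\partial\widetilde f)$ by the total number of edge-ends of $\partial\widetilde f$ lying over $v$, which is even, so it compensates nothing, and $K^\eta$ is not Kasteleyn for $(X,\omega')$.

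The correct local move, which is the one the paper uses, reverses only those edges $e$ incident to $v$ with $\omega(e)=1$: this changes $n^{\widetilde K}(\partial\widetilde f)$ by exactly the displayed quantity and restores zero curvature. Be aware that after this correction your remaining formal arguments also need repair, because they exploited the fact that your move did not depend on $\omega$: the set of $1$-edges changes after each vertex flip, so the composition of the corrected moves over $\eta_1^{-1}(1)$ and then $\eta_2^{-1}(1)$ is no longer literally a single flip over $(\eta_1+\eta_2)^{-1}(1)$, and independence of the choice of $\eta$ versus $\eta+1$ is no longer an instance of a global vertex flip. Both points reduce to the commutativity and involutivity of the single-vertex maps $\varphi_v$, which is exactly what the paper verifies.
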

\begin{proof}
Let $\omega,\omega'\in C^1(X;\Z_2)$ be two representatives of $w_1$. Since $\omega$ and $\omega'$ are cohomologous,
they can be obtained from one another by flipping all $0$'s and $1$'s around the vertices in some set $S$. To prove the
proposition, it is enough to check that flipping around one vertex $v$ induces an $\coho$-equivariant map
$\varphi_v\colon\K(X,\omega)\to\K(X,\omega+\delta v)$
such that $\varphi_v\circ\varphi_v=\mathit{id}$ and $\varphi_v\circ\varphi_{v'}=\varphi_{v'}\circ\varphi_v$
for any two vertices $v,v'$. Indeed, we can then define $\varphi_{\omega'\omega}$ as the composition (in any order)
of all the $\varphi_v$'s with  $v\in S$.

Let $\varphi_v$ be defined by $\varphi_v([K])=[K']$, where $K'$ agrees with $K$ on an edge $e$ unless $v\in\partial e$
and $\omega(e)=1$. It is easy but tedious to check that if $K$ is Kasteleyn on $(X,\omega)$, then $K'$ is Kasteleyn on
$(X,\omega+\delta v)$. On the other hand, it is then obvious that $\varphi_v$ is a well-defined equivariant map. The identity
$\varphi_v\circ\varphi_{v'}=\varphi_{v'}\circ\varphi_v$ is also immediate. Finally, $\varphi_v\circ\varphi_v$ maps $[K]$ to
the class of $K''$, the orientation obtained from $K$ by flipping the orientations of all the edges adjacent to $v$.
Hence, $K''$ and $K$ are equivalent, so $\varphi_v\circ\varphi_v$ is the identity.
\end{proof}

\section{Kasteleyn orientations as discrete pin$^-$ structures}
\label{sec:pin}

\subsection{Basic facts about pin$^-$ structures}

We shall now informally review several general facts about $\pin$ structures, which are the natural generalization of
spin structures to non-orientable manifolds. We refer to \cite{K-T} for details and proofs.

Recall that $\mathit{Pin}^-(n)$ is a topological group which is a double cover of the orthogonal group $O(n)$. A
{\em $\mathit{pin}^-$ structure}
on an n-dimensional Riemannian manifold $M$ is a $\pin$ structure on its frame bundle $P_O\to M$,
that is, a principal $\mathit{Pin}^-(n)$-bundle $P\to M$ together with a 2-fold covering map $P\to P_O$ which restricts to
$\mathit{Pin}^-(n)\to O(n)$ on each fiber. The obstruction to putting a $\pin$ structure on $M$ is
$w_2+w_1^2\in H^2(M;\Z_2)$. If this class vanishes, then the set $\mathrm{Pin}^{-}(M)$ of equivalence classes of
$\pin$ structures on $M$ is an $H^1(M;\Z_2)$-torsor.
The following special case of \cite[Lemma 1.7]{K-T} will be essential for our purpose: there is an $H^1(M;\Z_2)$-equivariant
bijection between $\mathrm{Pin}^{-}(M)$ and the set of equivalence classes of spin structures on $\xi\oplus\det\xi$, where
$\xi$ denotes the tangent bundle of $M$ and $\det\xi$ the determinant line bundle. (Note that $\det\xi$ is simply
the line bundle corresponding to the orientation cover $\widetilde{M}\to M$ viewed as a principal $O(1)$-bundle.)

The 2-dimensional case is particularly easy to deal with.
First of all, any compact surface $\SI$ admits a $\pin$ structure, as $w_2$ and $w_1^2$ are both equal to the Euler
characteristic of $\SI$ modulo 2. Hence, the set $\mathrm{Pin}^-(\SI)$ is an $\coho$-torsor.
Furthermore, a spin structure on $\xi\oplus\det\xi$ is nothing but a trivialisation of this bundle. Let
$\lambda\colon E(\lambda)\to\SI$ denote the determinant line bundle, and let $p\colon TE(\lambda)\to E(\lambda)$ be
the tangent bundle of its total space. By the following commutative diagram of bundles,
\[
\xymatrix{
TE(\lambda) \ar[r] \ar[d]_p & E(\xi\oplus\lambda) \ar[d]^{\xi\oplus\lambda}\\
E(\lambda) \ar[r]^\lambda & \SI
}
\]
$\xi\oplus\lambda$ is the restriction of the tangent bundle of $E(\lambda)$ to $\SI$. (Here, $\SI$ embeds in
$E(\lambda)$ as the 0-section of $\lambda$.) Therefore, a $\pin$ structure on a surface $\SI$ is a trivialisation
over $\SI$ of the vector bundle $TE(\lambda)\to E(\lambda)$.

Finally, Johnson's theorem \cite{Jo} generalizes to non-orientable surfaces as follows. (Again, we refer to \cite{K-T}
for a proof.) A function $q\colon\ho\to\Z_4$
is called a {\em quadratic enhancement\/} of the intersection form if $q(x+y)=q(x)+q(y)+2(x\cdot y)$ for
all $x,y\in\ho$, where $\cdot$ denotes the intersection form, and $2\colon\Z_2\to\Z_4$ the inclusion homomorphism.
One easily checks that the set $\mathrm{Quad}(\SI)$ of such quadratic enhancements admits a freely transitive action of
$\coho=\Hom(\ho;\Z_2)$ given by $\phi\ast q=q+2\phi$. The statement generalizing Johnson's theorem is the following:
There is an $\coho$-equivariant bijection $\mathrm{Pin}^-(\SI)=\mathrm{Quad}(\SI)$.

More explicitely, consider an element of $\mathrm{Pin}^-(\SI)$, that is, a trivialisation of the tangent bundle
$p\colon TE(\lambda)\to E(\lambda)$ over $\SI$. Then, the corresponding quadratic enhancement $q\colon\ho\to\Z_4$
is determined by its value on the class of an embedded circle $C$ in $\SI$; this value is obtained as follows.
Let $\tau$ denote the restriction of $p$ to $p^{-1}(C)$. Obviously, the $\pin$ structure induces a trivialisation $s$ of
$\tau$. Note also that $\tau=TC\oplus\nu(C\subset\SI)\oplus\nu(\SI\subset E(\lambda))$, where $\nu$ denotes the normal
bundle. Pick $x\in C$, and orient these three line bundles at $x$ so that the induced orientation on $\tau$ agrees with
the one given by $s$. Now, the orientation of $TC$ determines a trivialisation $\sigma$ of this line bundle. Pick a framing
$s'$ of $\nu(C\subset\SI)\oplus\nu(\SI\subset E(\lambda))$ such that $\sigma\oplus s'$ is homotopic to $s$. Then, $q([C])$
is given by the class modulo 4 of $h_{s'}(C)+2$, where $h_{s'}(C)$ denotes the number of right half twists that
$\nu(C\subset\SI)$ makes with respect to $s'$ in a complete traverse of $C$.

\subsection{Encoding a pin$^-$ structure}

Let us try to encode combinatorially a $\pin$ structure on a surface $\SI$, that is, a trivialisation of
$TE(\lambda)\to E(\lambda)$ over $\SI$. First note that, if $\varphi\colon\widetilde\SI\to\widetilde\SI$ denotes the
involution of the orientation cover of $\SI$, then $E(\lambda)$ can be expressed as the quotient of $\widetilde\SI\times\R$
by the action of $\varphi\times -\id$. Therefore, a $\pin$ structure on $\SI$ is equivalent to a trivialisation of
$T\widetilde\SI\times\R\to\widetilde\SI$, invariant under the action of $d\varphi\times-\id$. This is what we will encode.

Fix a cellular decomposition $X$ of $\SI$ and a representative $\omega\in C^1(X;\Z_2)$ of $w_1$. As mentioned
in Section~\ref{sec:Kast-n}, $\omega$ determines a labelling of the vertices of $\widetilde X$ with signs,
such that $\omega(e)=0$ if and only if the two endpoints of a lift of $e$ have the same label.
This in turn induces an orientation on $\widetilde X$.

$\bullet$
To construct a framing of $T\widetilde\SI\times\R\to\widetilde\SI$ over $\widetilde X^0$, fix a dimer configuration $D$
on $X^1$. This lifts to a dimer configuration $\widetilde D$ on $\widetilde X^1$ which determines a (unit length)
vector field $s_1$ along vertices of $\widetilde{X}$. It can be completed by a vector field $s_2$, so that
$(s_1(x_+),s_2(x_+))$ is a positive orthonormal basis of $T_{x_+}\widetilde\SI$ and
$(s_1(x_-),s_2(x_-))$ is a negative orthonormal basis of $T_{x_-}\widetilde\SI$. Setting $s_3(x_{\pm})=\pm 1$
gives a framing $s=(s_1,s_2,s_3)$ of $T\widetilde\SI\times\R\to\widetilde\SI$ over $\widetilde X^0$,
which is clearly invariant under the action of $d\varphi\times -\id$.

\begin{figure}[htbp]
\labellist\small\hair 2.5pt
\pinlabel {$K$} at 315 60
\pinlabel {$D$} at 120 35
\pinlabel {$D$} at 535 130
\pinlabel {$x$} at 135 100
\pinlabel {$y$} at 525 80
\pinlabel {$\SI$} at 100 140
\pinlabel {$\widetilde{\SI}$} at 100 370
\pinlabel {$x_+$} at 135 325
\pinlabel {$y_+$} at 525 300
\pinlabel {$s_1$} at 110 270
\pinlabel {$s_2$} at 193 292
\pinlabel {$s_3$} at 162 395
\endlabellist
\centerline{\psfig{file=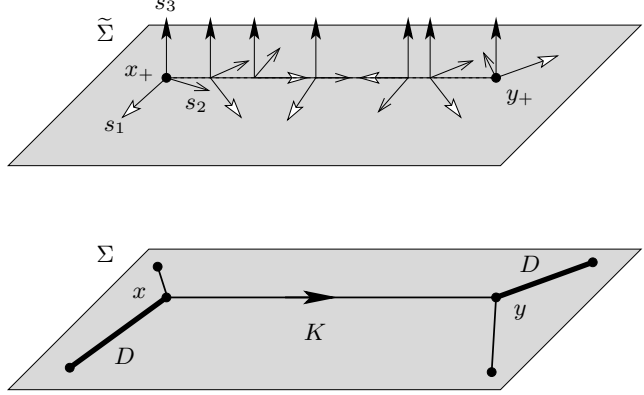,height=5cm}}
\caption{Extension of the framing along a 0-edge.}
\label{fig:0-edge}
\end{figure}

$\bullet$
To extend this framing to the 1-skeleton of $\widetilde{X}$, fix an orientation $K$ of the edges of $X^1$. If $e$ denotes
a 0-edge oriented from a vertex $x$ to a vertex $y$, then the framing along the lift $\widetilde{e}$ between $x_\pm$ and
$y_\pm$ is defined as follows. Moving along $\widetilde{e}$, first make a right-hand rotation of the framing around the axis
$s_3$ until $s_1$ points in the direction of the edge; then make a left-handed half twist around $s_3$ so that
$s_1$ points backward; finally, make a right-hand rotation around $s_3$ until $s_1$ coincides with
$s_1(y_\pm)$. This construction is illustrated in Figure~\ref{fig:0-edge}.

If $e$ denotes a 1-edge oriented from $x$ to $y$, then the framing along $\widetilde{e}$ between $x_\pm$ and
$y_\mp$ is defined as follows. First make a right-hand rotation around the axis $s_3$ until $s_1$ points in
the direction of the edge; then make a right-handed half twist around $s_2$; finally, make a left-hand rotation
around $s_3$ until $s_1$ coincides with $s_1(y_\mp)$. This is illustrated in Figure~\ref{fig:1-edge}.

\begin{figure}[htbp]
\labellist\small\hair 2.5pt
\pinlabel {$K$} at 380 40
\pinlabel {$D$} at 170 95
\pinlabel {$D$} at 590 36
\pinlabel {$x$} at 127 62
\pinlabel {$y$} at 634 62
\pinlabel {$\SI$} at 80 115
\pinlabel {$\widetilde{\SI}$} at 80 305
\pinlabel {$x_+$} at 135 230
\pinlabel {$y_-$} at 625 260
\pinlabel {$s_1$} at 125 295
\pinlabel {$s_2$} at 70 225
\pinlabel {$s_3$} at 150 335
\endlabellist
\centerline{\psfig{file=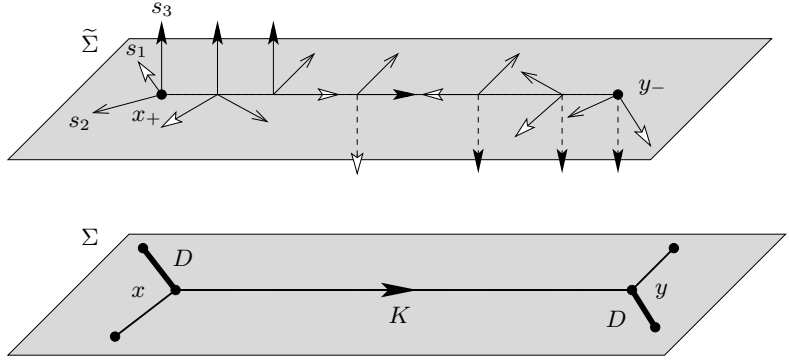,height=4.5cm}}
\caption{Extension of the framing along a 1-edge.}
\label{fig:1-edge}
\end{figure}

$\bullet$
Given a cocycle $\omega$, a dimer configuration $D$ on $X^1$ and an orientation $K$ of the edges of $X^1$, we now have
a well-defined framing $s=s(\omega,D,K)$ of $T\widetilde\SI\times\R\to\widetilde\SI$ over $\widetilde X^1$,
invariant under the action of $d\varphi\times -\id$, which we wish to extend to the whole of $\widetilde X$. 
Let $\widetilde f$ be a face of $\widetilde{X}$, and let us
fix a constant framing of $T\widetilde{\SI}\times\R$ over $\widetilde f$. Then, the restriction of $s$ to
$\partial\widetilde f$ defines a loop $s(\partial\widetilde f)$ in $SO(3)$.
The framing $s$ extends to $\widetilde f$ if and only if the homotopy class $[s(\partial\widetilde f)]$
is trivial in $\pi_1(SO(3))=\Z_2$. We shall simply denote by $[\widetilde{f}]$ this class in $\Z_2$.

\begin{proposition}
\label{prop:curvature}
Given any face $f$ of $X$ and any lift $\widetilde{f}$ of $f$ in $\widetilde{X}$, $[\widetilde f]$ is equal
to the Kasteleyn curvature $c^K(f)\in\Z_2$. Hence, $s$ extends to $\widetilde{X}$ if and only if $K$ is a Kasteleyn
orientation.
\end{proposition}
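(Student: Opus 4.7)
My plan is to prove $[\widetilde f] = c^K(f)$ by (i) showing both $\Z_2$-valued quantities change in lockstep under a local modification of $K$, and (ii) matching them in one convenient standard configuration.

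For (i), I will show that flipping $K$ on any single edge $e \subset \partial f$ changes each side by $1$. For $c^K(f)$ this is immediate, since reversing $\widetilde K$ on the lift $\widetilde e \subset \partial \widetilde f$ toggles exactly one term in $n^{\widetilde K}(\partial \widetilde f)$ while the signed-vertex count is unaffected. For $[\widetilde f]$, the framings $s, s'$ associated to $K, K'$ agree outside $\widetilde e$, so the two boundary loops in $SO(3)$ differ by a small loop $\gamma_e$ obtained by traversing $\widetilde e$ with the $K$-construction and returning with the $K'$-construction. A direct inspection of Figures~\ref{fig:0-edge} and \ref{fig:1-edge} will show that $\gamma_e$ is a full $2\pi$ rotation — about $s_3$ for a 0-edge, and about an axis in the $(s_2,s_3)$-plane for a 1-edge — hence represents the nontrivial class of $\pi_1(SO(3)) = \Z_2$.

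By (i), it suffices to compare the two sides for one standard orientation of $\partial \widetilde f$. I orient every edge of $\partial \widetilde f$ counterclockwise around $\widetilde f$. Then $n^{\widetilde K}(\partial \widetilde f) = 0$, so $c^K(f) = m + 1 \pmod 2$, where $m$ counts edges of $\partial \widetilde f$ joining two $-$-labelled vertices. I will then compute $[s(\partial \widetilde f)] \in \Z_2$ directly by decomposing $\partial \widetilde f$ into maximal runs of 0-edges, each living in a single sheet ($+$ or $-$) and separated by 1-edges (whose number is even, since $\omega$ is a cocycle). Within each $+$-run the framing lies in $SO(2) \subset SO(3)$ and contributes the usual 2D radial rotation coming from the proof of Lemma~\ref{lemma:index}; a $-$-run contributes an analogous but sign-flipped quantity; and each 1-edge contributes a half-twist around $s_2$. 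A careful accounting of these pieces should sum to $m + 1 \pmod 2$.

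The main obstacle will be precisely this last accounting in the standard case. My plan is to first homotope the framing on each 0-edge run so that $s_1$ points radially outward from $\widetilde f$ — which does not affect the homotopy class of the loop, by the same argument as in Lemma~\ref{lemma:index} — and then explicitly compose the resulting sectorial contributions with the half-twists around $s_2$ contributed by the 1-edges. With the $\pm$-sheet labelling preserved along 0-edges and flipped at each 1-edge, this composition should reduce to the count $m$ of edges joining two $-$-vertices, augmented by the Euler-type global $+1$. Establishing this identification rigorously, keeping track of the various right- versus left-hand rotations dictated by the sheet label, is the heart of the proof.
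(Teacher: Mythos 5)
Your step (i) --- that flipping $K$ on a single boundary edge toggles both $c^K(f)$ and $[\widetilde f]$ --- is exactly the reduction the paper makes, and your normalization is consistent with it: reducing to the all-counterclockwise configuration, the target identity becomes $[\widetilde f]=m+1$ with $m$ the number of edges joining two $-$-vertices, which agrees with the paper's normalization to the ``well-oriented'' configuration (counterclockwise except on edges joining two $-$-vertices) where one must show $[\widetilde f]=1$. So the two set-ups differ only by $m$ further applications of the flip lemma.

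The divergence, and the problem, is in step (ii). The paper does \emph{not} compute the holonomy of the boundary loop directly: it proves $[\widetilde f]=1$ for well-oriented faces by induction on the number of boundary edges, subdividing a face with $n\ge 4$ edges by an interior edge into two smaller well-oriented faces (whose classes add), with only the $n=2$ and $n=3$ cases checked by hand. Your plan instead is a direct accounting of ``contributions'' of $0$-edge runs and $1$-edge half-twists to $\pi_1(SO(3))=\Z_2$, and you yourself defer this as ``the heart of the proof.'' That is precisely where the difficulty sits: an open arc in $SO(3)$ has no well-defined class in $\pi_1(SO(3))$, and the pieces you want to sum involve rotations about \emph{different} axes ($s_3$ for the radial rotation along $0$-edge runs, $s_2$ for the half-twists on $1$-edges), which do not commute. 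To make ``each piece contributes $x$'' meaningful you would need to lift the entire boundary loop to $SU(2)$ and multiply the resulting unit quaternions in order, checking that the product is $-1$ exactly when $m$ is even; this is a genuine computation that is not reducible to citing Lemma~\ref{lemma:index} (which only gives the winding number of the \emph{closed} loop in the orientable, $\omega=0$ case). As written, the proposal correctly identifies the strategy and the correct answer but does not establish the base-case identity; either carry out the quaternionic bookkeeping explicitly, or replace step (ii) by the paper's subdivision induction, which avoids the issue entirely by only ever comparing classes of closed loops.
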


\begin{proof}
Given a face $f$ of $X$, recall that
\[
c^K(f)=
n^{\widetilde{K}}(\partial\widetilde f)+
\#\{\text{edges in $\partial\widetilde f$ joining two vertices labelled $-$}\}+1\in\Z_2,
\]
where $\widetilde f$ is any lift of $f$.
First observe that $[\widetilde{f}]$ changes when $K$ is inverted on one edge of $\partial f$. Since the same obviously
holds for $c^K(f)$, it may be assumed that all edges in $\partial\widetilde f$ are oriented counterclockwise, except those joining two vertices labelled with $-$. (In this case, we shall say that $\partial\widetilde f$ is well-oriented.)
It remains to check that $[\widetilde{f}]=1$ whenever $\partial f$ is well-oriented.
Let us prove this by induction on $n$, the number of edges in $\partial\widetilde f$.
A face $f$ with $n=2$ boundary edges is well-oriented if and only if these edges do not have the same orientation.
If they do have the same orientation, then the framing obviously extends to the whole of $\widetilde f$, so that
$[\widetilde{f}]=0$. By the observation above, it follows that $[\widetilde{f}]=1$ if $\partial f$ is well-oriented.
The case $n=3$ can be checked by direct inspection. Consider now
a face $\widetilde f$ with $n\ge 4$ boundary edges. Using one more time the observation above, we have
\[
[\widetilde{f}]=
\left[\begin{array}{c}\includegraphics[height=0.3in]{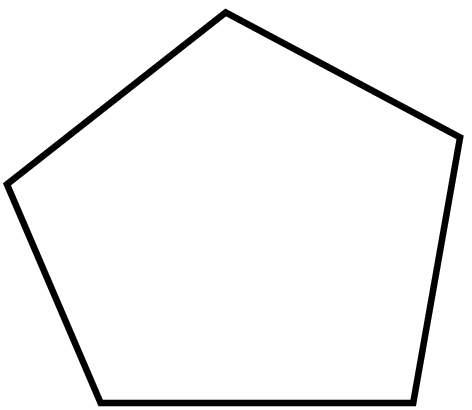}\end{array}\right]=
\left[\begin{array}{c}\includegraphics[height=0.3in]{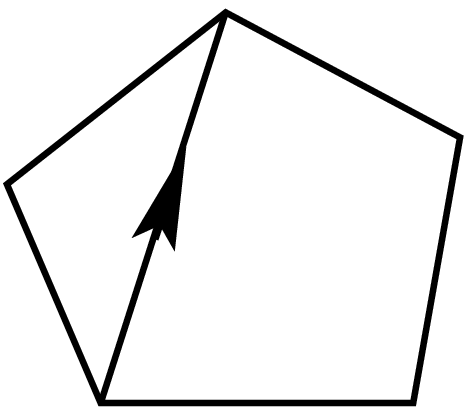}\end{array}\right]=
\left[\begin{array}{c}\includegraphics[height=0.3in]{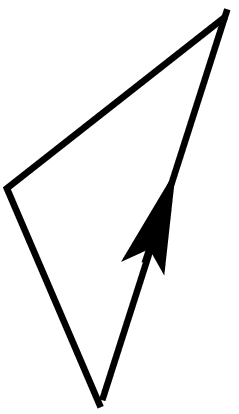}\end{array}\right]+
\left[\begin{array}{c}\includegraphics[height=0.3in]{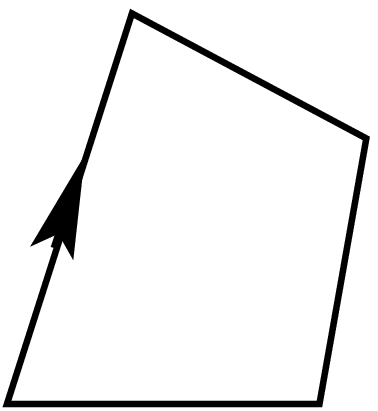}\end{array}\right]=
\left[\begin{array}{c}\includegraphics[height=0.3in]{c8.eps}\end{array}\right]+
\left[\begin{array}{c}\includegraphics[height=0.3in]{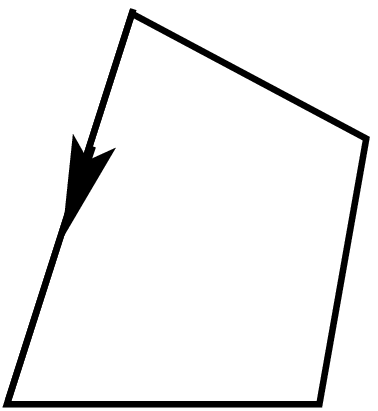}\end{array}\right]+1,
\]
where the orientation of the new edge is chosen so that the resulting triangle is well-oriented. By induction,
$[\widetilde{f}]=1+1+1=1$, and we are done.
\end{proof}

Therefore, a dimer configuration $D$ on $X^1$ and a Kasteleyn orientation $K$ on $(X,\omega)$ determine a $\pin$ structure
on $\SI$, that is, a quadratic enhancement that we shall denote by $q_D^{K,\omega}\colon\ho\to\Z_4$.
It is characterized by the following property.

\begin{proposition}
\label{prop:quad-en}
Let $C$ be an oriented simple closed curve on $X^1$. Then,
\[
q_D^{K,\omega}([C])=2(n^K(C)+\ell^\omega_D(C)+1)+\omega(D\cap C)-\omega(C\setminus D)\pmod{4},
\]
where $\ell^\omega_D(C)$ denotes the number of vertices $x$ in $C$ such that the following condition holds:
$(C,D)$ induces a local orientation at $x\in\SI$ which lifts to the positive orientation of $\widetilde{\SI}$ at $x_+$
(or equivalently, to the negative one at $x_-$).
\end{proposition}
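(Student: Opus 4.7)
The plan is to mimic the proof of Lemma~\ref{lemma:quadratic} from the orientable case, namely to verify that both sides of the claimed equality transform identically under elementary local modifications of $K$ and $D$, and then to check the equality in a canonical configuration reachable from any initial data by such modifications. Recall from Section~\ref{sec:pin} that $q_D^{K,\omega}([C])$ is obtained by restricting the framing $s=s(\omega,D,K)$ to the pullback of $C$, and is given by $h_{s'}(C)+2\pmod 4$, where $h_{s'}(C)$ counts the right half-twists of $\nu(C\subset\SI)$ with respect to a normal framing $s'$ compatible with the tangent direction of $C$.

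First I would show that reversing $K$ on a single edge of $C$ changes both sides by $2\pmod 4$. On the right-hand side, only $n^K(C)$ is affected, changing by one, so $2n^K(C)$ changes by $\pm 2\equiv 2\pmod 4$. Geometrically, reverting Figure~\ref{fig:0-edge} or \ref{fig:1-edge} on that edge adds a full rotation of $(s_1,s_2)$ around the $s_3$-axis, contributing $\pm 2$ right half-twists to $h_{s'}(C)$. Hence we may reduce to the case $n^K(C)=0$, with $K$ orienting every edge of $C$ in the direction of traversal.

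Next I would analyse the effect of modifying the dimer at a vertex $x\in C$. Rotating the direction of $s_1$ at $x_{\pm}$ shifts $h_{s'}(C)\pmod 4$ by an integer depending on whether the adjacent edges of $C$ are $0$-edges or $1$-edges; a case analysis shows that this shift agrees precisely with the change of $2\ell^{\omega}_D(C)+\omega(D\cap C)-\omega(C\setminus D)\pmod 4$. Using these moves one may assume $\ell^\omega_D(C)=0$, all dimers touching $C$ sitting on a fixed preferred side. Combined with the previous step, the entire right-hand side is then reduced to $\omega(D\cap C)-\omega(C\setminus D)+2\pmod 4$, and it remains to compute $h_{s'}(C)$ in this canonical setting.

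In the canonical setting the framing along the lift of $C$ in $\widetilde X^1$ is completely explicit: by Figure~\ref{fig:0-edge} each 0-edge in $C$ contributes no net half-twist of $\nu(C\subset\SI)$ around the tangent direction of $C$, whereas by Figure~\ref{fig:1-edge} each 1-edge contributes exactly one right half-twist around the $s_2$-axis, with sign determined by the dimer orientation at its endpoints. Summing these contributions and adding the constant $+2$ yields precisely $\omega(D\cap C)-\omega(C\setminus D)+2\pmod 4$, matching the reduced right-hand side. The main obstacle is the bookkeeping in the reduction step: the $\pm$ labelling of $\widetilde X^0$ flips across 1-edges, and this interacts with the orientation of $\nu(C\subset\SI)$ used in the definition of $h_{s'}$ in a way that mixes the terms $2\ell^\omega_D(C)$ and $\omega(D\cap C)-\omega(C\setminus D)$; once this interaction is carefully unwound, the rest of the argument proceeds as in the orientable case.
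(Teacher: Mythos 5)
Your proposal follows essentially the same route as the paper's proof: both sides are shown to change by the same amount under reversal of $K$ on a single edge of $C$ (each by $2\bmod 4$, via the $\pi_1(SO(3))$ observation) and under local modification of the dimer at a vertex of $C$, after which the identity is verified in the canonical configuration where $n^K(C)=0$, $\ell^\omega_D(C)=0$ and no dimer lies on $C$, with the $0$-edges contributing a nullhomotopic framing and each $1$-edge contributing one right half-twist. The one point worth making explicit in a write-up is that the reduction moves leave the class of Kasteleyn orientations, so the identity must be stated and proved for arbitrary orientations $K$, exactly as the paper does.
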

\begin{proof}
Let $s=s(\omega,D,K)$ denote the framing of $T\widetilde\SI\times\R\to\widetilde\SI$ over a lift $\widetilde{C}$ of
$C$, as constructed above. Let $s'$ be a trivialisation of
$\nu(\widetilde{C}\subset\widetilde{\SI})\oplus\nu(\widetilde{\SI}\subset\widetilde{\SI}\times\R)$ such that $s$ is homotopic
to $\sigma\oplus s'$, where $\sigma$ denotes the trivialisation of $T\widetilde{C}$ given by the orientation of
$\widetilde{C}$. Then, $q_D^{K,\omega}([C])$ is equal to the class modulo 4 of $h_{s'}(\widetilde{C})+2$, where
$h_{s'}(\widetilde{C})$ denotes the number of right half twists that $\nu(\widetilde C\subset\widetilde\SI)$
makes with respect to $s'$ in a complete traverse of $\widetilde C$. Hence, we are left with the proof of the equality
\[
h_{s'}(\widetilde{C})\equiv 2(n^K(C)+\ell^\omega_D(C))+\omega(D\cap C)-\omega(C\setminus D)\pmod{4}.\tag{$\ast$}
\]
Note that this equation makes sense for any orientation $K$ of the edges of $X^1$, not only for Kasteleyn orientations;
we shall prove it for every orientation.
Note also that if $s$ and $s_0$ denote two framings over $\widetilde{C}$, then they define a loop in $SO(3)$, and the
difference $h_{s_0'}(\widetilde{C})-h_{s'}(\widetilde{C})$ is equal to twice the class of this loop in $\pi_1(SO(3))=\Z_2$.
In particular, if $K_0$ is obtained by reversing the orientation $K$ on one edge of $C$, and $s,s_0$ denote the induced
framings over $\widetilde{C}$, then $h_{s_0'}(\widetilde{C})-h_{s'}(\widetilde{C})=2$ by the observation at the beginning
of the proof of Proposition~\ref{prop:curvature}.
Hence, $h_{s'}(\widetilde C)$ changes by $\pm 2$ when $K$ is inverted along one edge of $C$.
Since the same obviously holds for the right-hand side of $(\ast)$, it may be assumed that $n^K(C)=0$.
Furthermore, $h_{s'}(\widetilde C)$ also changes by $\pm 2$ when a dimer of $\widetilde{D}$ pointing out to the left of
$\widetilde C$ at a vertex $x_+$ (resp. to the right of $\widetilde C$ at a vertex $x_-$) is replaced by a dimer pointing
out to the right of $\widetilde C$ (resp. to the left). This follows from the following computation, which makes use of
Proposition~\ref{prop:curvature}:
\[
h_{s'}\left(\begin{array}{c}\includegraphics[width=0.5in]{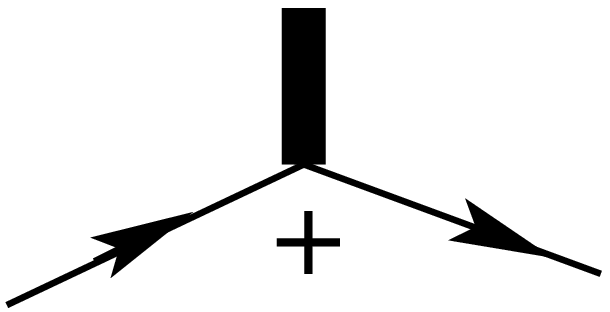}\end{array}\right)-
h_{s_0'}\left(\begin{array}{c}\includegraphics[width=0.5in]{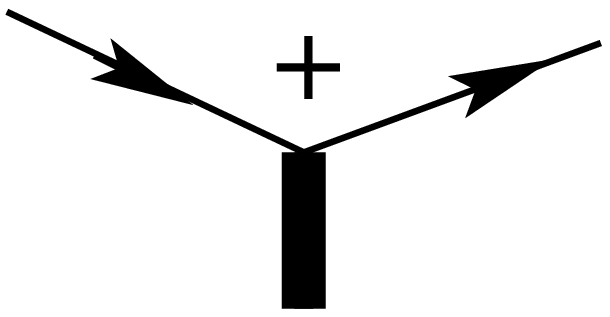}\end{array}\right)=
2\left[\begin{array}{c}\includegraphics[width=0.5in]{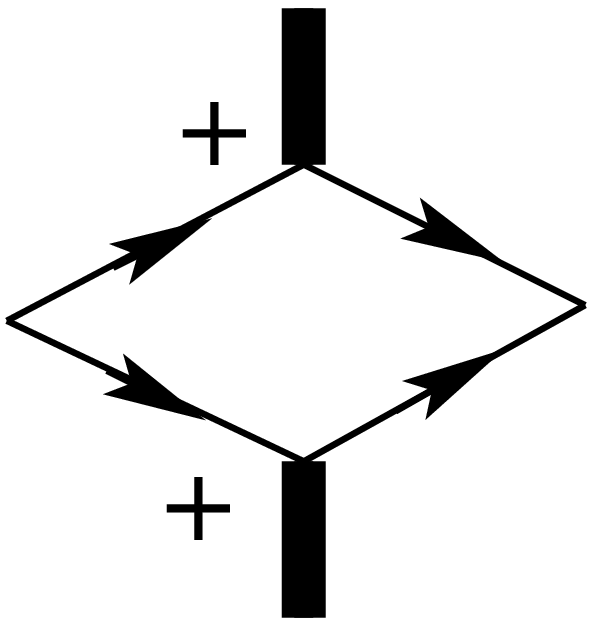}\end{array}\right]=
2\, c^K\left(\begin{array}{c}\includegraphics[width=0.5in]{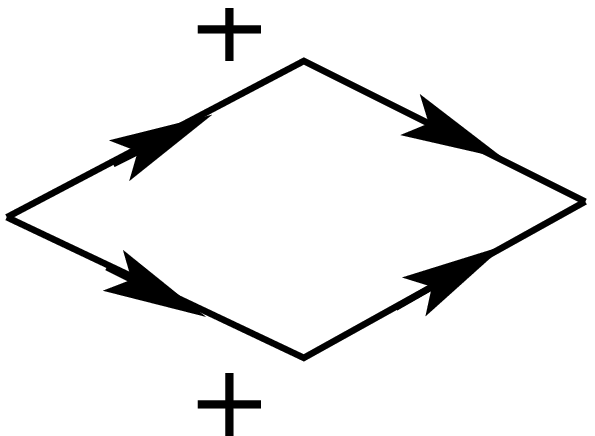}\end{array}\right)=2.
\]
Since the same holds for the right-hand side of $(\ast)$,
it may be assumed that $\ell^\omega_D(C)=0$. Similar arguments allow us to
assume that no dimer of $D$ lies in $C$, so that $\omega(D\cap C)=0$ and $\omega(C\setminus D)=\omega(C)$.
Hence, it may be assumed that $K$ agrees everywhere with the orientation on $C$, and that the dimer
of $D$ adjacent to a vertex $x\in C$ always lies outside $C$, so that the local orientation at $x\in C$ induced by $(C,D)$
lifts to the negative orientation at $x_+$ (and the positive one at $x_-$). But in this case, the framing constructed
in Figure~\ref{fig:0-edge} along the 0-edges is homotopic to a constant framing.
Also, the framing constructed in Figure~\ref{fig:1-edge} along the 1-edges is homotopic to a framing of the form
$\sigma\oplus s'$, with $s'$ making one right half twist along each 1-edge.
Therefore, $h_{s'}(\widetilde{C})=-\omega(C)$, and the proposition is proved.
\end{proof}

\subsection{The correspondence theorem}

We can now state our correspondence theorem, which generalizes Theorem~\ref{thm:corr} to the (possibly) non-orientable case.

\begin{theorem}
\label{thm:n-corr}
Let $X$ be a cellular decomposition of a closed surface $\SI$, and let $\omega\in C^1(X;\Z_2)$ be a representative of the
first Stiefel-Whitney class $w_1$. Then, any dimer configuration $D\in\D(X^1)$ induces an
$H^1(\SI;\Z_2)$-equivariant bijection
\[
\psi^\omega_D\colon\K(X,\omega)\to\mathrm{Quad}(\SI)=\mathrm{Pin}^-(\SI),\quad [K]\mapsto q_D^{K,\omega}
\]
from the set of equivalence classes of Kasteleyn orientations on $(X,\omega)$ to the set of equivalence classes of
$\mathit{pin}^-$ structures on $\SI$. Furthermore, given another dimer configuration $D'\in\D(X^1)$, $\psi^\omega_{D'}$ is obtained from $\psi^\omega_D$ by action of the Poincar\'e dual to $[D+D']\in H_1(\SI;\Z_2)$.
Finally, given another representative $\omega'$ of $w_1$, the following  diagram is commutative,
\[
\xymatrix{
\K(X,\omega)
\ar[rr]^{\varphi_{\omega'\omega}}
\ar[dr]_{\psi^\omega_D}
&& \K(X,\omega')
\ar[dl]^{\psi^{\omega'}_D} \\
& \mathrm{Pin}^-(\SI)} \\
\]
where $\varphi_{\omega'\omega}$ is the equivariant bijection defined in Proposition~\ref{prop:omega}.
\end{theorem}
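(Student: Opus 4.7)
The plan is to adapt the orientable-case proof of Theorem~\ref{thm:corr}, using Proposition~\ref{prop:quad-en} as the analogue of Lemma~\ref{lemma:quadratic}. Since both $\K(X,\omega)$ and $\mathrm{Pin}^-(\SI)=\mathrm{Quad}(\SI)$ are $H^1(\SI;\Z_2)$-torsors (by Theorem~\ref{thm:Kast-n} and Section~\ref{sec:pin}), bijectivity of $\psi^\omega_D$ will follow automatically once equivariance is established; so the three tasks are to verify equivariance, the dependence on $D$, and the compatibility with the bijections $\varphi_{\omega'\omega}$.

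For equivariance, I would take $[\phi]\in\coho$, write $[K]+[\phi]=[K^\phi]$ as in the proof of Theorem~\ref{thm:Kast-n}, and recall that the action of $[\phi]$ on $\mathrm{Quad}(\SI)$ is $q\mapsto q+2\phi$. Evaluating Proposition~\ref{prop:quad-en} on an oriented simple closed curve $C\subset X^1$, the only term that depends on $K$ is $n^K(C)$, and it changes by $\phi([C])$ modulo $2$ when $K$ is replaced by $K^\phi$; the identity $q_D^{K^\phi,\omega}([C])=q_D^{K,\omega}([C])+2\phi([C])\pmod 4$ then yields equivariance, because simple closed curves generate $\ho$ and the difference of two quadratic enhancements is linear. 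For the dependence on $D$, I would use the rewriting $\omega(D\cap C)-\omega(C\setminus D)\equiv 2\omega(D\cap C)-\omega(C)\pmod 4$ to obtain
\[
q^{K,\omega}_{D'}([C])-q^{K,\omega}_D([C])\equiv 2\bigl(\ell^\omega_{D'}(C)-\ell^\omega_D(C)+\omega(D'\cap C)-\omega(D\cap C)\bigr)\pmod 4,
\]
and verify, vertex by vertex along $C$, that the parenthesized integer has the parity of $[D+D']\cdot[C]$; this identifies the difference of the two quadratic enhancements with $2[D+D']^*$, as required.

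The main obstacle is the naturality in $\omega$. By Proposition~\ref{prop:omega} it suffices to treat the elementary move $\omega\mapsto\omega+\delta v$ for a single vertex $v$, for which $\varphi_v$ reverses $K$ on the edges $e\ni v$ with $\omega(e)=1$. Changing $\omega$ by $\delta v$ simultaneously swaps the labels of $v_{\pm}$ in $\widetilde X$, toggles the $0/1$-type of each edge at $v$, and reverses the orientation of $\widetilde X$ near the lifts of $v$. I would attack this geometrically, inspecting the framing constructions of Figures~\ref{fig:0-edge} and \ref{fig:1-edge} and checking that these three toggles compensate in pairs, so that the framing $s(\omega+\delta v,D,\varphi_v(K))$ coincides with $s(\omega,D,K)$ over $\widetilde X^1$ up to homotopy and thus defines the same $\pin$ structure. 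An independent cross-check is available by substituting directly into the formula of Proposition~\ref{prop:quad-en}: for every simple closed curve $C$, the variations of $n^K(C)$, $\ell^\omega_D(C)$, $\omega(D\cap C)$, and $\omega(C\setminus D)$ must cancel modulo~$4$. Carrying out this book-keeping uniformly across all local edge configurations at $v$ is the laborious part, but it involves no genuinely new geometric input beyond what is already used in Proposition~\ref{prop:quad-en}.
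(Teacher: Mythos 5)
Your proposal is correct and follows essentially the same route as the paper: reduce every claim to evaluating the formula of Proposition~\ref{prop:quad-en} on simple closed curves, read off equivariance from the $n^K(C)$ term, identify the $D$-dependence with $2[D+D']^*$ via the parity of $\ell^\omega_{D'}(C)+\ell^\omega_D(C)+\omega(C\cap(D+D'))$, and reduce the $\omega$-naturality to the elementary move $\omega\mapsto\omega+\delta v$. For that last step the paper carries out exactly the substitution you describe as a ``cross-check'' (a short two-case computation at the vertex, according to whether one of the two edges of $C$ there is a dimer of $D$), so your deferred book-keeping is indeed routine and no new input is needed.
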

\begin{proof}
Given fixed $\omega\in C^1(X;\Z_2)$ and $D\in\D(X^1)$, the construction above associates to each Kasteleyn orientation
$K$ on $(X,\omega)$ a quadratic enhancement $q^{K,\omega}_D\in\mathrm{Quad}(\SI)$ which is determined by the equality in
Proposition~\ref{prop:quad-en}. Therefore, we only need to check our statements for the evaluation of these quadratic
enhancements on the homology class of simple closed curves in $X^1$.

If $K'$ is equivalent to $K$, then $n^{K'}(C)=n^K(C)$ for any oriented simple closed
curve $C$ in $X^1$. Therefore, $q^{K',\omega}_D$ is equal to $q^{K,\omega}_D$ and we have a well-defined map
$\psi^\omega_D\colon\K(X,\omega)\to\mathrm{Quad}(\SI)$. Furthermore, if $K^\phi$ is the orientation obtained from $K$ by
action of a cocycle $\phi\in C^1(X;\Z_2)$ -- recall the proof of Theorem~\ref{thm:Kast-n} --
then $n^{K^\phi}(C)=n^K(C)+\phi(C)$. Hence,  $q^{K^\phi,\omega}_D=q^{K,\omega}_D+2[\phi]$,
so $\psi^\omega_D$ is $\coho$-equivariant.

Now, let $D,D'$ be two dimer configurations on $X^1$, and consider the associated quadratic enhancements $q^{K,\omega}_D$
and $q^{K,\omega}_{D'}$. For an oriented simple closed curve $C$ in $X^1$, Proposition~\ref{prop:quad-en} leads to
the equality
\[
(q^{K,\omega}_D-q^{K,\omega}_{D'})([C])\equiv 2(\ell^\omega_{D'}(C)+\ell^\omega_D(C)+\omega(C\cap(D+D')))\pmod{4},
\]
where $D+D'$ denotes the disjoint simple cycles obtained by adding $D,D'\in C_1(X;\Z_2)$. Using the definition of
$\ell^\omega_D(C)$, one checks that 
\[
\ell^\omega_{D'}(C)+\ell^\omega_D(C)+\omega(C\cap(D+D'))\equiv C\cdot(D+D')\pmod{2}.
\]
Therefore, $q^{K,\omega}_{D'}=q^{K,\omega}_D+2 [D+D']^\ast$, showing the second claim.

Let us finally prove the equality $\psi^{\omega'}_D\circ\varphi_{\omega'\omega}=\psi^\omega_D$.
By construction of $\varphi_{\omega'\omega}$, one only needs to check the following: given a Kasteleyn orientation $K$
on $(X,\omega)$, an oriented simple closed curve $C$ in $X^1$, and a vertex $x\in C$, we have the equality
$q^{K,\omega}_D([C])=q^{K',\omega'}_D([C])$, where $\omega'$ is obtained from $\omega$ by changing the $0$'s and $1$'s
labelling all edges adjacent to $x$, and $K'$ is obtained from $K$ by inverting the orientation of all the edges $e$ adjacent
to $x$ such that $\omega(e)=1$. By Proposition~\ref{prop:quad-en}, the difference
$\Delta=q^{K,\omega}_D([C])-q^{K',\omega'}_D([C])\in\Z_4$ is given by
\[
2(n^K(C)+n^{K'}(C))+2(\ell^\omega_D(C)+\ell^{\omega'}_D(C))+(\omega-\omega')(D\cap C)+(\omega'-\omega)(C\setminus D).
\]
The first term above is equal to $2(\omega(e_1)+\omega(e_2))$, where $e_1,e_2$ denote the two edges of $C$ adjacent to $x$.
If neither $e_1$ nor $e_2$ is a dimer of $D$, then
\begin{eqnarray*}
\Delta &=& 2(\omega(e_1)+\omega(e_2))+2+\omega'(e_1)+\omega'(e_2)-\omega(e_1)-\omega(e_2)\\
	&=&(\omega(e_1)+\omega'(e_1)+1)+(\omega(e_2)+\omega'(e_2)+1).
\end{eqnarray*}
Each of these terms is equal to $2$, so the sum is zero modulo 4.
On the other hand, if one of these edges (say, $e_1$) is a dimer of $D$, then
\begin{eqnarray*}
\Delta &=& 2(\omega(e_1)+\omega(e_2))+\omega(e_1)-\omega'(e_1)+\omega'(e_2)-\omega(e_2)\\
	&\equiv& (\omega(e_2)+\omega'(e_2))-(\omega(e_1)+\omega'(e_1))\pmod{4}.
\end{eqnarray*}
Each of these terms is equal to $1$, so the difference is zero.
\end{proof}

As stated above, the correspondence depends on the choice of $D\in\D(X^1)$. This can be remedied as follows. Let
$\mathcal{B}=\{\alpha_i\}$ denote a family of closed curves in $\SI$, transverse to $X^1$, whose classes
form a basis of $H_1(\SI;\Z_2)$. Given any $D\in\D(X^1)$, let $\varphi_\mathcal{B}^D\in\coho=\Hom(\ho,\Z_2)$ be given by
$\varphi_\mathcal{B}^D([\alpha_i])=\alpha_i\cdot D$. Finally, define $q^{K,\omega}_\mathcal{B}\in\mathrm{Quad}(\SI)$
by the equality $q^{K,\omega}_\mathcal{B}=q^{K,\omega}_D+2\varphi_\mathcal{B}^D$.

\begin{corollary}
\label{cor:n-corr}
Let $X$ be a cellular decomposition of a closed surface $\SI$ such that $X^1$ admits a dimer configuration
$D$, and let $\omega\in C^1(X;\Z_2)$ be a representative of the first Stiefel-Whitney class $w_1$. Then, the map
\[
\psi^\omega_\mathcal{B}\colon\K(X,\omega)\to\mathrm{Quad}(\SI)=\mathrm{Pin}^-(\SI),\quad [K]\mapsto q_\mathcal{B}^{K,\omega}
\]
is an $H^1(\SI;\Z_2)$-equivariant bijection which does not depend on $D$. Furthermore, given another representative
$\omega'$ of $w_1$, we have the equality $\psi^{\omega'}_\mathcal{B}\circ\varphi_{\omega'\omega}=\psi^\omega_\mathcal{B}$.
\end{corollary}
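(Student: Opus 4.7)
The plan is to reduce everything to Theorem~\ref{thm:n-corr} by observing that $\psi^\omega_\mathcal{B}$ differs from $\psi^\omega_D$ by the constant translation $2\varphi^D_\mathcal{B}\in 2\,\coho$. Since the action of $\coho$ on $\mathrm{Quad}(\SI)$ is given by $\phi\ast q=q+2\phi$, this translation is simply the action of $\varphi^D_\mathcal{B}\in\coho$ on the target, and Theorem~\ref{thm:n-corr} already asserts that $\psi^\omega_D$ is an $\coho$-equivariant bijection. Composing an equivariant bijection with a translation by a fixed element of the group still yields an equivariant bijection, so the first claim follows formally.

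For the independence from $D$, I would pick a second dimer configuration $D'\in\D(X^1)$ and compare the two maps $\psi^\omega_D+2\varphi^D_\mathcal{B}$ and $\psi^\omega_{D'}+2\varphi^{D'}_\mathcal{B}$. By Theorem~\ref{thm:n-corr}, $\psi^\omega_{D'}=\psi^\omega_D+2[D+D']^\ast$, so it suffices to verify the cohomological identity
\[
[D+D']^\ast+\varphi^{D'}_\mathcal{B}=\varphi^D_\mathcal{B}
\]
in $\coho$. Both sides are homomorphisms $\ho\to\Z_2$, so it is enough to test them on the basis $\{[\alpha_i]\}$. Evaluating, $[D+D']^\ast([\alpha_i])=\alpha_i\cdot(D+D')$ and $\varphi^{D'}_\mathcal{B}([\alpha_i])=\alpha_i\cdot D'$, whose sum is $\alpha_i\cdot D=\varphi^D_\mathcal{B}([\alpha_i])$, as desired.

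For the compatibility with a change of representative of $w_1$, the key point is that $\varphi^D_\mathcal{B}\in\coho$ depends on $D$ and the basis $\mathcal{B}$ only, not on $\omega$. Thus applying $\psi^{\omega'}_\mathcal{B}\circ\varphi_{\omega'\omega}$ to a class $[K]\in\K(X,\omega)$ gives
\[
\psi^{\omega'}_D\bigl(\varphi_{\omega'\omega}([K])\bigr)+2\varphi^D_\mathcal{B},
\]
which, by the last commutative diagram in Theorem~\ref{thm:n-corr}, equals $\psi^\omega_D([K])+2\varphi^D_\mathcal{B}=\psi^\omega_\mathcal{B}([K])$.

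I do not expect any substantive obstacle: all the geometric content has already been packaged into Theorem~\ref{thm:n-corr}, and what remains is purely formal manipulation in the torsor $\mathrm{Quad}(\SI)$. The only point requiring a little care is the evaluation of $\varphi^D_\mathcal{B}+\varphi^{D'}_\mathcal{B}$ on the basis, and the observation that $\varphi^D_\mathcal{B}$ is independent of $\omega$ so that it passes unchanged through the equivariant bijection $\varphi_{\omega'\omega}$.
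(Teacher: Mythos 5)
Your proposal is correct and follows essentially the same route as the paper, which simply notes that the proof of Corollary~\ref{cor:corr} extends verbatim: translate $\psi^\omega_D$ by the fixed element $\varphi^D_\mathcal{B}$, verify $\varphi^D_\mathcal{B}+\varphi^{D'}_\mathcal{B}=[D+D']^*$ on the basis, and combine with the two transformation statements of Theorem~\ref{thm:n-corr}. Your explicit treatment of the $\omega$-compatibility via the observation that $\varphi^D_\mathcal{B}$ is independent of $\omega$ is exactly the intended (and correct) argument.
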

\begin{proof}
The demonstration of Corollary~\ref{cor:corr} extends verbatim.
\end{proof}

\section{The Pfaffian formula in the non-orientable case}
\label{sec:Pf-n}

We shall now use the previous section to derive the Pfaffian formula in the general case of a graph embedded in
a possibly non-orientable surface.

Let $\G$ be a finite connected graph endowed with an edge weight system $\mathrm{w}$.
If $\G$ does not admit any dimer configuration, then the partition function vanishes.
Let us therefore assume that $\G$ admits a dimer
configuration $D_0$. Enumerate the vertices of $\G$ by $1,2,\dots,2n$ and embed $\G$ in a closed
surface $\SI$ as the 1-skeleton of a cellular decomposition $X$ of $\SI$. Finally, let us fix an $\omega\in C^1(X;\Z_2)$ 
representing the first Stiefel-Whitney class of $\SI$.

Since $\G$ has an even number of vertices, Theorem~\ref{thm:Kast-n} ensures that the set $\K(X,\omega)$ is an
$H^1(\SI;\Z_2)$-torsor. In particular, there exists a Kasteleyn orientation $K$ on $(X,\omega)$.
We define an associated skew-symmetric matrix $A^{K,\omega}$ as follows: its coefficients are given by
\[
a_{jk}=\sum_{e}\e_{jk}^K(e)i^{\omega(e)}\mathrm{w}(e),
\]
where the sum is on all edges $e$ in $\Gamma$ between the vertices $j$ and $k$, and
$\e^K_{jk}(e)=+1$ (resp. $-1$) if $e$ is oriented by $K$ from $j$ to $k$ (resp. from $k$ to $j$).
In short, it is exactly the matrix defined by Kasteleyn, but with all weights of the 1-edges multiplied by $i=\sqrt{-1}$.
Given a dimer configuration $D$, we shall simply denote by $\omega(D)$ the sum $\sum_{e\subset D}\omega(e)$. Recall also
the notation $\e^K(D)=\pm 1$ introduced in (\ref{equ:eps}).

Finally, recall that any quadratic enhancement $q\colon V\to\Z_4$
of a non-singular bilinear form on a $\Z_2$-vector space $V$ has a well-defined
{\em Brown invariant\/} $\beta(q)\in\Z_8$ (see e.g.~\cite{K-T}). It is given by the equality
\[
\exp(i\pi/4)^{\beta(q)}=\frac{1}{\sqrt{|V|}}\sum_{x\in V}i^{q(x)}.
\]

\begin{theorem}
\label{thm:Pf-n}
The partition function of the dimer model on $\Gamma$ is given by the formula
\[
Z=\frac{(-i)^{\omega(D_0)}}{2^{b_1/2}}\sum_{[K]\in\K(X,\omega)}
\exp(i\pi/4)^{\beta(q^{K,\omega}_{D_0})}\e^K(D_0)\Pf(A^{K,\omega}),
\]
where $b_1=\dim H_1(\SI;\Z_2)$ and $\beta(q)$ denotes the Brown invariant of the quadratic enhancement $q$.
Furthermore, each term of this sum depends only on the class of $K$ in $\K(X,\omega)$, but neither on the choice
of a representative of this class, nor on the choice of $D_0$.
\end{theorem}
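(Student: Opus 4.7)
The plan is to follow the blueprint of Theorem~\ref{thm:Pf}, replacing $\Z_2$-valued quadratic forms and the Arf invariant by $\Z_4$-valued quadratic enhancements and the Brown invariant. If $\G$ admits no dimer configuration, both sides vanish, so we may fix $D_0 \in \D(\G)$. By Theorem~\ref{thm:Kast-n}, $\K(X,\omega)$ is an $\coho$-torsor of cardinality $2^{b_1}$, and the definition of $A^{K,\omega}$ gives
\[
\Pf(A^{K,\omega}) = \sum_{D \in \D(\G)} \e^K(D)\,i^{\omega(D)}\,\mathrm{w}(D).
\]
The strategy is to rewrite each summand of this Pfaffian in terms of $q^{K,\omega}_{D_0}([D+D_0])$, group the resulting sum by the class $\alpha = [D+D_0] \in \ho$, sum over $[K]$, and use the bijection $\K(X,\omega) \leftrightarrow \mathrm{Quad}(\SI)$ of Theorem~\ref{thm:n-corr} to turn the expression into a character sum that collapses.

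The heart of the argument---and the step I expect to be most delicate---is the pointwise identity
\[
(-i)^{\omega(D_0)}\,\e^K(D_0)\,\e^K(D)\,i^{\omega(D)} \;=\; i^{-q^{K,\omega}_{D_0}([D+D_0])}
\]
for every $D \in \D(\G)$. I would derive it in two steps. First, decompose $D + D_0$ into its disjoint simple closed components $C_1,\dots,C_m$. Equation~(\ref{equ:cc}) gives $\e^K(D_0)\e^K(D) = \prod_i(-1)^{n^K(C_i)+1}$, while on each $C_i$ the edges alternate between $D$ and $D_0$; in particular the $D_0$-dimer at every vertex of $C_i$ lies along $C_i$, so $\ell^\omega_{D_0}(C_i)=0$, and $C_i \setminus D_0 = D \cap C_i$. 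Proposition~\ref{prop:quad-en} then reduces to $q^{K,\omega}_{D_0}([C_i]) \equiv 2(n^K(C_i)+1) + \omega(D_0 \cap C_i) - \omega(D \cap C_i) \pmod 4$. Because the $C_i$ are pairwise disjoint, $[C_i]\cdot[C_j]=0$ for $i\neq j$, so the quadratic-enhancement axiom gives $q^{K,\omega}_{D_0}([D+D_0]) \equiv \sum_i q^{K,\omega}_{D_0}([C_i]) \pmod 4$; telescoping the $\omega$-sums via $\sum_i\omega(D_0\cap C_i) - \sum_i\omega(D\cap C_i) = \omega(D_0)-\omega(D)$ yields the preliminary identity $i^{q^{K,\omega}_{D_0}([D+D_0])} = \e^K(D_0)\e^K(D)\,i^{\omega(D_0)-\omega(D)}$. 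Second, I would upgrade this to the displayed form using that $q(\alpha) \equiv w_1(\alpha) \pmod 2$ for any quadratic enhancement---a consequence of $q(x+x)=q(0)=0$ together with $x\cdot x = w_1(x)$ on $\ho$---so that after multiplying through by $(-i)^{\omega(D_0)}i^{\omega(D)}$, the extra factor $(-1)^{\omega(D)-\omega(D_0)} = (-1)^{\omega(D+D_0)}$ equals $(-1)^{q(\alpha)}$, and the identity $i^q(-1)^q = i^{3q} = i^{-q}$ closes the calculation.

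Summing over $D$ and grouping by $\alpha$ produces
\[
(-i)^{\omega(D_0)}\,\e^K(D_0)\,\Pf(A^{K,\omega}) \;=\; \sum_{\alpha\in\ho} i^{-q^{K,\omega}_{D_0}(\alpha)}\,Z_\alpha(D_0),
\]
with $Z_\alpha(D_0)=\sum_{[D+D_0]=\alpha}\mathrm{w}(D)$. By Theorem~\ref{thm:n-corr}, as $[K]$ runs through $\K(X,\omega)$ the enhancement $q^{K,\omega}_{D_0}$ runs bijectively through $\mathrm{Quad}(\SI)$, so the right-hand side of the theorem equals
\[
\frac{1}{2^{b_1/2}}\sum_\alpha Z_\alpha(D_0)\Big(\sum_{q\in\mathrm{Quad}(\SI)} \exp(i\pi/4)^{\beta(q)}\,i^{-q(\alpha)}\Big).
\]
Writing $q = q_0 + 2\phi$ with $\phi$ ranging over $\coho$ and unpacking $\exp(i\pi/4)^{\beta(q)} = 2^{-b_1/2}\sum_x i^{q(x)}$ turns the inner sum into a double sum that collapses via the orthogonality relation $\sum_\phi (-1)^{\phi(x+\alpha)} = 2^{b_1}\delta_{x=\alpha}$ to the $\alpha$-independent value $2^{b_1/2}$; the desired formula $Z = \sum_\alpha Z_\alpha(D_0)$ then drops out.

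For the invariance assertions I would verify directly: flipping $K$ around a single vertex simultaneously negates $\e^K(D_0)$ (one edge of $D_0$ reverses) and $\Pf(A^{K,\omega})$ (a diagonal $\pm 1$ conjugation multiplies the Pfaffian by $-1$), while leaving $q^{K,\omega}_{D_0}$ untouched by Theorem~\ref{thm:n-corr}; independence under $D_0 \mapsto D'_0$ reduces to checking $\exp(i\pi/4)^{\beta(q^{K,\omega}_{D_0})-\beta(q^{K,\omega}_{D'_0})} = i^{q^{K,\omega}_{D_0}(\delta)}$ for $\delta = [D_0+D'_0]$, which follows from the substitution $y = x+\delta$ applied to $\sum_x i^{q(x)}(-1)^{\phi(x)}$ with $\phi = \delta^*$, combined with the companion transformation rule for $(-i)^{\omega(D_0)}\e^K(D_0)$ supplied by the same theorem.
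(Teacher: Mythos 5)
Your proposal is correct and follows essentially the same route as the paper: the key step in both is to apply Proposition~\ref{prop:quad-en} to the disjoint components of $D+D_0$ (where $\ell^\omega_{D_0}(C_i)=0$) together with Equation~(\ref{equ:cc}) to obtain $(-i)^{\omega(D_0)}\e^K(D_0)\e^K(D)i^{\omega(D)}=i^{-q^{K,\omega}_{D_0}([D+D_0])}$, and then to group by homology class and collapse the sum over $\K(X,\omega)=\mathrm{Quad}(\SI)$ via the Gauss-sum definition of the Brown invariant. The only (harmless) differences are presentational: you verify the formula directly by orthogonality instead of "inverting the linear system," and you derive the transformation rule $\beta(q+2\delta^*)-\beta(q)=-2q(\delta)$ by the substitution $y=x+\delta$ rather than citing \cite[Lemma 3.7]{K-T}, which is arguably cleaner given that the sign there needs correction.
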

\begin{proof}
Given any Kasteleyn orientation $K$ of $\G$, we have
\begin{align*}
\e^K(D_0)i^{-\omega(D_0)}\Pf(A^{K,\omega})&\overset{(\ref{equ:Pf})}{=}\sum_{D\in\D(\G)}\e^K(D_0)\e^K(D)
\,i^{\omega(D)-\omega(D_0)}\,\mathrm{w}(D)\\
	&\overset{(\ref{equ:cc})}{=}\sum_{D\in\D(\G)}(-1)^{\sum_j (n^K(C_j)+1)}\,i^{\omega(D)-\omega(D_0)}\,\mathrm{w}(D)\\
	&=\sum_{D\in\D(\G)}i^{\,\sum_j (2n^K(C_j)+2+\omega(C_j\setminus D_0)-\omega(C_j\cap D_0))}\,\mathrm{w}(D),
\end{align*}
where the $C_j$'s are the disjoint cycles forming $D+D_0$.
At any vertex of $C_j$, the adjacent dimer of $D_0$ lies on $C_j$, so $\ell^\omega_{D_0}(C_j)=0$.
Since the cycles $C_j$ are disjoint, Proposition~\ref{prop:quad-en} gives
\[
\sum_j(2n^K(C_j)+2+\omega(C_j\setminus D_0)-\omega(C_j\cap D_0))=-\sum_jq_{D_0}^{K,\omega}([C_j])
=-q^{K,\omega}_{D_0}([D+D_0]).
\]
Therefore, every element $[K]$ of $\K(X,\omega)$ induces a linear equation
\[
\e^K(D_0)i^{-\omega(D_0)}\Pf(A^{K,\omega})=\sum_{\alpha\in H_1(\SI;\Z_2)}i^{-q^{K,\omega}_{D_0}(\alpha)}Z_\alpha(D_0),
\]
where $Z_\alpha(D_0)=\sum_{[D+D_0]=\alpha}\mathrm{w}(D)$, the sum being over all $D\in\D(\G)$ such that $[D+D_0]=\alpha$.
One can solve this linear system of $2^{b_1}$ equations with $2^{b_1}$ unknowns as in \cite[Theorem 5]{C-RI}, obtaining
\[
Z_\alpha(D_0)=\frac{1}{2^{b_1}}\sum_{[K]}i^{q^{K,\omega}_{D_0}(\alpha)}\e^{K}(D_0)i^{-\omega(D_0)}\Pf(A^{K,\omega}).
\]
The final formula for $Z$ is now obtained by summing over all $\alpha\in\ho$, and using the definition of the Brown
invariant.

If $K$ and $K'$ are equivalent Kasteleyn orientations, then $q^{K,\omega}_{D_0}=q^{K',\omega}_{D_0}$. In particular,
these two quadratic enhancements have the same Brown invariant. On the other hand, $\e^K(D_0)=(-1)^\mu\e^{K'}(D_0)$
and $\Pf(A^{K,\omega})=(-1)^\mu\Pf(A^{K',\omega})$, where $\mu$ is the number of vertices of
$\Gamma$ around which the orientation was flipped. Therefore, the term corresponding to $[K]$ in the statement of
the theorem does not depend on the choice of the representative in the equivalence class $[K]$.

Let us finally check that the coefficient
$i^{-\omega(D_0)}\exp(i\pi/4)^{\beta(q^{K,\omega}_{D_0})}\e^{K}(D_0)$ does not depend on $D_0$. Let $D$ be another dimer
configuration on $\Gamma$. By \cite[Lemma 3.7]{K-T} (where the sign needs to be corrected), and by the second part of
Theorem~\ref{thm:n-corr},
\[
\beta(q^{K,\omega}_{D_0})-\beta(q^{K,\omega}_{D})=2q^{K,\omega}_{D_0}([D+D_0]).
\]
On the other hand, we know by the beginning of the proof that
\[
\e^K(D_0)\e^K(D)\,i^{\omega(D)-\omega(D_0)}=i^{-q^{K,\omega}_{D_0}([D+D_0])}
\]
This concludes the proof of the theorem.
\end{proof}

Note that for any fixed $D_0$, one can always find an $\omega$ such that $\omega(D_0)=0$. Furthermore, for any
equivalence classe in $\K(X,\omega)$, one can choose its representative $K$ to satisfy $\e^K(D_0)=1$.
This leads to the formula stated in the introduction:
\[
Z=\frac{1}{2^{b_1/2}}\sum_{\eta\in\mathrm{Pin}^{-}(\SI)}\exp(i\pi/4)^{\beta(\eta)}\Pf(A^\eta),
\]
where $A^\eta$ is $A^{K,\omega}$ for any Kasteleyn orientation $K$ on $(X,\omega)$ such that $q^{K,\omega}_{D_0}=\eta$
and $\e^K(D_0)=1$.

\begin{remark}
Of course, the right-hand side of the equality in Theorem~\ref{thm:Pf-n}
does not depend on the choice of $\omega$ representing $w_1$,
as the left-hand side does not. Using the last part of Theorem~\ref{thm:n-corr}, one can make this statement a little more precise.
Given any two choices $\omega,\omega'$, let $\varphi_{\omega'\omega}\colon\K(X,\omega)\to\K(X,\omega')$ be the
equivariant bijection of Proposition~\ref{prop:omega}. Then, the summand corresponding to $[K]$ in the Pfaffian
formula given by $\omega$ is equal to the summand corresponding to $\varphi_{\omega'\omega}([K])$ in the Pfaffian
formula given by $\omega'$.
\end{remark}

\medskip

We finally come to the generalization of Theorem~\ref{thm:Pf'}, that is, the more hands-on version of the Pfaffian
formula. Recall that closed non-orientable surfaces fall into two categories.
\begin{romanlist}
\item{If $\chi(\SI)$ is odd, then $\SI$ is the connected sum of an orientable surface $\SI_g$ of genus $g\ge 0$ with
a projective plane $\R P^2$. A matrix of the modulo 2 intersection form is given by
$\begin{pmatrix}0&1\cr 1&0\end{pmatrix}^{\oplus g}\oplus(1)$.}
\item{If $\chi(\SI)$ is even, then $\SI$ is the connected sum of $\SI_g$ with
a Klein bottle $\K$. The modulo 2 intersection form admits the matrix
$\begin{pmatrix}0&1\cr 1&0\end{pmatrix}^{\oplus g}\oplus\begin{pmatrix}1&0\cr 0&1\end{pmatrix}$.}
\end{romanlist}

Let $\mathcal{B}=\{\alpha_j\}$ be a set of simple closed curves on $\SI$,
transverse to $\G$, whose classes form a basis of $H_1(\SI_g;\Z_2)\subset\ho$.
If $\SI$ has odd (resp. even) Euler characteristic, we also fix one simple closed curve $\beta_1$ (resp. two disjoint simple
closed curves $\beta_1,\beta_2$) on $\SI$, transverse to $\G$, disjoint from the $\alpha_j$'s, whose class
forms a basis of $H_1(\R P^2;\Z_2)$ (resp. $H_1(\K;\Z_2)$) in $\ho$. Define $\omega\in C^1(X;\Z_2)$ by
$\omega(e)=e\cdot\sum_\ell\beta_\ell$. It clearly represents the first Stiefel-Whitney class of $\SI$.

Fix a Kasteleyn orientation $K$ on $(X,\omega)$ so that $n^K(C_\gamma)$ is odd for each
$\gamma\in\{\alpha_j,\beta_\ell\}$, where $C_\gamma$ is an oriented closed curve in $\G$ associated to $\gamma$
as follows. Let $\Gamma'\subset\SI'$ denote the surface graph $\Gamma\subset\SI$ cut along $\sqcup_\ell\beta_\ell$, and endow
$\SI'$ with the counterclockwise orientation. For $\gamma=\alpha_j$, $C_\gamma$ is the oriented 1-cycle in $\G'\subset\SI'$
having $\alpha_j$ to its immediate left, meeting every vertex of $\Gamma'$ adjacent to $\alpha_j$ on this side.
(Moving $\beta_\ell$ if needed, it may be assumed that $C_\gamma$ is disjoint from $\sqcup_\ell\beta_\ell$ so that
$C_\gamma$ is a 1-cycle in $\G'$.) For $\gamma=\beta_\ell$, $C_\gamma$ is the oriented 1-cycle in $\G$ given by one edge
$e$ of $\G$ intersecting $\beta_\ell$ once, together with the oriented curve in $\G'$ joining the two endpoints of
$e$ in $\G'$ and having $\beta_\ell$ to its immediate left in $\SI'$. (If $\chi(\SI)$ is even, it may be assumed that
$C_{\beta_\ell}$ is disjoint from $\beta_{\ell'}$ for $\{\ell,\ell'\}=\{1,2\}$.)

For any $\eps=(\eps_1,\dots,\eps_{2g})\in\Z_2^{2g}$, let $K_{\eps}$ denote the Kasteleyn orientation
obtained from $K$ by inverting the orientation $K$ on the edge $e$ of $\G$ each time $e$ intersects $\alpha_j$
with $\eps_j=1$. Finally, if $\SI$ has even Euler characteristic,
let $K'_{\eps}$ be obtained by inverting $K_\eps$ on $e$ each time the edge $e$ intersects $\beta_1$.

\begin{theorem}
\label{thm:Pf'-n}
Let $\G$ be a graph embedded in a closed non-orientable surface $\SI$ such that $\SI\setminus\G$ consists of
open 2-discs. Then, the partition function of the dimer model on $\Gamma$ is given by
\[
Z=\frac{1}{2^{g}}\Big|\sum_{\eps\in\Z_2^{2g}}(-1)^{\sum_{j<k}\eps_j\eps_k\alpha_j\cdot\alpha_k}
\Big(\mathrm{Re}(\Pf(A^{K_\eps}))+\mathrm{Im}(\Pf(A^{K_\eps}))\Big)\Big|,
\]
if $\SI=\SI_g\#\R P^2$, and by
\[
Z=\frac{1}{2^{g}}\Big|\sum_{\eps\in\Z_2^{2g}}(-1)^{\sum_{j<k}\eps_j\eps_k\alpha_j\cdot\alpha_k}
\Big(\mathrm{Im}(\Pf(A^{K_\eps}))+\mathrm{Re}(\Pf(A^{K'_\eps}))\Big)\Big|,
\]
if $\SI=\SI_g\#\K$.
\end{theorem}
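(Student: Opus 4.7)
The plan is to apply Theorem~\ref{thm:Pf-n} with the specific cocycle $\omega\in C^1(X;\Z_2)$ defined by $\omega(e)=e\cdot\sum_\ell\beta_\ell$, and then to reorganize the resulting sum over $\K(X,\omega)$ into pairs (if $\SI=\SI_g\#\R P^2$) or quadruples (if $\SI=\SI_g\#\K$) of equivalence classes sharing the same ``orientable'' parameter $\eps\in\Z_2^{2g}$. The classes inside each group will yield Pfaffians related by complex conjugation, and the Brown invariants attached to them will combine these conjugates into the real/imaginary combinations announced. First one checks that the chosen $\omega$ represents $w_1$: since $\{[\alpha_j]\}\cup\{[\beta_\ell]\}$ is a basis of $\ho$, this reduces to $w_1([\alpha_j])=0$ and $w_1([\beta_\ell])=1$, both of which are immediate from the construction of the $\beta_\ell$. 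One may then arrange $\omega(D_0)=0$, so the prefactor $(-i)^{\omega(D_0)}$ in Theorem~\ref{thm:Pf-n} is trivial (and in any case is absorbed by the absolute value).

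The central observation is that adding the Poincar\'e dual of $\sum_\ell\beta_\ell$ to a Kasteleyn class flips the $K$-orientation on every edge with $\omega(e)=1$, hence multiplies every imaginary entry of $A^{K,\omega}$ by $-1$; thus $A^{K+(\sum_\ell\beta_\ell)^*,\omega}=\overline{A^{K,\omega}}$. In the $\R P^2$ case this pairs $[K_\eps]$ with $[K_\eps+\beta_1^*]$, whose Pfaffians are complex conjugates. In the Klein-bottle case, after refining $X$ if necessary so that every edge of $\G$ meets $\beta_1\sqcup\beta_2$ transversely in at most one point, the same reasoning pairs $[K_\eps]$ with $[K_\eps+\beta_1^*+\beta_2^*]$ and also $[K'_\eps]=[K_\eps+\beta_1^*]$ with $[K_\eps+\beta_2^*]$.

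Next I compute the Brown-invariant factors. A direct Gauss-sum calculation shows that on $H_1(\R P^2;\Z_2)$, the two enhancements (distinguished by $q([\beta_1])\in\{1,3\}\subset\Z_4$) satisfy $\exp(i\pi/4)^{\beta(q)}=e^{\pm i\pi/4}$, while on $H_1(\K;\Z_2)=\Z_2^2$ with intersection matrix~$I$, the four enhancements indexed by $(q([\beta_1]),q([\beta_2]))\in\{1,3\}^2$ take the values $\{i,1,1,-i\}$. The condition that $n^K(C_{\beta_\ell})$ be odd, imposed on the preferred orientation $K$, forces $q^{K,\omega}_\mathcal{B}([\beta_\ell])=3$ for each $\ell$ via Proposition~\ref{prop:quad-en} applied to $C_{\beta_\ell}$. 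This is exactly the labelling that makes the conjugate-Pfaffian combinations collapse to $\sqrt{2}\,(\mathrm{Re}(\Pf(A^{K_\eps}))+\mathrm{Im}(\Pf(A^{K_\eps})))$ in the $\R P^2$ case, and to $2\,(\mathrm{Im}(\Pf(A^{K_\eps}))+\mathrm{Re}(\Pf(A^{K'_\eps})))$ in the Klein-bottle case. Multiplying by the prefactor $2^{-b_1/2}$, equal to $2^{-g}/\sqrt{2}$ or $2^{-(g+1)}$ respectively, produces precisely the announced coefficient $2^{-g}$.

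The orientable piece is then handled exactly as in Theorem~\ref{thm:Pf'}: because $n^K(C_{\alpha_j})$ is odd, $q^{K,\omega}_\mathcal{B}([\alpha_j])=0$, and quadraticity of $q^{K,\omega}_\mathcal{B}$ reduces the remaining $\eps$-dependence to the sign $(-1)^{\sum_{j<k}\eps_j\eps_k\alpha_j\cdot\alpha_k}$. The main technical obstacle lies in the Klein-bottle case, where one must verify that the four Kasteleyn classes $\{[K_\eps],[K_\eps+\beta_1^*],[K_\eps+\beta_2^*],[K_\eps+\beta_1^*+\beta_2^*]\}$ in each $\eps$-group line up with the four quadratic enhancements $(q([\beta_1]),q([\beta_2]))\in\{1,3\}^2$ in the precise order required so that the coefficients $\{i,1,1,-i\}$ match the correct conjugate pairs of Pfaffians. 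This bookkeeping requires a careful evaluation of $q^{K,\omega}_D$ on each $C_{\beta_\ell}$, using in particular that $\omega(C_{\beta_\ell})=\sum_{\ell'}\beta_\ell\cdot\beta_{\ell'}=1$ from the diagonal intersection form on $H_1(\K;\Z_2)$, and is where the small-edge hypothesis on $\beta_1\sqcup\beta_2$ genuinely enters the proof.
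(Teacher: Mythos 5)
Your proposal is correct and follows essentially the same route as the paper: apply Theorem~\ref{thm:Pf-n} with the cocycle $\omega(e)=e\cdot\sum_\ell\beta_\ell$, index the Kasteleyn classes by $(\eps,\eta)\in\Z_2^{2g}\times\Z_2^{b_1-2g}$, evaluate the quadratic enhancement on the $[\alpha_j]$ and $[\beta_\ell]$ via the cycles $C_\gamma$ and the oddness of $n^K(C_\gamma)$, and collapse the $\eta$-sum using the conjugation relations among the matrices $A^{K_{\eps,\eta}}$. The one genuine (but minor) difference is in the bookkeeping of the coefficients: you factor the Gauss sum over the orthogonal splitting $\ho=H_1(\SI_g;\Z_2)\oplus H_1(\R P^2;\Z_2)$ (or $\oplus\, H_1(\K;\Z_2)$) and evaluate it directly on the non-orientable summand, whereas the paper normalizes by the base class $[K]$ inside the absolute value and uses the Kirby--Taylor difference formula $\beta(q')-\beta(q)=-2q([\Delta])$; both yield the same relative coefficients $(-1)^{\sum_{j<k}\eps_j\eps_k\alpha_j\cdot\alpha_k}\,i^{\sum_\ell\eta_\ell}$, and your values $e^{\pm i\pi/4}$ and $\{i,1,1,-i\}$ are correct. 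Two small points. First, your ``refining $X$'' step is neither available (subdividing edges of $\G$ changes the graph, hence the dimer model) nor needed: the relation $A^{K_{\eps,\eta}+(\beta_1+\beta_2)^*,\omega}=\overline{A^{K_{\eps,\eta},\omega}}$ uses only that both the orientation flip and the exponent of $i$ depend on $e\cdot(\beta_1+\beta_2)$ modulo $2$. Second, the signs $\e^{K_{\eps,\eta}}(D_0)$ appearing in Theorem~\ref{thm:Pf-n} are not a global phase and cannot be discarded; they must be folded into the quadratic form by passing from $q^{K,\omega}_{D_0}$ to $q^{K,\omega}_{\mathcal B}$ exactly as in Theorem~\ref{thm:Pf'} -- your appeal to that argument covers this, but it means your Gauss-sum evaluation has to be carried out for $q^{K,\omega}_{\mathcal B}$ (for which the oddness of $n^K(C_{\beta_\ell})$ indeed forces the value $3$ on $[\beta_\ell]$), not for $q^{K,\omega}_{D_0}$.
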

\begin{proof}
If $\G$ does not admit any dimer configuration, then all Pfaffians vanish by (\ref{equ:Pf}) and
our equalities hold. Therefore, it may be assumed that there exists a $D\in\D(\G)$. In particular,
$\G$ has an even number of vertices, so $\K(X,\omega)$ is an $\coho$-torsor by Theorem~\ref{thm:Kast-n}.
For $\eta=\eta_1\in\Z_2$ (resp. $\eta=(\eta_1,\eta_2)\in\Z^2_2$), let $K_{\eps,\eta}$ be obtained from
$K_\eps$ by inverting the orientation of an edge each time it intersects $\beta_1$ with $\eta_1=1$
(resp. and $\beta_2$ with $\eta_2=1$). The set $\{K_{\eps,\eta}\}_{(\eps,\eta)\in\Z_2^{b_1}}$ clearly
contains one element in each equivalence class of Kasteleyn orientations. Setting $\zeta=\exp(i\pi/4)$
and dropping the superscript $\omega$'s, Theorem~\ref{thm:Pf-n} implies
\begin{align*}
Z&=\frac{(-i)^{\omega(D)}}{2^{b_1/2}}
\sum_{(\eps,\eta)\in\Z_2^{b_1}}\zeta^{\beta(q^{K_{\eps,\eta}}_D)}\,\e^{K_{\eps,\eta}}(D)\,\Pf(A^{K_{\eps,\eta}})\\
&=\frac{1}{2^{b_1/2}}\,\Big|
\sum_{(\eps,\eta)\in\Z_2^{b_1}}\zeta^{\beta(q^{K_{\eps,\eta}}_D)-\beta(q^K_D)}\,\e^{K_{\eps,\eta}}(D)\e^K(D)
\,\Pf(A^{K_{\eps,\eta}})\Big|.\tag{$\star\star$}
\end{align*}
By \cite[Lemma 3.7]{K-T},
\[
\beta(q^{K_{\eps,\eta}}_D)-\beta(q^K_D)=-2q_D^K([\Delta_{\eps,\eta}]),
\]
where $[\Delta_{\eps,\eta}]\in\ho$ is determined by the fact that its Poincar\'e dual $[\Delta_{\eps,\eta}]^*$ satisfies
$q^K_D+2[\Delta_{\eps,\eta}]^*=q^{K_{\eps,\eta}}_D$. By Theorem~\ref{thm:n-corr}, this is equivalent to requiring that
$K+[\Delta_{\eps,\eta}]^*=K_{\eps,\eta}$. The definition of $K_{\eps,\eta}$ implies that
\[
\Delta_{\eps,\eta}=\sum_j\eps_j\alpha_j+\sum_\ell\eta_\ell\beta_\ell
\]
represents the right homology class. Using the equality
\[
\e^{K_{\eps,\eta}}(D)\e^{K}(D)=(-1)^{\Delta_{\eps,\eta}\cdot D}
\]
and the notation of Corollary~\ref{cor:n-corr}, we obtain that the coefficient in $(\star\star)$ corresponding to
$(\eps,\eta)$ is equal to
\[
\zeta^{-2q_D^K([\Delta_{\eps,\eta}])-4\Delta_{\eps,\eta}\cdot D}=(-i)^{q^K_\mathcal{B}([\Delta_{\eps,\eta}])}.
\]
Since $q^K_\mathcal{B}$ is a quadratic enhancement, and given the assumptions on the intersections of the cycles
of $\mathcal B$, it follows
\[
q^K_\mathcal{B}([\Delta_{\eps,\eta}])=
\sum_j\eps_j q^K_\mathcal{B}([\alpha_j])+2\sum_{j<k}\eps_j\eps_k\alpha_j\cdot\alpha_k+
\sum_\ell\eta_\ell q^K_\mathcal{B}([\beta_\ell]).
\]
We have chosen $K$ precisely so that $q^K_\mathcal{B}([\alpha_j])=0$ and
$q^K_\mathcal{B}([\beta_\ell])=-1$. Indeed, for $\gamma=\alpha_j$, $C_\gamma$ is constructed to satisfy
$\ell^\omega_D(C_\gamma)=D\cdot\gamma$ and $\omega(e)=0$ for any edge $e$ of $C_\gamma$. Therefore, the four terms of the sum
\[
q^K_\mathcal{B}([\gamma])=q^K_\mathcal{B}([C_\gamma])=2(n^K(C_\gamma)+1)+2(\ell^\omega_D(C_\gamma)+D\cdot\gamma)
+\omega(D\cap C_\gamma)-\omega(C_\gamma\setminus D)
\]
vanish. Similarly, for $\gamma=\beta_\ell$, $C_\gamma$ is constructed to satisfy
$\ell^\omega_D(C_\gamma)=D\cdot\gamma-\chi_D(e)$, where $\chi_D(e)=1$ is the edge $e$ of $C_\gamma$ crossing $\gamma$ is
occupied by a dimer of $D$, and $0$ otherwise. Since $\omega(D\cap C_\gamma)-\omega(C_\gamma\setminus D)=2\chi_D(e)-1$
and $n^K(C_\gamma)$ is odd, it follows that $q^K_\mathcal{B}([\gamma])=q^K_\mathcal{B}([C_\gamma])$ is equal to $-1$
as claimed.
Hence, we have the equality
\[
Z=\frac{1}{2^{b_1/2}}\Big|\sum_{(\eps,\eta)\in\Z_2^{b_1}}
(-1)^{\sum_{j<k}\eps_j\eps_k\alpha_j\cdot\alpha_k}\,i^{\sum_\ell\eta_\ell}\,\Pf(A^{K_{\eps,\eta}})\Big|.
\]
In the case of odd Euler characteristic, $A^{K_{\eps,1}}$ is nothing but the
complex conjugate of $A^{K_{\eps}}=A^{K_{\eps,0}}$. Therefore,
\begin{align*}
Z&=\frac{1}{2^{b_1/2}}\Big|\sum_{\eps\in\Z_2^{2g}}
(-1)^{\sum_{j<k}\eps_j\eps_k\alpha_j\cdot\alpha_k}\Big(\Pf(A^{K_\eps})+i\,\overline{\Pf(A^{K_\eps})}\Big)\Big|\\
	&=\frac{1}{2^{g}}\left|\frac{1+i}{\sqrt{2}}\right|\cdot
	\Big|\sum_{\eps\in\Z_2^{2g}}(-1)^{\sum_{j<k}\eps_j\eps_k\alpha_j\cdot\alpha_k}
	\Big(\mathrm{Re}(\Pf(A^{K_\eps}))+\mathrm{Im}(\Pf(A^{K_\eps}))\Big)\Big|.\\
\end{align*}
The case of even Euler characteristic is obtained similarly, using the fact that $A^{K_{\eps,1,1}}$
(resp. $A^{K_{\eps,0,1}}$) is the complex conjugate of $A^{K_{\eps,0,0}}=A^{K_\eps}$ (resp. $A^{K_{\eps,1,0}}=A^{K'_\eps}$).
\end{proof}

\begin{example}
\begin{figure}[htbp]
\labellist\small\hair 2.5pt
\pinlabel {$\beta_1$} at 330 430
\pinlabel {$\beta_1$} at 114 430
\pinlabel {$b$} at 465 220
\pinlabel {$b^{-1}$} at -30 220
\pinlabel {$\beta_2$} at 330 10
\pinlabel {$\beta_2$} at 114 10
\pinlabel {$C_{\beta_1}$} at -30 365
\pinlabel {$C_{\beta_2}$} at -30 80
\endlabellist
\centerline{\psfig{file=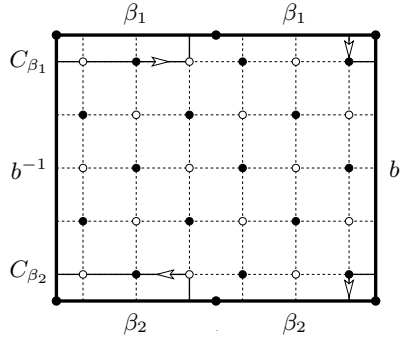,height=4cm}}
\caption{The cycles $C_{\beta_1}$ and $C_{\beta_2}$ on the bipartite graph $\G$.}
\label{fig:Klein2}
\end{figure}
Recall the graph $\G$ embedded in the Klein bottle as illustrated in Figure~\ref{fig:Klein}. With the notation
introduced above, the cycles $\beta_1$ and $\beta_2$ can be chosen to be the sides $a$ and $c$. Furthermore,
we can pick the oriented 1-cycles $C_{\beta_1}$ and $C_{\beta_2}$ as illustrated in Figure~\ref{fig:Klein2}.
If $K$ denotes the Kasteleyn orientation given in Figure~\ref{fig:Klein}, then both $n^K(C_{\beta_1})=1$ and
$n^K(C_{\beta_2})=3$ are odd as required. Therefore, Theorem~\ref{thm:Pf'-n} gives
\[
Z=\left|\mathrm{Im}(\Pf(A^{K}))+\mathrm{Re}(\Pf(A^{K'}))\right|,
\]
where $K'$ is obtained from $K$ by inverting the orientation of the three edges of $\G$ that cross $\beta_1$.
Note that $\G$ is a bipartite graph, as illustrated in Figure~\ref{fig:Klein2}. Therefore,
the Pfaffians can be computed by
\[
\Pf(A^K)=\Pf\begin{pmatrix}0&M\cr -M^T&0\end{pmatrix}=(-1)^{k(k-1)/2}\det(M),
\]
where $M$ is a square matrix of size $k$. In our case, we need to compute the determinant of two square matrices
of size $15$ (with $60$ non-zero coefficients each). Eventually, the number of dimer configurations on the
graph $\G$ is equal to
\[
Z=\left|\mathrm{Im}(\det(M))+\mathrm{Re}(\det(M'))\right|=20072.
\]
As a point of comparison, the $(5\times 6)$-square lattice embedded in the torus (that is, the graph $\Gamma$
above, but with the boundary identification $a^2bc^2b^{-1}$ replaced by
$acbc^{-1}a^{-1}b^{-1}$) has 9922 dimer configurations. Finally, the planar $(5\times 6)$-square lattice
(that is, the graph $\Gamma$ without the edges meeting the boundary of the hexagone) admits 1183 dimer configurations.
\end{example}

\subsection*{Acknowledgements}
We are grateful to Nicolai Reshetikhin for suggesting this problem, and for useful discussions.
We also thank Sebastian Baader, Mathieu Baillif and Giovanni Felder.
Finally, we thankfully acknowledge the hospitality of the University of Aarhus, the Louisiana State University
Baton Rouge, and the Indiana University Bloomington, where parts of this paper were done.


\begin{thebibliography}{}

\bibitem{AMV}
L. \'Alvarez-Gaum\'e, G. Moore, and C. Vafa,
Theta functions, modular invariance, and strings,
Comm. Math. Phys. \textbf{106}, (1986) 1--40.

\bibitem{C-P}
C. H. Chui and P. A. Pearce,
Finitized conformal spectra of the Ising model on the Klein bottle and M\"obius strip.
J. Statist. Phys. \textbf{107} (2002), no. 5-6, 1167--1205. 

\bibitem{C-RI}
D. Cimasoni and N. Reshetikhin, Dimers on surface graphs and spin structures. I,
Comm. Math. Phys. \textbf{275}, (2007) 187--208.

\bibitem{Dol}
N. Dolbilin,  Yu. Zinovyev, A. Mishchenko, M. Shtanko and M.  Shtogrin,
Homological properties of two-dimensional coverings of lattices on surfaces.
(Russian)  Funktsional. Anal. i Prilozhen. \textbf{30}, (1996) 19--33;
translation in  Funct. Anal. Appl. \textbf{30}, (1996) 163--173.

\bibitem{L}
A. Galluccio and M. Loebl, On the theory of Pfaffian orientations.
I. Perfect matchings and permanents.
Electron. J. Combin. \textbf{6}, (1999) Research Paper 6, 18 pp. (electronic).

\bibitem{Jo}
D. Johnson, Spin structures and quadratic forms on surfaces.
J. London Math. Soc. (2) \textbf{22}, (1980) 365--373.

\bibitem{KenOk}
R. Kenyon and A. Okounkov, Planar dimers and Harnack curves.
Duke Math. J. \textbf{131}, (2006) 499--524.

\bibitem{KenOkS}
R. Kenyon, A. Okounkov and S. Sheffield, Dimers and amoebae.
Ann. of Math. (2) \textbf{163}, (2006) 1019--1056.

\bibitem{Kast0}
W. Kasteleyn, The statistics of dimers on a lattice.
Physica, \textbf{27}, (1961) 1209- 1225. 

\bibitem{Kast1}
W. Kasteleyn, Dimer statistics and phase transitions.
J. Mathematical Phys. \textbf{4}, (1963) 287--293.

\bibitem{Kast2}
W. Kasteleyn, \textit{Graph Theory and Theoretical Physics}
(Academic Press, London 1967) pp. 43--110.

\bibitem{K-T}
R. Kirby and L. Taylor, ${\rm Pin}$ structures on low-dimensional manifolds.
Geometry of low-dimensional manifolds, 2 (Durham, 1989),  177--242,
London Math. Soc. Lecture Note Ser., 151, Cambridge Univ. Press, Cambridge, 1990. 

\bibitem{Kup}
G. Kuperberg, An exploration of the permanent-determinant method.
Electron. J. Combin. \textbf{5}, (1998) Research Paper 46, 34 pp. (electronic).

\bibitem{L-W1}
W. T. Lu and F. Y. Wu, Dimer statistics on the M\"obius strip and the Klein bottle.
Phys. Lett. A \textbf{259} (1999), no. 2, 108--114.

\bibitem{L-W2}
W. T. Lu and F. Y. Wu, Close-packed dimers on nonorientable surfaces.
Phys. Lett. A \textbf{293} (2002), no. 5-6, 235--246.

\bibitem{Mer}
C. Mercat, Discrete Riemann surfaces and the Ising model.
Comm. Math. Phys. \textbf{218} (2001), no. 1, 177--216. 

\bibitem{T}
G. Tesler, Matchings in graphs on non-orientable surfaces,
J. Combin. Theory Ser. B  \textbf{78} (2000),  no. 2, 198--231.

\bibitem{Val}
L. G. Valiant, The complexity of computing the permanent.
Theoret. Comput. Sci. \textbf{8} (1979), no. 2, 189--201. 

\end{thebibliography}
\end{document}